\documentclass[11pt,a4paper]{article}
\usepackage[T1]{fontenc}
\usepackage{amsmath,amssymb,amsthm}
\usepackage{mathtools}
\usepackage[a4paper,margin=1.03in,includefoot]{geometry}
\usepackage[colorlinks=true,linkcolor=blue,citecolor=blue,urlcolor=blue]{hyperref}
\usepackage{accsupp}
\hypersetup{pdftitle={Optimal GHZ extraction from MABK violations},
 pdfauthor={Avraham Kreindel and Aryeh Lev Zabokritskiy (Yohananov)},pdflang={en-US}}
\newtheorem{theorem}{Theorem}
\newtheorem{lemma}[theorem]{Lemma}
\newtheorem{proposition}[theorem]{Proposition}
\theoremstyle{remark}
\newtheorem{remark}[theorem]{Remark}
\newcommand{\ket}[1]{\lvert #1\rangle}
\newcommand{\bra}[1]{\langle #1\rvert}
\newcommand{\proj}[1]{\ket{#1}\!\bra{#1}}
\newcommand{\braket}[2]{\langle #1\vert #2\rangle}
\newcommand{\tr}{\operatorname{tr}}
\newcommand{\GHZ}{\mathrm{GHZ}}
\title{Optimal GHZ extraction from MABK violations}
\author{Avraham Kreindel\\
{\small Department of Computer Science}\\
{\small Reichman University, Herzliya, Israel}\\[0.75em]
Aryeh Lev Zabokritskiy (Yohananov)\\
{\small Department of Computer Science}\\
{\small Tel-Hai University of Kiryat Shmona in the Galilee}\\
{\small MIGAL---Galilee Research Institute}\\
{\small ORCID: \href{https://orcid.org/0000-0003-3151-6192}{0000-0003-3151-6192}}}
\date{}
\begin{document}
\maketitle

\begin{abstract}
We determine the least Greenberger--Horne--Zeilinger (GHZ)
extractability compatible with a Mermin--Ardehali--Belinskii--Klyshko
(MABK) Bell score for every \(n\geq3\).
Extractability is the largest squared overlap with a GHZ state
obtainable by separate local quantum channels.
Between the biseparable bound and the quantum maximum, its exact
minimum is the affine interpolation from one half to one.
This proves the bound conjectured for six or more parties.
The result holds for arbitrary binary measurements and density operators on tensor
products of finite- or infinite-dimensional local spaces.
We also determine the exact minimum for qubits, which is strictly
higher at every interior score.
One qutrit and qubits elsewhere attain the unrestricted minimum
with measurements fixed as the score varies and, at interior scores,
realize the smallest product of ranks of the local reduced states
among states attaining that minimum.
For \(n\geq4\), the attaining state with this rank product is unique
at each interior score, up to local isometries and party permutations.
On one qutrit and qubits elsewhere, for \(n\geq4\) and scores a fixed
fraction away from both endpoints, a sufficiently small excess of
extractability above the minimum controls trace-norm distance to
the attaining state of the same score, up to local unitaries.
The dependence is of square-root order, with optimal exponent and
explicit constants depending only on that fraction, uniformly in \(n\).
Equality and stability concern actual extractability; identical
Bell statistics can accompany different extractabilities.
\end{abstract}

\section{Introduction}

A violation of a multipartite Bell inequality can certify a shared
entangled state device-\hspace{0pt}independently.
At maximal violation the certification can be exact; below it, the
question is how much of the target state a given Bell score
guarantees.
We study Greenberger--Horne--Zeilinger (GHZ) states and the
Mermin--Ardehali--Belinskii--Klyshko (MABK) inequalities
\cite{Mermin1990,Ardehali1992,BelinskiiKlyshko1993}, with two binary
measurement choices at each party.
The figure of merit is Kaniewski's extractability~\cite{Kaniewski2016}:
the largest squared overlap with the GHZ state obtainable by separate
local quantum channels.
Its least value among states and measurements with a prescribed
score is the strongest guarantee that the score alone provides.

At maximal violation, MABK self-tests the GHZ state for every
\(n\geq3\) \cite{Kaniewski2017,PanwarPandyaWiesniak2023}.
Bao et al.\ give a four-party self-test of the GHZ state and local
Pauli observables at maximal MABK violation \cite[Theorem~3]{Bao2025}.
For nonmaximal violations, Kaniewski determined the tight affine
bound for the three-party Mermin inequality in finite dimension
\cite{Kaniewski2016}.
Jha, Singh and Pan state the corresponding tight bounds for four
and five parties \cite[Theorem~2]{Jha2026}.
Cao, Zhang, Shi and Zhao obtained, for every number of parties, a
lower bound linear in the relative shortfall from the quantum
maximum \cite{Cao2026}.
The coefficient of that shortfall is twice the optimal one, and
their linear estimate vanishes at the biseparable bound.
They conjectured the optimal affine bound for \(n\geq6\) and
supported it numerically up to \(n=100\).

We settle this conjecture and complete the exact bound for every
\(n\geq3\).
The least extractability rises affinely from one half at the
biseparable bound to one at the quantum maximum, with every point
attained.
The result holds for arbitrary binary positive operator-valued
measures (POVMs) and density operators on tensor products of local
spaces of any finite or infinite dimension.
It extends the known three-party bound to this generality, gives a
uniform proof for four or more parties, and establishes the
conjectured optimum for six or more.
Thus every score above the biseparable bound gives a sharp
quantitative guarantee of GHZ extractability.

The exact bound also makes it possible to ask which local dimensions
are needed to attain it.
For every \(n\geq3\), when each party holds a qubit and may use
arbitrary binary POVMs, we determine the exact minimum to be
\((1+\sqrt{2b^2-1})/2\), where \(b\) is the score divided by its
quantum maximum and \(1/\sqrt2\leq b\leq1\).
The lower endpoint is the biseparable bound.
This extends the tight two-qubit CHSH curve of Bardyn et al.~\cite{Bardyn2009}
to every number of parties greater than two.
The qubit curve lies strictly above the unrestricted affine minimum
at every interior score, so attaining the affine minimum there
requires at least one local dimension above two.

Earlier attaining constructions use local classical registers
\cite{Kaniewski2016,Jha2026}.
We attain the entire affine bound on one qutrit and qubits at all
remaining parties, with one collection of measurements fixed as the
score varies.
The state is a mixture of an \(n\)-party GHZ state and a state with
one deterministic party and a GHZ state on the remaining parties.
At interior scores, the smallest product of local ranks---the ranks
of the local reduced states---among all states attaining the affine
minimum is \(3\cdot2^{n-1}\), achieved by this mixture.
For \(n\geq4\), the mixture is the unique attaining state at each
such score with this rank product, up to local isometries on its
supports and a permutation of the parties.

For \(n\geq4\), the equality analysis also covers arbitrary finite
local dimensions with projective measurements.
After suitable local projective measurements that preserve the Bell
score, every conditional state is either a locally embedded GHZ state
on all parties or a deterministic party together with a locally
embedded GHZ state on the others.
These two types have total weights fixed by the score.
The classification concerns the state after those local
measurements; the preceding uniqueness statement concerns the
original state.
We also obtain quantitative concentration bounds for measurement
configurations near equality, uniformly in the number of parties.

On one qutrit and qubits elsewhere, the uniqueness result has a
quantitative stability counterpart for arbitrary binary POVMs.
For \(n\geq4\) and normalized scores in a fixed interior interval, an excess
of actual extractability of at most a sufficiently small
\(\varepsilon\) above the affine minimum implies trace-norm distance
at most \(C\sqrt\varepsilon\) from the attaining mixture of the same
score, up to local unitaries.
The constant and the admissible threshold for \(\varepsilon\) are
explicit and depend only on the chosen interior interval, and the
exponent \(1/2\) is optimal.
For the middle half of the score interval, the estimate improves on
the trivial trace-norm bound only for excesses below about
\(2\times10^{-10}\); its main quantitative features are the optimal
exponent and uniformity in \(n\).
With the observables fixed to those of the attaining mixture, we
obtain an explicit trace-norm estimate for every \(n\geq3\) over
the full score interval, including both endpoints; it improves on
the trivial bound whenever \(\varepsilon<0.059\).

These equality and stability results describe strategies whose
actual extractability is minimal or nearly minimal at their score.
The distinction from certification by Bell data is substantive:
we exhibit states with identical outcome probabilities for all the
prescribed measurements and different extractabilities.
The affine lower bound is therefore the exact guarantee from the
score, while the structural results identify the strategies that
attain or approach that guarantee.

Section~\ref{sec:model} specifies the model and normalization.
Section~\ref{sec:bound} establishes the optimal extraction bound.
Section~\ref{sec:stability} gives the equality classification after
local projective measurements and quantitative concentration estimates.
Section~\ref{sec:dimension} treats the qubit optimum, minimum local
dimensions, uniqueness, and stability of the original state.
Section~\ref{sec:discussion} concludes with consequences and open questions.

\section{MABK correlations and GHZ extractability}
\label{sec:model}

We now fix the normalization of the MABK operators, which differs
between the works cited above, and make precise the scores and the
extractability used in the Introduction.
The optimization ranges over all states and measurements that realize
a prescribed score.

There are \(n\geq3\) parties with a density operator \(\rho\) on a tensor
product of local Hilbert spaces and two binary POVMs per party. Their
observables
\[
 A_{j,x}\coloneqq E_{j,+\mid x}-E_{j,-\mid x},
 \qquad x\in\{0,1\},
\]
are Hermitian contractions. Define the Bell operators recursively by
\begin{equation}
 W_1\coloneqq2A_{1,0},\qquad
 W_n\coloneqq\frac12W_{n-1}\otimes(A_{n,0}+A_{n,1})
       -\frac12\overline W_{n-1}\otimes(A_{n,0}-A_{n,1}),
 \label{eq:bell}
\end{equation}
where a bar interchanges the two inputs at every preceding party.
This is the MABK convention used in \cite{Cao2026}. Its classical bound
is \(2\): for a deterministic last party, with \(A_{n,0},A_{n,1}\in\{\pm1\}\), the recursion selects either
\(\pm W_{n-1}\) or \(\pm\overline W_{n-1}\).
The score of a strategy is its MABK value. Write
\[
 \beta\coloneqq\tr(\rho W_n),\qquad
 \beta_Q\coloneqq2^{(n+1)/2},\qquad \beta_B\coloneqq2^{n/2}.
\]
The value \(\beta_B\) is the largest MABK value attainable by biseparable
states of arbitrary local dimension \cite[Eq.~(18)]{WernerWolf2000},
\cite[Eq.~(28)]{NagataKoashiImoto2002} (equation numbers of the arXiv
versions).
For \(\beta_B\leq\beta\leq\beta_Q\), the normalized score and its
position between the two endpoints are
\[
 b\coloneqq\frac{\beta}{\beta_Q},\qquad
 t\coloneqq\frac{\beta-\beta_B}{\beta_Q-\beta_B}.
\]
We use the affine functions
\begin{equation}
 L(\beta)\coloneqq\frac{1+t}{2},\qquad
 \mathcal L(b)\coloneqq\left(1+\frac1{\sqrt2}\right)b-\frac1{\sqrt2}
             =L(\beta_Qb).
 \label{eq:affine-notation}
\end{equation}
The value \(\beta_Q\) is the quantum maximum, as also follows from the block formula \eqref{eq:Wblock} below.

Let
\[
 \ket{\GHZ_n}\coloneqq\frac{\ket{0}^{\otimes n}+\ket{1}^{\otimes n}}{\sqrt2},
 \qquad G_n\coloneqq\proj{\GHZ_n}.
\]
The extractability is
\begin{equation}
 \Xi_n(\rho)\coloneqq
 \sup_{\Lambda_1,\ldots,\Lambda_n}
 \tr\!\left[G_n(\Lambda_1\otimes\cdots\otimes\Lambda_n)(\rho)\right],
 \label{eq:extractability}
\end{equation}
where each \(\Lambda_j\) is a local completely positive trace-preserving
map, possibly using local ancillas, to a qubit. The same definition applies to normal states (density operators) on infinite-dimensional tensor products.
The overlap in \eqref{eq:extractability} is squared Uhlmann fidelity.

\begin{theorem}
\label{thm:main}
For every \(n\geq3\) and every strategy in this model whose score
\(\beta\) lies in \([\beta_B,\beta_Q]\),
\begin{equation}
 \Xi_n(\rho)\geq\frac{1+t}{2}
 =\frac{1+\sqrt2}{2^{n/2+1}}\beta-\frac1{\sqrt2}.
 \label{eq:main}
\end{equation}
For every \(\beta\in[\beta_B,\beta_Q]\), the infimum of \(\Xi_n(\rho)\) over
all states and measurements of score \(\beta\) is exactly \((1+t)/2\). This infimum is a minimum: it is attained with local
dimensions \(3,2,\ldots,2\) and with measurements that may be chosen
independently of \(t\).
\end{theorem}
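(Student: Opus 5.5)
We prove the lower bound \eqref{eq:main} through Jordan's lemma, then exhibit a strategy that attains it.

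\emph{Reduction to qubits.} First we reduce to projective measurements by Naimark dilation. Each binary POVM becomes a projective measurement on a local ancilla-extended space. This leaves the score unchanged and cannot increase \(\Xi_n\), since the ancilla can be absorbed into the local channels. For two binary projective observables \(A_{j,0},A_{j,1}\) on one site, Jordan's lemma decomposes the local space into a direct sum of blocks of dimension at most two; for infinite dimension we use the spectral theorem for the pair of projections. On each block the observables act as \(\cos\theta_j Z\pm\sin\theta_j X\), after embedding one-dimensional blocks into qubits. The local channel that first measures the block label and then isometrically maps the block to a qubit commutes with the observables. So we may assume \(\rho\) is block diagonal, \(\rho=\sum_\lambda p_\lambda\rho_\lambda\), with \(\lambda\) the tuple of block labels. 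The score is then \(\beta=\sum_\lambda p_\lambda\beta_\lambda\), and \(\Xi_n(\rho)\geq\sum_\lambda p_\lambda\Xi_n(\rho_\lambda)\), because a local channel can read \(\lambda\) classically and apply the optimal channel for each block.

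\emph{Qubit inequality.} Since \(L\) is affine, it suffices to prove a per-block inequality valid for all scores, including those below \(\beta_B\):
\[\Xi_n(\rho_\lambda)\geq\tfrac{1+\sqrt2}{2^{n/2+1}}\beta_\lambda-\tfrac1{\sqrt2}.\]
Following Kaniewski, we look for an operator inequality of the form
\[K_n\geq s\,W_n+\mu\,\mathbb 1,\]
where \(K_n=\Lambda_{\theta_1}^\dagger\otimes\cdots\otimes\Lambda_{\theta_n}^\dagger(G_n)\) uses explicit angle-dependent dephasing channels \(\Lambda_\theta\). These interpolate between the identity at \(\theta=\pi/4\) and full dephasing toward \(Z\) at \(\theta=0\). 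The block formula \eqref{eq:Wblock} gives \(W_n\) in the GHZ-type basis as an anti-diagonal \(2\times2\) block structure on pairs \(\ket{x},\ket{\bar x}\), with coefficient \(2^{(n+1)/2}\) times a phase built from \(\sum\pm\theta_j\). Both operators then decompose into commuting \(2\times2\) blocks in the basis \(\{\ket{x},\ket{\bar x}\}\), so the inequality reduces to a scalar trigonometric inequality in the angles. We will prove that inequality by reducing to a single effective angle and using convexity, in the spirit of the two-qubit CHSH case.

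\emph{Main obstacle.} We expect the main obstacle to be choosing the dephasing channels so that the scalar inequality holds uniformly in \(n\). A naive product choice loses a factor, which is presumably the factor of two in \cite{Cao2026}. Our first attempt will be to dephase only at the party whose angle deviates most from \(\pi/4\), and to bound the rest via the biseparable bound \(\beta_B\). This is motivated by the extremal mixture, in which exactly one party is deterministic.

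\emph{Attainment.} Give party \(1\) a qutrit with basis \(\ket0,\ket1,\ket2\) and observables acting as the optimal GHZ observables on \(\mathrm{span}\{\ket0,\ket1\}\) and as \(+1\) on \(\ket2\). Give the other parties optimal qubit observables. Take
\[\rho_t=t\,G_n+(1-t)\,\proj2\otimes\sigma,\]
where \(\sigma\) is a rotated GHZ state on parties \(2,\dots,n\) for which \(\pm W_{n-1}\) (or \(\pm\overline W_{n-1}\)) reaches \(2^{n/2}\). Both the state family and the measurements are fixed independently of \(t\). The score is \(t\beta_Q+(1-t)\beta_B\), which is affine in \(t\). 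For the matching upper bound \(\Xi_n(\rho_t)\leq(1+t)/2\), we plan to use a dual certificate. We seek a product-friendly witness \(Y\) with \(\tr[G_n(\Lambda_1\otimes\cdots\otimes\Lambda_n)(X)]\leq\tr(YX)\) for all \(X\), and with \(\tr(Y\rho_t)=(1+t)/2\). Concretely, the second component should have extractability exactly \(1/2\), since it is a product of a pure local state with an \((n-1)\)-party GHZ state, and the cross terms should vanish because \(\ket2\) is orthogonal to the support on which channels can correlate with the GHZ component. The reverse inequality, \(\Xi_n(\rho_t)\geq(1+t)/2\), already follows from the lower bound.
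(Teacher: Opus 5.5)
\textbf{Attainment is fine; the lower bound has a genuine gap.} Your attainment argument is correct and is essentially the paper's. The first branch contributes at most $1$. The branch $\ket{2}\otimes\sigma$ stays separable across the party-$1$ cut under any product channel, so it contributes at most $1/2$. Linearity for each fixed channel then gives $\Xi_n(\rho_t)\leq(1+t)/2$; no dual certificate or cross-term discussion is needed, since $\rho_t$ is a mixture.

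The gap in the lower bound starts with the reduction step $\Xi_n(\rho)\geq\sum_\lambda p_\lambda\Xi_n(\rho_\lambda)$. The label $\lambda=(\lambda_1,\ldots,\lambda_n)$ is joint, but party $j$ sees only $\lambda_j$. The optimal product extraction for $\rho_\lambda$ generally needs a factor at party $j$ that depends on the other parties' labels, so the inequality fails in general. For example, take $n=3$ and let a label at party~$1$ select between $\GHZ_3$ and $\GHZ_3$ rotated by a Hadamard at party~$2$. Each branch has extractability $1$, but unit overlap forces unitary local channels, and no single unitary at party~$2$ works for both branches. Indeed, if the inequality held, the theorem would follow at once from the qubit curve $F$ of Theorem~\ref{thm:qubit-profile}, whose endpoint chord is the affine line.

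What the paper proves instead is a state-independent operator inequality $K\geq\bar s_nW_n+\bar\mu I$ for a channel at party $j$ that depends only on its own angle $\theta_j$. Only such an inequality survives averaging over arbitrarily correlated sectors. Your proposed remedy, dephasing only at the party whose angle deviates most from $\pi/4$, breaks exactly this requirement, because which party deviates most is a function of the whole joint label. Your diagnosis is also off. The product of the Kaniewski--Cao channels \eqref{eq:channel} loses nothing, and the paper proves $T\geq0$ for exactly that product. What earlier work lacked was a proof of positivity for all partitions.

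\textbf{The positivity analysis is missing.} That analysis is the entire content of the theorem for $n\geq6$, and it is absent from your proposal. ``Reduce to a single effective angle and use convexity'' is not an argument, and the blocks are not phases of $\sum\pm\theta_j$. In the $Y$-string basis, the block indexed by a partition $H\sqcup D$ has off-diagonal Bell entry $\beta_Q\zeta(x-\mathrm iy)$, with $x=\sqrt{B_HA_D}$ and $y=\sqrt{A_HB_D}$. Nonnegativity of $2\det T_H$ in \eqref{eq:lambda} needs different estimates for different $h$:
\begin{itemize}
\item for $h=1$, the exact factorization of $f_1$ together with positivity of $g_{1,d}$, which uses $d\geq3$;
\item for $h\geq2$, the bound $p_j\geq b_j^{9/8}$ combined with $x^2+y^2\leq1/4$;
\item for $h=0$, a separately concave lower polynomial checked at every cube vertex, including the new bound \eqref{eq:lastvertex} at $(\kappa,\ldots,\kappa)$.
\end{itemize}
The $h=1$ and vertex estimates fail at $n=3$, which the paper takes from Kaniewski.

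A smaller point concerns infinite dimension. There the two-projection decomposition is a direct integral, not a direct sum of blocks, so ``measuring the block label'' needs care. The paper instead truncates to finite-rank spectral projections of the marginals and passes to the limit.
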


The lower bound holds for every strategy in the model, whereas the
equality is an infimum over states and measurements of the stated score.
The construction in Theorem~\ref{thm:qutrit-attainer} attains the bound with these local dimensions; Theorem~\ref{thm:minimal-dimension} proves their
minimality at interior points in terms of the product of the local
ranks. Appendix~\ref{app:flags} gives the earlier type of attaining
construction, with a classical register and a data qubit at every party
\cite{Kaniewski2016,Jha2026}; Theorem~\ref{thm:qutrit-attainer} replaces these registers by one additional level at a single party.

Constant local outputs \(\ket0\) give \(\Xi_n(\rho)\geq1/2\) for every state.
Thus the lower bound can be written globally as
\[
 \Xi_n(\rho)\geq
 \max\!\left\{\frac12,\frac{1+\sqrt2}{2^{n/2+1}}\beta-\frac1{\sqrt2}\right\}.
\]
In terms of the relative deficiency \(1-b\), as used in robust self-testing, the affine term is \(1-(1+1/\sqrt2)(1-b)\).
At \(n=3\), \eqref{eq:bell} gives \(W_3=\sum_{j,k\in\{0,1\}}(-1)^{jk}A_{1,j}\otimes A_{2,1-k}\otimes A_{3,j\oplus k}\), which is Kaniewski's Mermin operator with the two inputs of party~\(2\) exchanged; this relabeling leaves the state and its extractability unchanged. With quantum maximum \(4\) and the same squared-fidelity convention, his Proposition~2 combined with the reduction of his Appendix~A (cf.\ his Eq.~(12)) \cite{Kaniewski2016} (numbering of arXiv:1604.08176v3) then gives \eqref{eq:main} for finite-dimensional strategies (his Appendix~A sketches this reduction and makes binary POVMs projective with local ancillas); the end of Section~\ref{sec:bound} carries out this reduction and extends the bound to every strategy in the model. The remainder of the lower-bound proof
treats every \(n\geq4\).
The tight \(n=4,5\) bounds are stated by Jha, Singh and Pan
\cite[Theorem~2]{Jha2026}, whose Bell operator is \(2^{(n-3)/2}W_n\) up to
local relabelings of inputs and outcomes, and who indicate that the proof
is similar to that of their Theorem~1, which concerns a different Bell operator.
The new parameter range addressed here is \(n\geq6\), posed explicitly
in \cite[Supplemental C]{Cao2026}; the argument also supplies a uniform
proof of the cases \(n=4,5\).

\section{The optimal extraction bound}
\label{sec:bound}

For \(n\geq4\) we prove the lower bound of Theorem~\ref{thm:main} here;
for \(n=3\) it is Kaniewski's theorem, recalled in
Section~\ref{sec:model}, and the end of this section extends it to
every strategy in the model.
The proof keeps the local extraction channels of
Kaniewski~\cite{Kaniewski2016}, in the form used by Cao et
al.~\cite{Cao2026}.
For given measurements, we seek one product of these channels, each
depending only on its own party's measurements, whose target overlap
is at least \(L(\beta)\) for every state.
Pulling the target projector back through the channels turns that
requirement into a state-independent operator inequality, valid for
all measurement angles and all \(n\geq4\).
Jordan's lemma reduces its verification to joint qubit sectors.
The channel at each party must depend only on that party's sector:
separately optimized extractions on joint sectors need not be
implementable by separate parties, as already observed in the
bipartite setting \cite{Bardyn2009}.
Section~\ref{sec:stability} strengthens the same operator inequality to
a spectral bound and derives the structural results from it.
We first construct the channels and compute the two-dimensional blocks
of the operators in this inequality, then prove the scalar positivity estimate and return to local
extraction.

Throughout, set
\begin{equation}
 s\coloneqq\sqrt2,\qquad k\coloneqq1+s,\qquad a\coloneqq\frac{k}{s},\qquad
 \kappa\coloneqq1-\frac1s.
 \label{eq:constants}
\end{equation}
In particular, the scalar coefficient \(s k\) equals \(2+s\).
The optimal slope and intercept are
\[
 \bar s_n\coloneqq\frac{k}{2^{n/2+1}},\qquad \bar\mu\coloneqq-\frac1s.
\]

\subsection{Canonical qubits and the target}

Binary POVMs first admit a Naimark dilation with one common embedding for
both measurements, after which both observables are reflections, that is, Hermitian unitaries (projective binary observables). In finite dimension, Jordan's lemma then decomposes each local space into an orthogonal sum of Jordan blocks: subspaces of dimension at most two that are invariant under both reflections. We call these blocks local sectors, and a tensor product of one local sector per party a joint sector. Two-dimensional blocks
on which the reflections commute are allowed when they take the form below
with \(\theta\in\{0,\pi/2\}\), so the decomposition need not be unique;
Section~\ref{sec:equality} refines it into irreducible sectors. A one-dimensional block (a scalar sector), on which the two observables take values
\(a_0,a_1\in\{\pm1\}\), is embedded in a canonical qubit as the common
eigenvector on which the canonical pair acts by \(a_0,a_1\): at \(\theta=0\)
if \(a_0=a_1\), and at \(\theta=\pi/2\) if \(a_1=-a_0\). The proofs below act
blockwise, so Theorem~\ref{thm:main} and Theorems~\ref{thm:actual-stability}
and~\ref{thm:necessary-equality} hold for every such decomposition;
Theorem~\ref{thm:refined-equality} uses its refinement into irreducible sectors.
Appendix~\ref{sec:lift} gives the dilation and the Jordan decomposition, with their local implementation. On a canonical qubit block the two observables
have the form
\begin{equation}
 A_{j,0}=\cos\theta_j X+(-1)^{j-1}\sin\theta_j Z,\qquad
 A_{j,1}=\cos\theta_j X-(-1)^{j-1}\sin\theta_j Z,
 \qquad 0\leq\theta_j\leq\frac{\pi}{2}.
 \label{eq:canonical}
\end{equation}
Here \(X,Y,Z\) are Pauli matrices. Consequently,
\begin{align}
 W_n&=W_{n-1}\otimes\cos\theta_n X
       +(-1)^n\overline W_{n-1}\otimes\sin\theta_n Z,
 \label{eq:qubit-recursion}\\
 \overline W_n&=\overline W_{n-1}\otimes\cos\theta_n X
       +(-1)^{n-1}W_{n-1}\otimes\sin\theta_n Z.
 \nonumber
\end{align}

For \(\sigma\in\{+1,-1\}\), write
\[
 \ket{\sigma}_Y\coloneqq\frac{\ket0+\mathrm i\sigma\ket1}{\sqrt2}.
\]
For a sign string \(\sigma=(\sigma_1,\ldots,\sigma_n)\), put \(\ket{\sigma}_Y\coloneqq\bigotimes_j\ket{\sigma_j}_Y\). Define
\[
 \tau_1\coloneqq1,\qquad \tau_j\coloneqq(-1)^j\quad(j\geq2),\qquad
 \delta_n\coloneqq\frac{2+(-1)^n}{4}\pi.
\]
The fixed canonical target is
\begin{equation}
 \ket{\Phi_n}
 \coloneqq\frac{e^{-\mathrm i\delta_n/2}\ket{\tau}_Y
          +e^{\mathrm i\delta_n/2}\ket{-\tau}_Y}{\sqrt2}.
 \label{eq:target}
\end{equation}
When every \(\theta_j=\pi/4\), it is the eigenvector of \(W_n\) with eigenvalue \(\beta_Q\) (see after Lemma~\ref{lem:blocks}). It is locally equivalent to \(\ket{\GHZ_n}\). Indeed,
\[
 V_j\coloneqq\ket0\bra{\tau_j}_Y+\ket1\bra{-\tau_j}_Y
\]
maps the two local \(Y\) eigenvectors to the computational basis.
After applying \(\bigotimes_jV_j\), the additional unitary
\(\operatorname{diag}(e^{\mathrm i\delta_n/2},
e^{-\mathrm i\delta_n/2})\) on the first output qubit gives
\(\ket{\GHZ_n}\).

The target \eqref{eq:target} agrees with the recursive target of
\cite{Cao2026}; Appendix~\ref{app:canonical} verifies the phase convention.

\subsection{Extraction channels and the determinant}
\label{sec:channels}

Use the local channels of \cite{Cao2026}, originating in the analytic
extraction method of \cite{Kaniewski2016}:
\begin{equation}
 \Lambda_\theta(B)\coloneqq\frac{1+g(\theta)}2 B+
                \frac{1-g(\theta)}2\Gamma(\theta)B\Gamma(\theta),
 \label{eq:channel}
\end{equation}
where
\[
 g(\theta)\coloneqq k(\sin\theta+\cos\theta-1),\qquad
 \Gamma(\theta)\coloneqq
 \begin{cases}
 X,&0\leq\theta\leq\pi/4,\\
 Z,&\pi/4<\theta\leq\pi/2.
 \end{cases}
\]
Since \(0\leq g(\theta)\leq1\), each map is completely positive,
trace-preserving and self-adjoint. Here \(g=1\) (no dephasing) at the
ideal angle \(\pi/4\) and \(g=0\) (full dephasing) at the commuting
endpoints \cite{Kaniewski2016}. Write
\[
 K\coloneqq
 \left(\bigotimes_{j=1}^n\Lambda_{\theta_j}\right)(\proj{\Phi_n}),
 \qquad T\coloneqq K-\bar s_n W_n+\frac1s I.
\]
Self-adjointness identifies the expectation of \(K\) with the
extracted target overlap. Therefore the bound of Theorem~\ref{thm:main}
for \(n\geq4\) follows from the operator inequality \(T(\theta)\geq0\) for
all \(\theta\in[0,\pi/2]^n\), which
Sections~\ref{sec:channels}--\ref{sec:positivity} prove and
Theorem~\ref{thm:uniform-gap} strengthens. Equivalently, \(K\geq\bar s_nW_n+\bar\mu I\), whose coefficients are the slope and intercept of \(L\).

Appendix~\ref{app:covariance} proves that reflection of any angle
about \(\pi/4\) conjugates \(K,W_n,T\) by a product unitary preserving
the target. Positivity may therefore be checked on the reduced cube \([0,\pi/4]^n\);
the implemented channel still uses only its own local angle.
For \(0\leq\theta_j\leq\pi/2\) put
\begin{equation}
 v_j\coloneqq1-\frac{\cos\theta_j+\sin\theta_j}{s},\quad
 p_j\coloneqq av_j,\quad q_j\coloneqq1-av_j,\quad
 b_j\coloneqq v_j(2-v_j),\quad A_j\coloneqq(1-v_j)^2.
 \label{eq:variables}
\end{equation}
These formulas are invariant under \(\theta_j\mapsto\pi/2-\theta_j\); until
the end of Section~\ref{sec:positivity} we assume \(0\leq\theta_j\leq\pi/4\),
where the stated ranges and square roots hold.
Thus \(v_j\in[0,\kappa]\), \(p_j\in[0,1/2]\),
\(b_j\in[0,1/2]\), and \(A_j=1-b_j\). Moreover,
\[
 \sqrt{A_j}=\frac{\cos\theta_j+\sin\theta_j}{s},\qquad
 \sqrt{b_j}=\frac{\cos\theta_j-\sin\theta_j}{s}.
\]
The weights \(p_j=(1-g(\theta_j))/2\) and \(q_j=(1+g(\theta_j))/2\) are the
probabilities of the dephasing and identity terms of \eqref{eq:channel}. The weights \(b_j,A_j\) determine the Bell amplitudes \(x,y\) defined below and are distinct from the normalized score \(b\)
and the observables \(A_{j,x}\).
For a set \(J\) of indices define
\[
 \begin{gathered}P_J\coloneqq\prod_{j\in J}p_j,\quad Q_J\coloneqq\prod_{j\in J}q_j,\quad
 B_J\coloneqq\prod_{j\in J}b_j,\quad A_J\coloneqq\prod_{j\in J}A_j,\\
 E_J\coloneqq P_J\sqrt{B_J},\quad C_J\coloneqq Q_J\sqrt{A_J}.\end{gathered}
\]
An empty product is \(1\).

Every Pauli tensor product in \(W_n\) flips all local \(Y\) eigenvalues, and
every product of the local dephasing operators \(\Gamma(\theta_j)\) in the
expansion of \(\bigotimes_j\Lambda_{\theta_j}\) maps \(\ket{\Phi_n}\) into one
of the planes \(\operatorname{span}\{\ket{\sigma}_Y,\ket{-\sigma}_Y\}\). Hence
each such plane is invariant under \(K\) and \(W_n\), and \(T\geq0\) reduces
to trace and determinant conditions on \(2^{n-1}\) two-by-two blocks.
Let \(H\sqcup D=[n]\), \(h\coloneqq|H|\), \(d\coloneqq|D|\), and
\(\sigma_j\coloneqq\tau_j(-1)^{\mathbf1_H(j)}\).
The vectors \(\ket{\sigma}_Y,\ket{-\sigma}_Y\) span a common block
of \(K\) and \(W_n\); write \(K_H\), \(W_H\) and \(T_H\) for the restrictions of \(K\), \(W_n\) and \(T\) to it. Complementary sets describe the same
block with reversed basis. Taking \(1\in D\) gives all
\(2^{n-1}\) orthogonal blocks. Set
\[
 w\coloneqq P_HQ_D,\quad z\coloneqq Q_HP_D,\quad
 x\coloneqq\sqrt{B_HA_D},\quad y\coloneqq\sqrt{A_HB_D},\quad
 \zeta\coloneqq(-1)^h e^{-\mathrm i\delta_n}.
\]

\begin{lemma}
\label{lem:blocks}
In the ordered basis \((\ket{\sigma}_Y,\ket{-\sigma}_Y)\),
\begin{align}
 K_H
 &=\frac12
 \begin{pmatrix}
 w+z&\zeta(w-\mathrm iz)\\
 \overline\zeta(w+\mathrm iz)&w+z
 \end{pmatrix},
 \label{eq:Kblock}\\
 W_H
 &=\beta_Q
 \begin{pmatrix}
 0&\zeta(x-\mathrm iy)\\
 \overline\zeta(x+\mathrm iy)&0
 \end{pmatrix}.
 \label{eq:Wblock}
\end{align}
Consequently \(\tr(T_H)=w+z+s>0\), and
\begin{align}
 \lambda_{H,D}\coloneqq2\det(T_H)
 ={}&1+P_HQ_HP_DQ_D+s(Q_HP_D+P_HQ_D)
 \nonumber\\
 &-k^2(A_HB_D+B_HA_D)
       +s k(C_HE_D+E_HC_D).
 \label{eq:lambda}
\end{align}
\end{lemma}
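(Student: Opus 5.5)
The plan is to compute everything in the local \(Y\) eigenbasis, where each local Pauli acts as a signed swap of \(\ket{\pm}_Y\) up to phases. On a single qubit, \(X\ket{\sigma}_Y=\mathrm i\sigma\ket{-\sigma}_Y\) and \(Z\ket{\sigma}_Y=\ket{-\sigma}_Y\) (with \(\ket{\sigma}_Y=(\ket0+\mathrm i\sigma\ket1)/\sqrt2\)). Hence every Pauli string built from \(X,Z\) maps \(\ket{\sigma}_Y\) to a phase times \(\ket{-\sigma}_Y\). This gives the zero diagonal of \(W_H\) directly.

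For the off-diagonal entry of \(W_H\), I would prove by induction on \(n\) a closed formula for \(\bra{-\sigma}_Y W_n\ket{\sigma}_Y\) and \(\bra{-\sigma}_Y \overline W_n\ket{\sigma}_Y\) using \eqref{eq:qubit-recursion}. Each step multiplies by \(\cos\theta_n\) times the phase \(\mathrm i\sigma_n\) from \(X\), or by \(\pm\sin\theta_n\) from \(Z\), mixing \(W\) and \(\overline W\). Writing the pair of amplitudes as a two-component vector, each party contributes a \(2\times2\) matrix whose entries are \(\cos\theta_j\) and \(\sin\theta_j\) with phases. Its eigencombinations are \(\cos\theta_j\pm\mathrm i\sin\theta_j\)-type factors. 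After the choice of \(\tau_j\), these combine into \((\cos\theta_j+\sin\theta_j)/s=\sqrt{A_j}\) for sites in \(D\) and \((\cos\theta_j-\sin\theta_j)/s=\sqrt{b_j}\) for sites in \(H\), up to the global factor \(2^{(n+1)/2}\) and phases. The entry then takes the form \(\beta_Q\overline\zeta(x+\mathrm iy)\), with \(x\) coming from one eigen-branch and \(y\) from the other. Hermiticity gives the other entry.

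For \(K_H\), I would expand \(\bigotimes_j\Lambda_{\theta_j}\) as a sum over subsets \(S\) of dephased sites, with weight \(P_SQ_{S^c}\). Each \(\Gamma(\theta_j)\in\{X,Z\}\) flips \(\sigma_j\) (with a phase \(\mathrm i\sigma_j\) for \(X\)). So \(\Gamma_S\proj{\Phi_n}\Gamma_S\) is supported on the block of sign string \(\tau\) flipped on \(S\). It therefore contributes to the block \(H\) only for \(S=H\) or \(S=D\), which gives the weights \(w=P_HQ_D\) and \(z=Q_HP_D\). Each contributes a rank-one projector onto \((e^{-\mathrm i\delta_n/2}c_1\ket{\sigma}_Y+e^{\mathrm i\delta_n/2}c_2\ket{-\sigma}_Y)/\sqrt2\), where the phases \(c_i\) come from the \(X\) factors. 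The diagonals are then \((w+z)/2\), and the off-diagonal entries are \(\zeta w/2\) and \(-\mathrm i\zeta z/2\) once the phases are tracked. The \(\Gamma=Z\) case on the reduced cube is irrelevant, because we assume \(\theta_j\leq\pi/4\).

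The consequences are direct. First, \(\tr T_H=w+z+2/s=w+z+s\). Second, \(\det T_H=\bigl(\tfrac{w+z}2+\tfrac1s\bigr)^2-\bigl|\tfrac\zeta2(w-\mathrm iz)-\bar s_n\beta_Q\zeta(x-\mathrm iy)\bigr|^2\), using \(\bar s_n\beta_Q=k/s=a\). Expanding and multiplying by 2 gives \(1+wz+s(w+z)-\tfrac{k^2}{2}\cdot\)(cross terms), and I must match this to \eqref{eq:lambda}. The \(|w|^2,|z|^2\) terms cancel against \((w+z)^2/4\) up to \(wz\), and the \(x^2+y^2\) and \(wx+zy\) terms give the rest. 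Here \(x^2=B_HA_D\) and \(wx=P_HQ_D\sqrt{B_HA_D}=E_HC_D\), which produces exactly the \(C,E\) products with coefficient \(sk\).

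The main obstacle is the phase bookkeeping: verifying that the specific \(\tau_j\) and \(\delta_n\) make the \(W\) and \(K\) off-diagonal entries share the common phase \(\zeta\), with relative factors \(x-\mathrm iy\) versus \(w-\mathrm iz\) aligned and not rotated. I would first check \(n=1,2\) by hand to fix signs, then carry the sign \((-1)^n\) in \eqref{eq:qubit-recursion} through the induction, citing Appendix~\ref{app:canonical} for the phase convention.
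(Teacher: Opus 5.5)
Your proposal is correct. For \(K_H\) and the determinant it is the paper's argument. Only the dephasing patterns \(S=H\) and \(S=D\) reach the block, with weights \(w\) and \(z\). Their projectors contribute upper entries \(\zeta w/2\) and \(-\mathrm i\zeta z/2\), the second via \(e^{2\mathrm i\delta_n}=(-1)^{n+1}\mathrm i\). One slip: after doubling, the coefficient of \(x^2+y^2\) is \(2a^2=k^2\), not the \(k^2/2\) you wrote.

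The Bell block is where your route differs. The paper reduces the computation to one scalar. Conjugation by \(X^{\otimes(n-1)}\) exchanges the inputs, so the entry of \(\overline W_{n-1}\) is \((-1)^{n-1}\) times the complex conjugate of the entry of \(W_{n-1}\). This gives the real-linear recursion \(u_n=-\mathrm i\sigma_n\cos\theta_n u_{n-1}-\sin\theta_n\overline{u_{n-1}}\), and the paper then verifies the closed form \(\beta_Q\zeta(x-\mathrm iy)\) as an inductive ansatz.

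You instead keep the lower entries \((\ell_j,\ell'_j)\) of \(W_j,\overline W_j\) as a vector. Each site then acts complex-linearly by \(M_j=\mathrm i\sigma_j\cos\theta_j I+(-1)^j\sin\theta_j J\), with \(J=\begin{pmatrix}0&1\\-1&0\end{pmatrix}\). Your ``eigen-branches'' rely on a fact you should state explicitly: every \(M_j\) lies in the span of \(I\) and \(J\), so the \(M_j\) commute and share the eigenvectors \((1,\pm\mathrm i)\) for all \(j\). Their eigenvalues are \(\mathrm i(\sigma_j\cos\theta_j\pm(-1)^j\sin\theta_j)\), that is, \(\mathrm i\) times real combinations, not \(\cos\theta_j\pm\mathrm i\sin\theta_j\). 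With \(\sigma_j=(-1)^j(-1)^{\mathbf{1}_H(j)}\), the eigenvalue on \((1,\mathrm i)\) is \(\mathrm i(-1)^j s\sqrt{A_j}\) for \(j\in D\) and \(-\mathrm i(-1)^j s\sqrt{b_j}\) for \(j\in H\). On \((1,-\mathrm i)\), \(A_j\) and \(b_j\) are exchanged.

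This also settles most of the phase bookkeeping you flagged. Both branches gain the same phase at every site \(j\geq2\). The relative factor \(\mathrm i\) between the \(x\)- and \(y\)-terms therefore comes only from the base vector, whose components are \((1+\mathrm i)s\sqrt{A_1}\) and \(-(1-\mathrm i)s\sqrt{b_1}\). The overall phase is \(e^{\mathrm i\pi/4}\mathrm i^{n-1}(-1)^h\prod_j\tau_j=\overline\zeta\), the identity used in Appendix~\ref{app:covariance}.

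The paper's route is shorter once the closed form is known. Yours derives why exactly two complementary product atoms appear, rather than guessing them.
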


\begin{proof}
On the reduced angle cube, the channel is a mixture of tensor products
of \(I\) and \(X\). Exactly two of these products send the target into the indicated block: the one with \(X\) factors precisely on \(H\) and the one with \(X\) factors precisely on \(D\). Their weights in
this mixture are \(w\) and \(z\), respectively. The corresponding
target projectors have upper off-diagonal entries
\((-1)^h e^{-\mathrm i\delta_n}/2\) and
\((-1)^{n-h}e^{\mathrm i\delta_n}/2\).
The identity
\(e^{2\mathrm i\delta_n}=(-1)^{n+1}\mathrm i\)
gives \eqref{eq:Kblock}.

Each Pauli tensor product in \(W_n\) has an \(X\) or \(Z\) factor at
every site, so it flips all the local \(Y\) eigenvalues. Its block has
zero diagonal.
For local eigenvalue \(\sigma\), the upper entries of \(X,Z\) are
\(-\mathrm i\sigma,1\). If \(u_{n-1}\) is the upper entry of
\(W_{n-1}\), then the upper entry of \(\overline W_{n-1}\) is
\((-1)^{n-1}\overline{u_{n-1}}\), where the bar on \(u_{n-1}\) denotes complex conjugation. This follows by conjugating
with \(X^{\otimes(n-1)}\), which interchanges the two observables.
The recursion therefore gives
\[
 u_n=-\mathrm i\sigma_n\cos\theta_n\,u_{n-1}
            -\sin\theta_n\,\overline{u_{n-1}}.
\]
The base is
\(u_1=2(\sin\theta_1-\mathrm i\sigma_1\cos\theta_1)\).
For the induction step, let \(\zeta',x',y'\) be formed as above from
\(H\cap[n-1]\) and \(D\cap[n-1]\), and suppose that
\(u_{n-1}=2^{n/2}\zeta'(x'-\mathrm iy')\). Insert \(\sigma_n=(-1)^n\) for
\(n\in D\) and its negative for \(n\in H\). Since
\(\delta_n-\delta_{n-1}=(-1)^n\pi/2\), we have
\(e^{-\mathrm i\delta_{n-1}}=\mathrm i(-1)^ne^{-\mathrm i\delta_n}\), and the
identity above gives \(\mathrm i(-1)^ne^{2\mathrm i\delta_n}=1\). With
\(\zeta\) formed from \(H\) and \(D\), so that \(h\) counts \(n\) when
\(n\in H\), the recursion becomes
\[
 u_n=\frac{\beta_Q\zeta}{\sqrt2}
 \begin{cases}
 (\cos\theta_n+\sin\theta_n)x'-\mathrm i(\cos\theta_n-\sin\theta_n)y',
   &n\in D,\\
 (\cos\theta_n-\sin\theta_n)x'-\mathrm i(\cos\theta_n+\sin\theta_n)y',
   &n\in H.
 \end{cases}
\]
In both cases the right side is \(\beta_Q\zeta(x-\mathrm iy)\): for
\(n\in D\), \(x=\sqrt{A_n}\,x'\) and \(y=\sqrt{b_n}\,y'\), while for
\(n\in H\), \(x=\sqrt{b_n}\,x'\) and \(y=\sqrt{A_n}\,y'\). With the base,
this proves \eqref{eq:Wblock} by induction.

Since \(\bar s_n\beta_Q=k/s\), the common diagonal of \(T_H\) is
\((w+z)/2+1/s\), and its upper off-diagonal entry is
\[
 \zeta\left[\frac{w-\mathrm iz}{2}-\frac{k}{s}(x-\mathrm iy)\right].
\]
Directly taking its determinant yields
\[
 2\det(T_H)=1+wz+s(w+z)-k^2(x^2+y^2)+s k(wx+zy),
\]
which is \eqref{eq:lambda}.
\end{proof}

The expression \(x^2+y^2\) is the sum of two complementary atoms of
the product distribution with local probabilities \(A_j,b_j\).
Hence \(x^2+y^2\leq1\), and \eqref{eq:Wblock} gives
\(\|W_n\|\leq\beta_Q\).
At \(\theta_j=\pi/4\) for all \(j\), the target block has \(x=1,y=0\);
its positive eigenvector is \(\ket{\Phi_n}\), with eigenvalue \(\beta_Q\).
Formula \eqref{eq:lambda} agrees with the scalar expression in
\cite[Supplemental C, Eq.~S57]{Cao2026}. Like that expression, it is \emph{twice} the block determinant.

\subsection{Uniform positivity}
\label{sec:positivity}

The determinant combines products of channel probabilities and Bell
amplitudes. The following estimates allow a common interpolation
parameter to control these products. Put \(r\coloneqq(a+1)/2\).

\begin{lemma}
\label{lem:products}
For every nonempty set \(J\) of size \(m\),
\begin{equation}
 Q_J\geq A_J,\qquad
 C_J\geq1-r(1-A_J),\qquad
 B_J\leq\frac{1-A_J}{2^m-1}.
 \label{eq:products}
\end{equation}
\end{lemma}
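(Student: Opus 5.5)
The three inequalities are single-party facts on the reduced range \(v_j\in[0,\kappa]\), combined by induction on \(m=|J|\). I will work in the variable \(v_j\), with \(A_j=(1-v_j)^2\), \(q_j=1-av_j\), \(b_j=v_j(2-v_j)\). Recall \(a=k/s=1+1/s\) and \(\kappa=1-1/s\), so that \(a\kappa=1/2\).

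For \(Q_J\geq A_J\), it suffices that \(q_j\geq A_j\) for each \(j\), since all factors are nonnegative and the products multiply. The difference is
\[
q_j-A_j=1-av_j-(1-v_j)^2=v_j(2-a-v_j).
\]
Here \(2-a=1-1/s=\kappa\), so \(q_j-A_j=v_j(\kappa-v_j)\geq0\) on \([0,\kappa]\).

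For the bound on \(C_J\), define \(c_j\coloneqq q_j\sqrt{A_j}=(1-av_j)(1-v_j)\), so that \(C_J=\prod_j c_j\). The first step is the one-variable estimate \(c_j\geq1-r(1-A_j)\). Since \(1-A_j=v_j(2-v_j)\), the difference is
\[
c_j-1+r(1-A_j)=v_j\bigl[(2r-1-a)+v_j(a-r)\bigr].
\]
Because \(2r=a+1\), the constant term in the bracket vanishes. The remainder is \(v_j^2(a-r)=v_j^2(a-1)/2\geq0\).

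Next I induct on \(m\) using the elementary inequality
\[
(1-ru)(1-rv)\geq1-r\bigl(1-(1-u)(1-v)\bigr),\qquad u=1-A_J,\ v=1-A_j.
\]
Expanding, this is equivalent to \(r^2uv\geq r\,uv\), which holds because \(r\geq1\). A problem arises when \(1-r(1-A_J)\) is negative, since then the product inequality cannot be chained directly. In that case the claim is trivial, because \(C_J\geq0\). In the other case both factors \(1-ru\) and \(1-rv\) are nonnegative and the multiplication is legitimate. The one bookkeeping point is that at each step the single-party factor must itself be nonnegative. This holds because \(r(1-A_j)\leq r/2<1\): indeed \(r=1+1/(2s)\approx1.35\), so \(r/2\approx0.68\).

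For \(B_J\leq(1-A_J)/(2^m-1)\), set \(\beta_j=b_j\in[0,1/2]\), so that \(A_j=1-\beta_j\). The claim becomes
\[
(2^m-1)\prod_{j\in J}\beta_j\leq1-\prod_{j\in J}(1-\beta_j).
\]
The left side is increasing and multilinear in the \(\beta_j\). The right side, \(1-\prod_j(1-\beta_j)\), is also multilinear in the \(\beta_j\). Their difference \(F\) is therefore affine in each \(\beta_j\) separately, so its minimum over the box \([0,1/2]^m\) is attained at a vertex. At a vertex with some \(\beta_j=0\), the left side vanishes and the right side is nonnegative. At the all-\(1/2\) vertex, equality holds: both sides equal \(1-2^{-m}\). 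Hence the inequality holds on the whole box.

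The main obstacle I expect is the sign management in the induction for \(C_J\), together with making sure the induction inequality points in the correct direction. Everything else reduces to short polynomial identities.
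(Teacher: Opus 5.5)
Your proof is correct. The first inequality is argued exactly as in the paper, and the other two reach the same conclusions by slightly different combination steps.

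\textbf{The bound on \(C_J\).} Your one-party estimate \(c_j\geq1-r(1-A_j)\) is the same inequality the paper uses, written there as \(1-c_j\leq r(1-A_j)\). The difference is how the factors are combined. The paper telescopes,
\[
 1-\prod_j c_j=\sum_i\Bigl(\prod_{j<i}c_j\Bigr)(1-c_i)\leq\sum_i\Bigl(\prod_{j<i}A_j\Bigr)r(1-A_i)=r(1-A_J),
\]
using only \(0\leq c_j\leq A_j\), which holds since \(a\geq1\) and \(q_j\geq1/2\). This avoids both your product inequality and the sign case split.

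In your route, the negative case needs one more explicit clause. The target bound for \(J\cup\{j\}\) is at most \(1-r(1-A_J)<0\), because \(A_{J\cup\{j\}}\leq A_J\). Meanwhile \(C_{J\cup\{j\}}\geq0\), because every \(c_j>0\). As written, ``\(C_J\geq0\)'' refers to the wrong set.

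\textbf{The bound on \(B_J\).} The paper expands \(1-A_J\) as a sum over the \(2^m-1\) nonempty subsets \(S\subseteq J\) of \(\prod_{i\in S}b_i\prod_{i\in J\setminus S}(1-b_i)\). It then bounds each term below by \(B_J\) using \(1-b_i\geq b_i\). Your multiaffine vertex argument uses the same fact \(b_i\leq1/2\), through the box constraint, and is equally valid. The termwise expansion is shorter and matches the product-atom viewpoint the paper uses elsewhere.
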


The proof of Lemma~\ref{lem:products} is given in
Appendix~\ref{app:products}. Its three estimates couple all products
to the same quantity \(A_J\), so that the determinant can be bounded
by two simpler expressions.

Discard the three nonnegative terms
\(P_HQ_HP_DQ_D\), \(sQ_HP_D\), \(s k C_HE_D\) from
\eqref{eq:lambda}, then apply Lemma~\ref{lem:products} to \(D\).
The result is
\begin{equation}
 \lambda_{H,D}\geq A_Df_h+(1-A_D)g_{h,d},
 \label{eq:interpolation}
\end{equation}
where
\begin{equation}
 f_h\coloneqq1+sP_H-k^2B_H+s k E_H,\qquad
 g_{h,d}\coloneqq1-\frac{k}{2}E_H-\frac{k^2A_H}{2^d-1}.
 \label{eq:fg}
\end{equation}
Here \(s k(r-1)=k/2\). Both interpolation weights in
\eqref{eq:interpolation} are nonnegative.

Because \eqref{eq:lambda} is symmetric in \(H,D\), assume
\(h\leq d\). In the case \(h=1\), an exact factorization identifies the commuting endpoint as the only possible
zero away from the target block.

For \(h=1\), write \(v\coloneqq v_H\) for the variable of the single index in \(H\), and
\(u\coloneqq\sqrt{v/(2-v)}\in[0,1/k]\), so that \(v=2u^2/(1+u^2)\) and
\(\sqrt{v(2-v)}=2u/(1+u^2)\). Since \(k^2=2k+1\),
\begin{align}
 f_1(v)
 &=1+kv-k^2v(2-v)+k^2v\sqrt{v(2-v)}
 \nonumber\\
 &=\frac{(1-ku)^2(1+2ku+u^2)}{(1+u^2)^2}\geq0.
 \label{eq:f1}
\end{align}
For \(n\geq4\), \(d\geq3\). With \(p\coloneqq av\),
\(A_H\leq1-p\) and \(E_H\leq p/s\), so
\[
 g_{1,d}\geq1-\frac{k}{2s}p-\frac{k^2}{7}(1-p)>0.
\]
The right side is affine on \(0\leq p\leq1/2\); its endpoint values are
\((4-2s)/7\) and \((30-15s)/56\), both positive.
Equation \eqref{eq:interpolation} proves positivity in this case.

The remaining cases are verified in Appendix~\ref{app:positivity}.
For \(h\geq2\), one estimate for all these partitions, based on \(p_j\geq b_j^{9/8}\), gives a uniform positive lower bound on the determinant. The case \(h=0\) uses a
separately concave lower polynomial and checks that polynomial at
every type of vertex, including the vertex \((\kappa,\ldots,\kappa)\).
The case \(h\geq3\) was proved earlier in \cite[Supplemental C, Case 1]{Cao2026}, and for \(h=0\) the argument follows \cite[Supplemental C, Case 2]{Cao2026} except for the bound at that vertex; the cases \(h=1,2\) are new for \(n\geq6\), while for \(n=4,5\) all cases are contained in the operator inequality stated in \cite[Theorem~2]{Jha2026}, whose proof is indicated there as similar to that of their Theorem~1 for a different Bell operator.
All partitions are now covered for \(n\geq4\). The hypothesis \(n\geq4\)
enters through \(d\geq3\) in the case \(h=1\) and through the vertex bound
\eqref{eq:lastvertex}; both estimates fail for \(n=3\), which is covered by
\cite{Kaniewski2016}.
Each two-by-two block \(T_H\) has positive trace, by
Lemma~\ref{lem:blocks}, and nonnegative determinant, by the cases above.
Therefore \(T\geq0\), first on the reduced angle cube and then on the
full cube.

\subsection{Return to local extraction}

The operator inequality \(T\geq0\) now applies in every joint sector.
The construction can be assembled locally even when the sector labels
are correlated.
Each party may measure its own sector label, rotate or embed the sector into its canonical qubit, apply \eqref{eq:channel}, and discard the label. This local pinching deletes all matrix elements between different joint sectors and leaves the Bell score unchanged. The constants in \(T\geq0\) are independent of every sector label, so averaging over the joint sectors yields the same affine bound.

On a joint sector in canonical coordinates, self-adjointness of the channels gives
\[
 \bra{\Phi_n}
 \left(\bigotimes_j\Lambda_{\theta_j}\right)(\rho)
 \ket{\Phi_n}
 =\tr(\rho K)
 \geq\bar s_n\tr(\rho W_n)+\bar\mu.
\]
Compose with the fixed output unitaries defined after \eqref{eq:target} to obtain the standard GHZ target.
This proves the finite-dimensional lower bound for \(n\geq4\).
For \(n=3\), \cite[Proposition~2]{Kaniewski2016} is the inequality
\(T\geq0\) on the whole angle cube in the present notation (relative to his parametrization of each party's two observables, the sign \((-1)^{j-1}\) in \eqref{eq:canonical} already exchanges the two inputs of party~\(2\); his operator and target are therefore \(W_3\) and \(\ket{\Phi_3}\)), so the same
assembly gives the bound for finite-dimensional projective strategies, and
constant local outputs supply its globally valid form below \(\beta_B\);
the dilation and the approximation of Appendix~\ref{sec:lift} do not depend
on \(n\) and apply unchanged.

For normal states of arbitrary local dimension, the finite-support
approximation in Appendix~\ref{sec:lift} preserves the score in the
limit and changes any extracted overlap by at most the trace-norm
error. This proves the lower bound of Theorem~\ref{thm:main} on its
full stated domain. The construction using one qutrit in
Section~\ref{sec:dimension} proves attainment; Appendix~\ref{app:flags}
also gives the earlier construction using local classical registers.

\section{Stability and equality}
\label{sec:stability}

The affine bound is attained by combining an ideal strategy with
a strategy having one deterministic party. We now ask what structure
is forced when actual extractability is at or close to this minimum.
The results in this section assume \(n\geq4\) and finite-dimensional projective strategies. Their binary-POVM versions concern the Naimark dilation
of Appendix~\ref{sec:lift}, which is given by an explicit formula in the
observables and enlarges each local space. Theorems~\ref{thm:qubit-profile},
\ref{thm:minimal-equality-state} and~\ref{thm:original-state-stability}
instead hold for binary POVMs on the given spaces. Their proofs keep the state and the local spaces and replace the observables by score-maximizing reflections (Lemma~\ref{lem:projective-replacement}). Their conclusions concern the state and its extractability; only the angle bounds at the end of
Appendix~\ref{app:original-state-stability} refer to that replacement.

\subsection{A spectral lower bound and its consequence for extractability}

For canonical angles \(\theta\in[0,\pi/2]^n\), set
\begin{equation}
 x_i\coloneqq\frac{v_i}{\kappa},\qquad
 R(x)\coloneqq\sum_i x_i-\max_i x_i.
 \label{eq:axis-distance}
\end{equation}
Thus \(R\) is the \(\ell^1\) distance to the union of the coordinate
axes in \([0,1]^n\); these coordinates are unrelated to the block entries \(x,y\) of Lemma~\ref{lem:blocks}. Let \(N_i\) be auxiliary independent Bernoulli
variables with probabilities \(x_i\), and put
\begin{equation}
 \mathsf P_2(x)\coloneqq\Pr\!\left\{\sum_i N_i\geq2\right\},\qquad
 c_*\coloneqq\frac{31-21s}{128k}=\frac{52s-73}{128}>0.
 \label{eq:uniform-constant}
\end{equation}
Like \(R\), the probability \(\mathsf P_2(x)\) vanishes exactly on the
axes, where \(T\ket{\Phi_n}=0\) (Proposition~\ref{prop:kernel}); a
lower bound on \(T\) must therefore vanish there.
The auxiliary probability is a polynomial in the normalized angle parameters \(x_i\). The physical probabilities of the joint sectors enter only when the operator estimate is averaged over sectors.

\begin{theorem}
\label{thm:uniform-gap}
For every \(n\geq4\) and all canonical angles \(\theta\in[0,\pi/2]^n\),
\begin{equation}
 T\geq c_*\mathsf P_2(x)I
 \geq c_*\left(\frac{R(x)}{R(x)+2}\right)^2 I.
 \label{eq:uniform-gap}
\end{equation}
If \(x_{(2)}\) is the second largest coordinate, then also
\(T\geq c_*x_{(2)}^2I\).
The constant is independent of \(n\).
\end{theorem}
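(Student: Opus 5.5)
The plan is to reduce to the two-by-two blocks of Lemma~\ref{lem:blocks} and prove, for each block \(T_H\), that its smaller eigenvalue is at least \(c_*\mathsf P_2(x)\).
By the covariance of Appendix~\ref{app:covariance} we may again restrict to the reduced cube \([0,\pi/4]^n\), so \(x_i=v_i/\kappa\in[0,1]\).

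\textbf{Reduction to a determinant inequality.} For a positive semidefinite \(2\times2\) matrix, \(\lambda_{\min}\geq\det/\tr\). We have \(\tr(T_H)=w+z+s\leq 1+s=k\), since \(w+z\leq1\) as two disjoint atoms of a product distribution. Hence
\[
 \lambda_{\min}(T_H)\geq\frac{\det T_H}{k}=\frac{\lambda_{H,D}}{2k}.
\]
It therefore suffices to show \(\lambda_{H,D}\geq 2k c_*\mathsf P_2(x)=\frac{31-21s}{64}\mathsf P_2(x)\) for every partition \(H\sqcup D\) with \(1\in D\).

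\textbf{Non-target blocks.} For \(h\geq1\) (and by symmetry \(d\geq1\)), the determinant should be bounded below by a constant that does not vanish on the axes, which is stronger than needed since \(\mathsf P_2\leq1\).
\begin{itemize}
 \item For \(h\geq2\), Appendix~\ref{app:positivity} already gives a uniform positive lower bound. I would extract its explicit value and check that it is at least \((31-21s)/64\approx0.0204\).
 \item For \(h=1\), I would use \eqref{eq:interpolation}: \(g_{1,d}\) is bounded below by \(\min\{(4-2s)/7,(30-15s)/56\}\). The factor \(f_1\) vanishes only at \(u=1/k\), i.e.\ \(v_H=\kappa\). There \(A_D\) multiplies \(f_1\), so I would keep the discarded terms \(sQ_HP_D\) and \(skC_HE_D\) to cover the case where \(v_H\) is near \(\kappa\) and \(v_D\) is near \(0\).
\end{itemize}
In that last corner \(\mathsf P_2\) is itself small only if \(x_D\approx0\), so there a bound of the form \(c\,\mathsf P_2\) rather than a constant is all that is required. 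This is compatible with the kernel statement of Proposition~\ref{prop:kernel}, which only concerns the target block.

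\textbf{The target block \(h=0\), the main obstacle.} Here \(\lambda_{\emptyset,[n]}=1+sP-k^2B+skE\) with all products over \([n]\). It vanishes exactly on the axes, and we must show
\[
 \lambda_{\emptyset,[n]}\geq\frac{31-21s}{64}\,\mathsf P_2(x)
\]
uniformly in \(n\). I would proceed in three steps.
\begin{enumerate}
 \item Write \(1-A_{[n]}=1-\prod_j(1-b_j)\) and split it by inclusion--exclusion according to the number of factors \(b_j\) used, separating the single-index contributions from the rest. Since \(b_j\) and \(p_j\) are both comparable to \(x_j\) (\(b_j\sim2v_j\), \(p_j=av_j\)), the single-index contributions are exactly cancelled by the single-index part of the sharp one-variable identity \eqref{eq:f1}.
 \item Bound the remainder below by a multiple of \(\Pr\{\sum_j N_j\geq2\}\), using a Bernoulli coupling with probabilities \(x_j\) and the same concave lower polynomial as in Appendix~\ref{app:positivity}.
 \item Because the lower polynomial is separately concave and \(\mathsf P_2\) is multi-affine in \(x\), the difference between them is separately concave. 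The inequality then needs checking only at the vertices of \([0,1]^n\). By symmetry a vertex with \(m\) coordinates equal to \(1\) gives \(\mathsf P_2=\mathbf 1[m\geq2]\), and the claim reduces to a one-parameter family in \(m\), uniform in \(n\). The worst case should be \(m=2\); I expect the value \(31-21s\) to arise there, with \(m=n\) handled by \eqref{eq:lastvertex}.
\end{enumerate}
The hard point is choosing the concave minorant so that it vanishes to the same order as \(\mathsf P_2\) near the axes.

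\textbf{The two weaker bounds.} Finally, the two weaker inequalities are elementary probability estimates.
\begin{itemize}
 \item \(\mathsf P_2\geq\Pr\{N_{(1)}=N_{(2)}=1\}=x_{(1)}x_{(2)}\geq x_{(2)}^2\).
 \item For \((R/(R+2))^2\), condition on the largest coordinate and use \(\Pr\{\text{at least one of the others}\}\geq 1-e^{-R}\geq R/(R+1)\). Combining this with the largest coordinate \(x_{(1)}\geq x_{(2)}\), together with a crude bound on the remaining mass, yields a bound \(\geq(R/(R+2))^2\). Checking this is routine calculus.
\end{itemize}
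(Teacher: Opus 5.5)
Your architecture matches the paper's: $\lambda_{\min}(T_H)\geq\det T_H/\tr T_H$ with $\tr T_H\leq k$; the constant of Appendix~\ref{app:two-indices} for $h\geq2$; for $h=0$, the separately concave minorant $\mathcal P$ of \eqref{eq:lowerpolynomial} compared at the cube vertices with the multi-affine $\mathsf P_2$; and $\mathsf P_2\geq x_{(1)}x_{(2)}$.

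\textbf{The case $h=1$ is not handled.} That block admits no constant lower bound. At $v_i=\kappa$ with all other $v_j=0$ its determinant vanishes, because the second kernel vector $\Gamma_i\ket{\Phi_n}$ in \eqref{eq:endpoint-kernel} lies in the block $H=\{i\}$. So Proposition~\ref{prop:kernel} is not only about the target block. Retaining $sQ_HP_D$ and $skC_HE_D$ cannot help. These are products over all of $D$, with $d\geq3$, so they vanish once a single $v_j$ with $j\in D$ vanishes. Yet for $x_i=1$ and exactly one small $x_j>0$ in $D$ one has $\mathsf P_2=x_j>0$. What is needed instead:
\begin{itemize}
\item drop $A_Df_1\geq0$ in \eqref{eq:interpolation}, which gives $\lambda_{H,D}\geq\gamma(1-A_D)$ with $\gamma=(30-15s)/56$;
\item use $b_i=\kappa x_i(2-\kappa x_i)\geq x_i/2$ together with independent thinning of successes at rate $1/2$, giving $1-A_D\geq\frac12\bigl(1-\prod_{i\in D}(1-x_i)\bigr)\geq\frac12\mathsf P_2(x)$.
\end{itemize}
The last step holds because $H$ is a singleton, so two successes require one in $D$.

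\textbf{Smaller points for $h=0$.} Your expression $1+sP-k^2B+skE$ is $f_n$, which equals $1$ at the ideal point. The actual determinant is $1+PQ+s(P+Q)-k^2(A+B)+sk(C+E)$. Your step~3 applied to the existing $\mathcal P$ already gives $\mathcal P\geq(8s-179/16)\,\mathsf P_2$, with no new minorant and no use of \eqref{eq:f1}. Also, $31-21s$ does not arise at $m=2$, where the vertex value is $(4-s)/8$. The constant $e_*=(31-21s)/64$ is just a common lower bound for $8s-179/16$, $\gamma/2$ and the $h\geq2$ constant $(39-26s)/64$.

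\textbf{The comparison with $R$ fails as proposed.} Conditioning on the largest coordinate yields only $\mathsf P_2\geq x_{(1)}(1-e^{-R})$. No bound on the remaining mass turns this into $(R/(R+2))^2$ uniformly in $n$. If all $x_j=R/(n-1)$, your bound is at most $R/(n-1)$, which tends to zero as $n$ grows, while $(R/(R+2))^2$ stays fixed. For spread-out mass, almost all of $\mathsf P_2$ comes from pairs of successes avoiding the largest coordinate, and this conditioning discards them. The paper instead places the $x_i$ greedily into two groups, always adding to the lighter one. The final loads differ by at most $\max_ix_i$, so each is at least $R/2$. Independence of the groups then gives $\mathsf P_2\geq(1-e^{-R/2})^2\geq(R/(R+2))^2$, using $e^u\geq1+u$.
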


The proof is given at the end of this subsection.
The lower bound on \(T\) controls extractability because the overlap
of the constructed extraction is bounded above by \(\Xi_n\). Thus a
small excess \(\Xi_n(\rho)-L(\beta)\) forces the measurement angles to concentrate near the axes, where at most one angle differs from \(\pi/4\).

\begin{theorem}
\label{thm:actual-stability}
Let \(n\geq4\), and let a finite-dimensional projective strategy
have \(\beta\in[\beta_B,\beta_Q]\) and
\[
 0\leq\Xi_n(\rho)-L(\beta)\leq\varepsilon.
\]
Locally pinch into Jordan blocks. Write \(\ell\) for the joint
sector label, \(\pi_\ell\) for its probability, and \(x^{(\ell)}\)
for its normalized angle parameters, using the canonical
embedding for one-dimensional sectors. Put
\(R_\ell\coloneqq R(x^{(\ell)})\). Then
\begin{align}
 \sum_\ell\pi_\ell\mathsf P_2(x^{(\ell)})
 &\leq\frac{\varepsilon}{c_*}, \label{eq:average-stability}\\
 \sum_\ell\pi_\ell
       \left(\frac{R_\ell}{R_\ell+2}\right)^2
 &\leq\frac{\varepsilon}{c_*}, \nonumber\\
 \Pr_\pi\{R_\ell\geq r_0\}
 &\leq\min\left\{1,\,
       \frac{\varepsilon}
       {c_*(r_0/(r_0+2))^2}\right\}
       \quad(r_0>0). \label{eq:tail-stability}
\end{align}
The same statements hold for finite-dimensional binary POVMs in
the Naimark dilation of Appendix~\ref{sec:lift}; the angle distribution is then defined by that dilation and a Jordan decomposition of the dilated observables.
\end{theorem}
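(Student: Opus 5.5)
The plan is to combine the sector-wise spectral bound of Theorem~\ref{thm:uniform-gap} with the local assembly used at the end of Section~\ref{sec:bound}. First, locally pinch each party into its Jordan blocks and embed each block in its canonical qubit, with the canonical embedding for scalar sectors. As in Section~\ref{sec:bound}, this pinching is a product of local channels, so it does not increase $\Xi_n$. It also leaves the Bell score unchanged. The pinched state is then $\bigoplus_\ell\pi_\ell\rho_\ell$, where $\rho_\ell$ is a normalized state on joint sector $\ell$ with angles $\theta^{(\ell)}$, and $\beta=\sum_\ell\pi_\ell\tr(\rho_\ell W_n^{(\ell)})$.

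Next, apply in every sector the extraction channels $\Lambda_{\theta_j^{(\ell)}}$, followed by the fixed output unitaries. This is exactly the local construction of Section~\ref{sec:bound}: each party reads its own sector label, applies the channel for its local angle, and discards the label. The construction is a valid product of local channels, so its overlap is at most $\Xi_n(\rho)$. By self-adjointness of the channels, that overlap equals $\sum_\ell\pi_\ell\tr(\rho_\ell K^{(\ell)})$. Writing $K^{(\ell)}=T^{(\ell)}+\bar s_nW_n^{(\ell)}-I/s$ and summing over sectors with $\bar s_n\beta+\bar\mu=L(\beta)$ gives
\[
 \Xi_n(\rho)\geq L(\beta)+\sum_\ell\pi_\ell\tr(\rho_\ell T^{(\ell)}).
\]
Theorem~\ref{thm:uniform-gap} gives $T^{(\ell)}\geq c_*\mathsf P_2(x^{(\ell)})I$, so $\tr(\rho_\ell T^{(\ell)})\geq c_*\mathsf P_2(x^{(\ell)})$. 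The hypothesis $\Xi_n(\rho)-L(\beta)\leq\varepsilon$ then yields \eqref{eq:average-stability}.

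The second inequality follows from the same estimate combined with the second lower bound in \eqref{eq:uniform-gap}, $\mathsf P_2\geq(R/(R+2))^2$. For \eqref{eq:tail-stability}, note that $r\mapsto(r/(r+2))^2$ is increasing. Markov's inequality applied to the second average then gives
\[
 \Pr_\pi\{R_\ell\geq r_0\}\,(r_0/(r_0+2))^2\leq\varepsilon/c_*,
\]
and the minimum with $1$ holds trivially.

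For binary POVMs, we first pass to the Naimark dilation of Appendix~\ref{sec:lift}. The dilated strategy has the same score, and its state is obtained from $\rho$ by local isometries (ancilla embeddings). We then argue that its extractability equals $\Xi_n(\rho)$. It is at least $\Xi_n(\rho)$, because the isometries can be undone by local channels. It is at most $\Xi_n(\rho)$, because appending ancillas is itself part of the allowed local channels. The projective argument above therefore applies verbatim to the dilated observables and any Jordan decomposition of them.

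The main point needing care is that the sector-dependent extraction is implementable locally and is dominated by $\Xi_n$, and that pinching does not change the score. Both facts were already established in Section~\ref{sec:bound}. The blockwise application of Theorem~\ref{thm:uniform-gap} requires no uniformity beyond the fact that $c_*$ is independent of the sector.
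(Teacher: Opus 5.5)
Your proposal is correct and follows the paper's proof essentially step for step. The sector-wise extraction of Section~\ref{sec:bound} gives $\sum_\ell\pi_\ell\tr(\rho_\ell T_\ell)\leq\Xi_n(\rho)-L(\beta)\leq\varepsilon$; Theorem~\ref{thm:uniform-gap} together with Markov's inequality then yields the three bounds, and the POVM case follows because the local Naimark embedding preserves both score and extractability (your separate remark that pinching does not increase $\Xi_n$ is true but unnecessary, since the pinching is already part of the single extraction whose overlap you bound by $\Xi_n(\rho)$).
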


\begin{proof}
Let \(\rho_\ell\) be the normalized conditional state in canonical
coordinates. The explicit local extraction of Section~\ref{sec:bound}
has overlap \(F_{\mathrm{ext}}\) satisfying
\[
 0\leq F_{\mathrm{ext}}-L(\beta)
 =\sum_\ell\pi_\ell\tr(\rho_\ell T_\ell)
 \leq\Xi_n(\rho)-L(\beta)\leq\varepsilon.
\]
This uses one channel per local sector label. Theorem~\ref{thm:uniform-gap} gives
\eqref{eq:average-stability}, and Markov's inequality gives
\eqref{eq:tail-stability}. The physical distribution \(\pi\) can
be arbitrarily correlated.

For a binary POVM, use the fixed local embedding into the Naimark dilation of Appendix~\ref{sec:lift}. Extractability is invariant under this
embedding: channels on the enlarged space restrict by composing
with the embedding, while channels on the original space extend
by any fixed output off its embedded support. The score is also
unchanged. Applying the projective statement proves the claimed
dilated version.
\end{proof}

To express Theorem~\ref{thm:actual-stability} in terms of measurement
angles, consider a strategy consisting of one joint qubit sector; Appendix~\ref{app:original-state-stability} improves the resulting rate for the strategies of Theorem~\ref{thm:original-state-stability}. Whenever \(\varepsilon<c_*\), Theorem~\ref{thm:actual-stability} gives
\begin{equation}
 R\leq\frac{2\sqrt{\varepsilon/c_*}}
              {1-\sqrt{\varepsilon/c_*}}.
 \label{eq:single-block-stability}
\end{equation}
In terms of angles, set
\(\varphi_i\coloneqq|\theta_i-\pi/4|\), and choose \(i_*\) maximizing
\(\varphi_i\). Since \(v_i=1-\cos\varphi_i\) and
\[
 \frac{1-\cos\varphi}{\kappa}
 \geq\frac{16\varphi^2}{\pi^2}
 \qquad(0\leq\varphi\leq\pi/4),
\]
we have
\[
 \sum_{i\ne i_*}\varphi_i^2\leq\frac{\pi^2}{16}R.
\]
The scalar inequality follows from the decrease of
\(\sin u/u\) on \([0,\pi/8]\), with equality at the upper endpoint.
Thus the Euclidean angle distance to the union of axes is
\(O(\varepsilon^{1/4})\) in a single sector, with constants
independent of \(n\). Since \(\sum_{i\ne i_*}\varphi_i^2\leq\pi^2R/16\) in every joint sector, \eqref{eq:tail-stability} with \(r_0=16\delta^2/\pi^2\) also shows that, for \(0<\delta\leq1\), the joint sectors of Theorem~\ref{thm:actual-stability} have Euclidean angle distance at least \(\delta\) from the union of axes with total probability at most \(\pi^4\varepsilon/(16c_*\delta^4)\). Appendix~\ref{sec:axis-formulas} shows that this fourth-root angular scaling
cannot be improved for spectral bounds on \(T\).
At a fixed interior normalized score \(1/s<b<1\), however,
the qubit lower bound in Theorem~\ref{thm:qubit-profile} is strictly
larger than the affine bound.
Thus a single joint qubit sector can have \(\varepsilon\to0\) only if
its normalized score approaches the endpoint set \(\{1/s,1\}\).

\begin{proof}[Proof of Theorem~\ref{thm:uniform-gap}]
We strengthen the determinant bounds from
Section~\ref{sec:bound} and Appendix~\ref{app:positivity}.
Complement symmetry allows \(h\leq d\). Put
\[
 \alpha\coloneqq8s-\frac{179}{16},\qquad
 \gamma\coloneqq\frac{30-15s}{56},\qquad
 e_*\coloneqq\frac{31-21s}{64}.
\]
For \(h=0\), the polynomial \(\mathcal P\) in
\eqref{eq:lowerpolynomial} is separately concave.
Iterating its one-variable chord inequalities bounds it below
by the auxiliary Bernoulli average of its vertex values.
The vertices with zero or one nonzero coordinate have value zero.
With \(2\leq j<n\) nonzero coordinates, their values are
\[
 1-(3+s)2^{-j}+s k\,2^{-3j/2}\geq1-(3+s)2^{-2}+s k\,2^{-3}=\frac{4-s}{8}>\alpha.
\]
The vertex \((\kappa,\ldots,\kappa)\) has value at least \(\alpha>0\), by
\eqref{eq:lastvertex}. Hence
\[
 \lambda_{\varnothing,[n]}\geq\mathcal P
 \geq\alpha\mathsf P_2(x).
\]

For \(h=1\), the proof of \eqref{eq:f1} and
\eqref{eq:interpolation} gives
\(\lambda_{H,D}\geq\gamma(1-A_D)\), where \(\gamma\) is the smaller endpoint
value of the affine lower bound for \(g_{1,d}\) in
Section~\ref{sec:positivity} and the term \(A_Df_1\geq0\) is dropped.
Since \(b_i=\kappa x_i(2-\kappa x_i)\geq x_i/2\),
\[
 1-A_D\geq1-\prod_{i\in D}(1-x_i/2)
 \geq\frac12\left(1-\prod_{i\in D}(1-x_i)\right)
 \geq\frac12\mathsf P_2(x).
\]
For the middle inequality, retain each success in \(D\) independently
with probability \(1/2\): conditional on at least one success, the
probability that some success survives is at least \(1/2\).
The last inequality holds because two successes overall require at
least one in \(D\), as \(H\) is a singleton.

For \(h\geq2\), Appendix~\ref{app:two-indices} gives
\(\lambda_{H,D}\geq e_*\) for every \(d\geq h\).
Since \(\alpha,\gamma/2>e_*\) and \(\mathsf P_2\leq1\),
every block satisfies
\[
 \lambda_{H,D}\geq e_*\mathsf P_2(x).
\]
The quantities \(w,z\) in Lemma~\ref{lem:blocks} are complementary
product atoms, so \(w+z\leq1\) and \(\tr T_H\leq1+s=k\).
For a positive semidefinite two-by-two matrix,
\(\lambda_{\min}\geq\det/\tr\). Consequently
\[
 \lambda_{\min}(T_H)
 \geq\frac{e_*}{2k}\mathsf P_2(x)=c_*\mathsf P_2(x).
\]

It remains to compare \(\mathsf P_2\) and \(R\).
Greedily place each \(x_i\) in the currently lighter of two groups.
Their final load difference is at most \(\max_i x_i\), so both
loads are at least \(R/2\). The probability of a success in a
group of load \(u\) is at least \(1-e^{-u}\).
Independence between the two auxiliary groups therefore gives
\[
 \mathsf P_2(x)\geq(1-e^{-R/2})^2
 \geq\left(\frac{R}{R+2}\right)^2,
\]
where \(e^t\geq1+t\) proves the last inequality.
The event that the two largest coordinates both succeed also
gives \(\mathsf P_2(x)\geq x_{(1)}x_{(2)}\geq x_{(2)}^2\).
Covariance under angle reflections extends the conclusion to all canonical
angles.
\end{proof}

\subsection{Equality after local pinching}
\label{sec:equality}

At exact equality, every joint sector of positive probability satisfies
\(\tr(\rho_\ell T_\ell)=0\). We first determine the possible supports
and then improve the local channels to exclude interior-axis sectors, in which one angle lies strictly between \(\pi/4\) and a commuting endpoint and every other angle is ideal.

\begin{proposition}
\label{prop:kernel}
Let \(n\geq4\) and let \(T\) be the canonical qubit operator.
If at least two \(v_i\) are positive, then \(T\) is positive definite.
If all \(v_i=0\), or exactly one \(v_i\) is nonzero and it lies in
\((0,\kappa)\), then
\[
 \ker T=\operatorname{span}\{\ket{\Phi_n}\}.
\]
If exactly one coordinate satisfies \(v_i=\kappa\) and the others
are zero, then
\begin{equation}
 \ker T=\operatorname{span}
       \{\ket{\Phi_n},\Gamma_i\ket{\Phi_n}\},
 \quad
 \Gamma_i\coloneqq
 \begin{cases}
 X_i,&\theta_i=0,\\
 Z_i,&\theta_i=\pi/2.
 \end{cases}
 \label{eq:endpoint-kernel}
\end{equation}
\end{proposition}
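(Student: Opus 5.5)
The plan is to work block by block, using the decomposition of Lemma~\ref{lem:blocks} into the $2^{n-1}$ invariant planes $\operatorname{span}\{\ket{\sigma}_Y,\ket{-\sigma}_Y\}$ indexed by partitions $H\sqcup D$ with $1\in D$. Each block $T_H$ has positive trace $w+z+s$. Hence $T_H$ is singular exactly when $\lambda_{H,D}=0$, and then its kernel is one-dimensional. The kernel of $T$ is therefore the direct sum of one-dimensional contributions, one from each block with vanishing determinant. By the covariance of Appendix~\ref{app:covariance} we may work on the reduced cube $[0,\pi/4]^n$; reflection about $\pi/4$ conjugates by a product unitary fixing $\ket{\Phi_n}$ and exchanging $X_i$ with $Z_i$ in the dephasing operator, which accounts for the case split in \eqref{eq:endpoint-kernel}.

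\emph{Positive definiteness when two $v_i>0$.} This follows from Theorem~\ref{thm:uniform-gap}. If two coordinates are positive, then $x_{(2)}>0$, so $T\geq c_*x_{(2)}^2I>0$. Conversely, when at most one $v_i$ is nonzero, $\mathsf P_2=0$ and the uniform bound says nothing, so we must locate the zero determinants directly.

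\emph{At most one nonzero coordinate.} Suppose $v_j=0$ for $j\neq i$. Then $b_j=0$ and $p_j=0$ for those $j$, so $B_J=P_J=E_J=0$ whenever $J$ contains such an index, and $A_j=Q_j=1$.
\begin{itemize}
\item The target block $H=\varnothing$ contains $\ket{\Phi_n}$, and $T\ket{\Phi_n}=0$ holds because $K\ket{\Phi_n}=\ket{\Phi_n}$-component identities reduce \eqref{eq:lambda} to $1+s\,p_i-k^2b_i+sk\,p_i\sqrt{b_i}$. This is exactly $f_1(v_i)$ from \eqref{eq:f1} and vanishes identically. I would double-check this by evaluating $T\ket{\Phi_n}$ directly, or simply note that $\det=0$ in that block. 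Since $\tr>0$, the kernel there is exactly one-dimensional, spanned by $\ket{\Phi_n}$, which lies in that block.
\item For every other block, either $H$ or $D$ (after complement symmetry) contains some $j\neq i$ with $v_j=0$. If $i\in D$ with $H\neq\varnothing$, then $B_H=P_H=0$ and $A_H=Q_H=1$. The determinant becomes $1+s\,p_D-k^2B_D+sk\,C_H E_D$ with $B_D=b_i$, $P_D=p_i$, $E_D=p_i\sqrt{b_i}$, $C_H=1$, which is again $f_1(v_i)$. So the block with $H=\{i\}$-complement structure is the dangerous one. More carefully, the blocks reached from the target by a single dephasing $\Gamma_i$ correspond to $H=\{i\}$ (or its complement), and there \eqref{eq:lambda} gives $\lambda=1+p_iq_i+s(q_i+p_i\cdot0)\cdots$.
\end{itemize}
Rather than trusting memory, I would evaluate \eqref{eq:lambda} for $H=\{i\}$ with all other $v_j=0$. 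This is precisely the $h=1$ case, where \eqref{eq:interpolation} degenerates to $A_D=1$ and $\lambda=f_1(v_i)$ plus the discarded nonnegative terms $P_HQ_HP_DQ_D+sQ_HP_D+skC_HE_D$. Here $P_D=0$, and $P_HQ_H=p_iq_i$. So $\lambda=f_1(v_i)+p_iq_i$ exactly, up to the sign conventions for which index is in $H$. Since $f_1\geq0$ and $p_iq_i>0$ for $0<p_i<1/2$, this is positive in the interior. At the commuting endpoint $v_i=\kappa$ ($u=1/k$), both $f_1$ vanishes and the remaining terms must cancel against $-k^2$ contributions. The exact factorization in \eqref{eq:f1} shows $f_1$ vanishes only at $u=1/k$, i.e.\ at $v=\kappa$. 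For $v_i\in(0,\kappa)$ this block is nonsingular; at $v_i=\kappa$ it is singular. Its kernel vector is $\Gamma_i\ket{\Phi_n}$, which lies in that block and is annihilated since at full dephasing $K$ is invariant under $\Gamma_i$-conjugation and $W_n$ anticommutes appropriately. The cleanest check is to verify $T\Gamma_i\ket{\Phi_n}=0$ directly using $K=\Gamma_iK\Gamma_i$ (as $g=0$) and $\Gamma_iW_n\Gamma_i=W_n$ at $\theta_i=0$, where $A_{i,0}=A_{i,1}=X$.

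\emph{Remaining blocks, with $H$ and $D$ both containing ideal indices.} Here $h\geq1$ or $d\geq$ something forces $B_H=B_D=0$ when both sides contain an ideal index. Then $\lambda=1+0+s(\ldots)\geq1>0$ directly from \eqref{eq:lambda}, since the only negative term is $-k^2(A_HB_D+B_HA_D)$.

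The main obstacle is the exact bookkeeping of the singleton-$H$ block: confirming that the discarded terms in \eqref{eq:interpolation} vanish there, so that $\lambda$ equals $f_1(v_i)$ exactly, and that the correct single $i$ indexes it. Appendix~\ref{sec:axis-formulas} presumably contains these axis formulas, and I would rely on them.
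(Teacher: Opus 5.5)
Your strategy is sound, but two of the determinant evaluations on the axis are wrong as written, and one of them, taken literally, contradicts the endpoint claim.

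\textbf{Target block.} Take \(H=\varnothing\) with only \(v_i\) nonzero. Then \(D=[n]\) contains ideal indices, so \(P_D=B_D=E_D=0\), \(Q_D=q_i\), \(A_D=A_i\) and \(C_D=q_i\sqrt{A_i}\). Hence \eqref{eq:lambda} reduces to \(1+sq_i-k^2A_i+sk\,q_i\sqrt{A_i}\), not to \(f_1(v_i)\). Using \(q_i=1-av_i\), \(\sqrt{A_i}=1-v_i\), \(sa=k\) and \(sk-k^2=-k\), this equals \(1+s-k=0\) for every \(v_i\). By contrast, \(f_1\) vanishes only at \(v=\kappa\), by \eqref{eq:f1}. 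So your sentence ``this is exactly \(f_1(v_i)\) and vanishes identically'' is false on both counts.

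Moreover, \(\det T_\varnothing=0\) only gives a one-dimensional kernel. To identify it you need \(T\ket{\Phi_n}=0\). This follows, for instance, from \(\bra{\Phi_n}T\ket{\Phi_n}=q_i-a(1-v_i)+1/s=0\) together with \(T\geq0\), or from the explicit block as in the paper.

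\textbf{Singleton block.} In the block separating \(\{i\}\) from the rest, the term \(P_HQ_HP_DQ_D\) contains the factor \(P_D=0\). So all three discarded terms vanish and \(\lambda_{H,D}=f_1(v_i)\) exactly, not \(f_1(v_i)+p_iq_i\). With your extra term, the determinant at \(v_i=\kappa\) would be \(1/4\) and the block would be invertible. That contradicts your own correct observation that \(X_i\) commutes with \(K\) and \(W_n\) at \(\theta_i=0\). With the correct value, \eqref{eq:f1} gives \(\lambda>0\) on \([0,\kappa)\) and \(\lambda=0\) at \(\kappa\), as you need.

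\textbf{Comparison with the paper.} Once these are fixed, your argument is a valid and more elementary treatment of the axis than the paper's. The paper handles the \(h=1\) blocks by the interpolation \eqref{eq:interpolation}, which uses \(d\geq3\). It handles the \(h\geq2\) blocks by the uniform bound of Appendix~\ref{app:two-indices}, and reads the endpoint kernel vector off \eqref{eq:Kblock}--\eqref{eq:Wblock}.

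You instead observe that on an axis every product over a set containing an ideal index vanishes, so \eqref{eq:lambda} can be computed exactly. It is \(0\) on the target block, \(f_1(v_i)\) on the block separating \(i\), and \(1\) on every other block. The last case holds because both parts of any other partition contain an ideal index; you should state explicitly that these three cases exhaust all blocks. The endpoint vector then comes from the symmetry \(X_iTX_i=T\), since \(g(0)=0\) and \(A_{i,0}=A_{i,1}=X\).

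This makes the axis analysis self-contained and independent of \(n\geq4\). That hypothesis enters only through Theorem~\ref{thm:uniform-gap} in the off-axis claim, which you and the paper use in the same way. The paper's route, for its part, reuses estimates it needs anyway for the spectral gap.
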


\begin{proof}
Off the axes, \(\mathsf P_2>0\), so the first claim follows from
Theorem~\ref{thm:uniform-gap}. Work on a reduced axis, with only
\(v_i=v\) potentially nonzero. In the \(h=0\) block, the entries of Lemma~\ref{lem:blocks} are
\(w=q=1-av,z=0,x=1-v,y=0\). Its diagonal is \(q/2+1/s\), and its
upper off-diagonal is
\[
 e^{-\mathrm i\delta_n}
       \left(\frac q2-a(1-v)\right)
 =-e^{-\mathrm i\delta_n}\left(\frac q2+\frac1s\right).
\]
It has exactly the one-dimensional kernel
\(\operatorname{span}\{\ket{\Phi_n}\}\).
For \(h=1\), interpolation gives a strictly positive determinant
unless \(A_D=1\) and \(f_1(v_H)=0\).
Equation \eqref{eq:f1} shows that this occurs exactly for
\(H=\{i\}\), \(v_i=\kappa\), and all \(v_j=0\) in \(D\).
Its kernel vector is \(X_i\ket{\Phi_n}\), by
\eqref{eq:Kblock}--\eqref{eq:Wblock}. All blocks with \(h\geq2\)
have a strictly positive determinant by the bound of Appendix~\ref{app:two-indices} used
above. The covariance under angle reflections carries the endpoint vector to
\(Z_i\ket{\Phi_n}\), up to phase, at \(\theta_i=\pi/2\), while
preserving \(\ket{\Phi_n}\). This proves the classification.
\end{proof}

For any state, \(\tr(\rho T)=0\) is equivalent to its support being
contained in \(\ker T\). This follows from \(T\geq0\) and
\(\tr(T^{1/2}\rho T^{1/2})=0\).

\begin{theorem}
\label{thm:necessary-equality}
Let \(n\geq4\) and consider a finite-dimensional projective strategy
of score \(\beta\in[\beta_B,\beta_Q]\) satisfying
\(\Xi_n(\rho)=L(\beta)\).
After local pinching into Jordan blocks and canonical embedding, every joint
sector of positive probability has one of the following forms:
\begin{enumerate}
 \item All angles are ideal, and its state is \(\proj{\Phi_n}\).
       Its Bell score is \(\beta_Q\).
 \item Exactly one angle is a commuting endpoint \(0\) or \(\pi/2\),
       every other angle is ideal, and its state is supported on
       the two-dimensional space in \eqref{eq:endpoint-kernel}.
       Its Bell score is \(\beta_B\).
\end{enumerate}
Their total probabilities are respectively
\(t=(\beta-\beta_B)/(\beta_Q-\beta_B)\) and \(1-t\).
The conclusion describes the conditional sectors of the locally
pinched state.
\end{theorem}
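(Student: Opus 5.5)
The plan starts from the chain of inequalities in the proof of Theorem~\ref{thm:actual-stability}. Pinch locally into Jordan blocks, embed each sector canonically, and let \(F_{\mathrm{ext}}\) be the overlap of the explicit local extraction of Section~\ref{sec:bound}. Then
\[
 0\leq\sum_\ell\pi_\ell\tr(\rho_\ell T_\ell)=F_{\mathrm{ext}}-L(\beta)\leq\Xi_n(\rho)-L(\beta)=0 .
\]
Each term is nonnegative because \(T_\ell\geq0\), so every sector with \(\pi_\ell>0\) satisfies \(\tr(\rho_\ell T_\ell)=0\). By the remark after Proposition~\ref{prop:kernel}, this means \(\operatorname{supp}\rho_\ell\subseteq\ker T_\ell\). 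Proposition~\ref{prop:kernel} then leaves three possibilities for the angles of such a sector:
\begin{enumerate}
 \item all ideal, with kernel spanned by \(\ket{\Phi_n}\);
 \item one angle in the interior of an axis, again with kernel spanned by \(\ket{\Phi_n}\);
 \item one angle at a commuting endpoint, with the two-dimensional kernel \eqref{eq:endpoint-kernel}.
\end{enumerate}
In the first case the state is \(\proj{\Phi_n}\) with score \(\beta_Q\), which is the eigenvalue noted after Lemma~\ref{lem:blocks}.

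For the endpoint case, compute the score on \(\operatorname{span}\{\ket{\Phi_n},\Gamma_i\ket{\Phi_n}\}\) directly from Lemma~\ref{lem:blocks}. The vector \(\ket{\Phi_n}\) lies in the \(h=0\) block with \(x=1-\kappa=1/s\), \(y=0\), so its \(W_n\)-expectation is \(\beta_Q/s=\beta_B\). The vector \(\Gamma_i\ket{\Phi_n}\) lies in the \(H=\{i\}\) block, where \(x=\sqrt{b_i}=1/s\) and \(y=0\), giving the same value. Cross terms vanish because the two vectors lie in different invariant blocks of \(W_n\). Hence every state supported there has score exactly \(\beta_B\).

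The main obstacle is excluding the interior-axis case. There the kernel vector \(\ket{\Phi_n}\) has score \(\beta_Q(1-v)\), which lies strictly between \(\beta_B\) and \(\beta_Q\). The fixed channel \(\Lambda_\theta\) is tight on such a sector, so the contradiction must come from a better extraction. I would show that the pure state \(\ket{\Phi_n}\) with one nonideal angle has extractability \(1\): the target \(\ket{\Phi_n}\) is itself locally equivalent to GHZ, so the identity channel followed by the fixed output unitaries extracts it perfectly. More carefully, on that sector I would replace \(\Lambda_{\theta_i}\) by the identity channel; this is allowed because the channel at party \(i\) depends only on party \(i\)'s own sector label. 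The resulting overlap is \(1>L(\beta_Q(1-v))\), which is strict since \(v<\kappa\). The modified extraction therefore has overlap \(F'_{\mathrm{ext}}\geq F_{\mathrm{ext}}+\pi_\ell\bigl(1-\bra{\Phi_n}K_\ell\ket{\Phi_n}\bigr)\). Because \(\bra{\Phi_n}K_\ell\ket{\Phi_n}=q_i<1\), this exceeds \(\Xi_n(\rho)\) whenever \(\pi_\ell>0\), a contradiction.

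The delicate point is that party \(i\)'s label also occurs in other joint sectors, with the other parties in different blocks. There the switch to the identity channel could lower the overlap. To handle this I would instead change the channel only for the label value of party \(i\), and use a family \((1-\eta)\Lambda_{\theta_i}+\eta\,\mathrm{id}\). The first-order gain in \(\eta\) is \(\sum\pi_{\ell'}\tr\bigl(\rho_{\ell'}\,\partial_\eta K_{\ell'}\bigr)\) over the affected sectors. Every affected sector is already in one of the three kernel forms, with party \(i\) at the same interior angle and therefore all other angles ideal. So the derivative can be evaluated on \(\ket{\Phi_n}\) alone and is \(p_i>0\), giving the contradiction.

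Finally, the weights follow by linearity of the score. Let \(t'\) be the total weight of type-1 sectors. Then \(\beta=t'\beta_Q+(1-t')\beta_B\), and since \(\beta_Q\neq\beta_B\) this gives \(t'=t\).
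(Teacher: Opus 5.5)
Your proposal is correct and follows the paper's proof. It uses the same steps: vanishing of every $\tr(\rho_\ell T_\ell)$ by the chain from Theorem~\ref{thm:actual-stability}, the kernel classification of Proposition~\ref{prop:kernel}, exclusion of interior-axis sectors by switching to the identity channel at interior local angles, the $\beta_B$ eigenvalue computation on the endpoint kernel, and averaging the scores. Two small remarks: your $\eta$-family is unnecessary, since every positive-probability sector is already of one of the three kernel forms, so the full switch never lowers any overlap (the paper simply uses the identity at every angle in $(0,\pi/2)$), and in any case the overlap is affine in $\eta$; also, the strict inequality $1>L(\beta_Q(1-v))=1-av$ comes from $v>0$, not from $v<\kappa$.
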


\begin{proof}
The extraction used in Theorem~\ref{thm:actual-stability} must satisfy
\(F_{\mathrm{ext}}=L(\beta)\). Every sector of positive probability
is therefore supported on \(\ker T_\ell\).
By Proposition~\ref{prop:kernel}, every such sector lies on an axis, and its state is the pure target except at the commuting endpoints.

Now modify the extraction locally: at every local angle
\(0<\theta<\pi/2\), use the identity channel, and at the two
endpoints keep the original dephasing channel.
This choice depends only on the party's own local sector.
In an interior-axis sector, the conditional state is
\(\proj{\Phi_n}\), and the old overlap is \(q_i=1-av_i\);
the modified overlap is one. The gain is \(av_i>0\).
In an all-ideal sector nothing changes. In an endpoint sector
the chosen channels coincide with the original ones, and the
overlap remains \(1/2\). Thus any positive weight on interior-axis
sectors would strictly improve the extraction beyond \(L(\beta)\),
a contradiction. Only the two listed types remain.

For the first type, \eqref{eq:Wblock} gives score \(\beta_Q\).
For the second, both spanning vectors are eigenvectors of \(W_n\) with eigenvalue \(\beta_B\), so every supported state has that score.
Averaging the scores fixes the two total probabilities.
\end{proof}

For a single endpoint sector,
Proposition~\ref{prop:endpoint-extractability} gives the exact
extractability.

A more concrete conclusion is available after refining the local
Jordan decomposition into irreducible sectors: every commuting
two-dimensional sector is split into its common one-dimensional
eigenspaces. This refinement yields a pure-state description in each sector of
the locally pinched state.

\begin{theorem}
\label{thm:refined-equality}
Under the hypotheses of Theorem~\ref{thm:necessary-equality}, use
the refined decomposition just described.
The conditional state of each joint sector of positive probability of the pinched state is pure and is one of the following:
\begin{enumerate}
 \item A locally embedded \(n\)-party GHZ state, with every local
       sector two-dimensional and ideal; the sector has Bell score \(\beta_Q\).
 \item A unit vector of a one-dimensional local sector at one party, the deterministic party, tensored with a locally embedded \((n-1)\)-party GHZ state, with the other
       local sectors two-dimensional and ideal; the sector has Bell
       score \(\beta_B\).
\end{enumerate}
The two classes have total weights \(t\) and \(1-t\), as above;
the deterministic party can vary between joint sectors.

Conversely, a strategy that is already block diagonal in local sectors, with every joint sector of positive probability of one of these two forms and with these Bell scores, has \(\Xi_n=L(\beta)\).
For finite-dimensional binary POVMs, the corresponding necessary statement holds for the Naimark dilation of Appendix~\ref{sec:lift}, after refinement and local pinching.
\end{theorem}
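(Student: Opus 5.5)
The plan is to derive the refined statement from Theorem~\ref{thm:necessary-equality}, and then prove the converse directly from the lower bound of Theorem~\ref{thm:main} together with one explicit extraction.

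\emph{Refinement of Theorem~\ref{thm:necessary-equality}.} Apply that theorem to the refined decomposition; the proofs act blockwise, so it holds for every admissible decomposition. A type-1 sector is already the pure state \(\proj{\Phi_n}\) on a product of two-dimensional ideal sectors. Since \(\ket{\Phi_n}\) is locally equivalent to \(\ket{\GHZ_n}\) by \(\bigotimes_jV_j\) and a phase, it is a locally embedded GHZ state.

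For a type-2 sector, the deterministic party \(i\) has angle \(0\) or \(\pi/2\). Its local sector is therefore either a refined one-dimensional sector or a commuting two-dimensional block. In the refined decomposition only the former occurs. Its canonical embedding is the common eigenvector \(\ket{e}\) of \(\Gamma_i\), with \(\Gamma_i=X_i\) or \(Z_i\). The state must be supported on \(\ket{e}\otimes(\text{rest})\) intersected with \(\operatorname{span}\{\ket{\Phi_n},\Gamma_i\ket{\Phi_n}\}\). Now \(\ket{\Phi_n}\pm\Gamma_i\ket{\Phi_n}\) equals \(2\,\Pi^{\pm}_i\ket{\Phi_n}\), where \(\Pi^\pm_i\) are the eigenprojectors of \(\Gamma_i\). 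Hence that intersection is one-dimensional and spanned by \(\Pi_i\ket{\Phi_n}\) with \(\Pi_i=\proj{e}\). The conditional state is therefore pure and equal to \(\ket{e}\otimes\ket{\psi}\), where \(\ket{\psi}\propto(\bra{e}\otimes I)\ket{\Phi_n}\).

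It remains to identify \(\ket{\psi}\) as an embedded \((n-1)\)-party GHZ state. Write \(\ket{e}\) in the \(Y\) basis; for \(X\) or \(Z\) eigenvectors, both \(Y\)-components have modulus \(1/\sqrt2\). Contracting \eqref{eq:target} then gives
\[
 \ket{\psi}\propto e^{-\mathrm i\delta_n/2}c_+\ket{\tau'}_Y+e^{\mathrm i\delta_n/2}c_-\ket{-\tau'}_Y
\]
with \(|c_+|=|c_-|\). This is GHZ-equivalent by the local maps \(V_j\) and a phase correction. The score \(\beta_B\) and the weights \(t,1-t\) carry over unchanged from Theorem~\ref{thm:necessary-equality}.

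\emph{Converse.} Theorem~\ref{thm:main} gives \(\Xi_n\geq L(\beta)\), so only an upper bound is needed. The key point is that any extraction channels applied to a state block diagonal in local sectors act as follows. In each sector of type 2, the deterministic party's one-dimensional sector carries no information, so its share is a product of a fixed local state with the rest. I expect this step to need the most care, because the local sector labels are correlated and the channels may read them.

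My plan for the upper bound is to use the joint concavity of the fidelity, or more concretely its linearity in \(\rho\). Write
\[
 \Xi_n(\rho)\leq t\cdot1+(1-t)\sup\tr\bigl[G_n(\textstyle\bigotimes\Lambda_j)(\rho_2)\bigr],
\]
where \(\rho_2\) is the mixture of the type-2 sectors, possibly with correlated labels and varying deterministic party. I would then show that this supremum is at most \(1/2\).

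For any product channel, the output on \(\rho_2\) is a mixture over sectors \(\ell\) of states of the form \(\omega_\ell\otimes\sigma_\ell\). Here \(\omega_\ell\) is a single-qubit state at the deterministic party \(i_\ell\), and it is the image of a pure local vector under \(\Lambda_{i_\ell}\). Each such product state has GHZ overlap at most \(1/2\), since \(\bra{\GHZ_n}(\omega\otimes\sigma)\ket{\GHZ_n}\leq\max_{\text{product pure}}\) overlap with party \(i\) split off, which equals \(1/2\). By linearity, the mixture also has overlap at most \(1/2\).

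Combining the two parts gives \(\Xi_n\leq(1+t)/2=L(\beta)\), and therefore equality.

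\emph{Binary POVMs.} The binary-POVM necessary statement follows by applying the projective statement to the Naimark dilation. This is justified because extractability and score are invariant under the embedding, exactly as argued in the proof of Theorem~\ref{thm:actual-stability}.
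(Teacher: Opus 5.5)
Your proposal is correct and follows essentially the same route as the paper. You apply Theorem~\ref{thm:necessary-equality} to the refined decomposition. You identify the kernel intersection at the deterministic party as the single vector \(\Pi_i\ket{\Phi_n}\propto\ket{e}\otimes(\bra{e}\otimes I)\ket{\Phi_n}\), which is the paper's \((\ket{\Phi_n}+e\Gamma_i\ket{\Phi_n})/\sqrt2\). You recognize the balanced \(Y\)-string superposition as an embedded \(\GHZ_{n-1}\). For the converse you use separability across the deterministic-party cut, the equal Schmidt coefficients of GHZ, and linearity for each fixed product extraction, as the paper does.

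Two points are left implicit, and both are immediate. First, \(\Pi_i^{\pm}\ket{\Phi_n}\neq0\) because the local marginal of \(\Phi_n\) is \(I/2\). Second, in the converse the total type-1 weight equals \(t\), by averaging the sector scores.
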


\begin{proof}
Local pinching preserves the score. The pinched state has extractability at most that of the original state, because any extraction after
pinching is also an allowed local extraction before pinching.
The main lower bound thus forces equality for the pinched state.
Theorem~\ref{thm:necessary-equality} applies to its conditional
sectors. An ideal noncommuting pair needs a two-dimensional
irreducible sector, so the first class has the asserted form.

In an endpoint sector let the scalar observables at party \(i\)
be \(a_0,a_1\in\{+1,-1\}\).
For \(a_0=a_1\), embed the scalar sector in the
\(X\)-eigenspace of eigenvalue \(e\coloneqq a_0\), using \(\theta_i\coloneqq0\).
For \(a_1=-a_0\), use \(\theta_i\coloneqq\pi/2\) and the
\(Z\)-eigenspace of eigenvalue
\(e\coloneqq(-1)^{i-1}a_0\), since the canonical first observable is
\((-1)^{i-1}Z\).
These formulas cover both endpoint signs.

Let \(\ket{g_e}\) be the indicated eigenvector of \(\Gamma_i\).
The intersection of the kernel \eqref{eq:endpoint-kernel} with
the physical local eigenspace is one-dimensional, spanned by
\begin{equation}
 \ket{\psi_e}
 \coloneqq\frac{\ket{\Phi_n}+e\Gamma_i\ket{\Phi_n}}s
 =\ket{g_e}_i\otimes
     \sqrt2\,{}_i\!\braket{g_e}{\Phi_n}.
 \label{eq:refined-endpoint}
\end{equation}
The marginal of \(\Phi_n\) at party \(i\) is \(I/2\), which proves normalization.
An \(X\)- or \(Z\)-eigenvector has equal-modulus overlaps with
the two \(Y\)-eigenvectors in \eqref{eq:target}.
The remaining factor in \eqref{eq:refined-endpoint} is therefore
a balanced superposition of two locally orthogonal product
strings, hence is locally equivalent to \(\GHZ_{n-1}\).
This proves necessity and purity in the second class.

For the converse, take a block-diagonal strategy whose joint sectors of positive probability have
these forms and Bell scores.
The sectors of the first type have score \(\beta_Q\) and extractability one.
Each sector of the second type has score \(\beta_B\) and is separable across
its deterministic-party cut. Product channels keep it separable across
that cut, and the GHZ state has two equal Schmidt coefficients across
every cut, so every product extraction gives
overlap at most \(1/2\) on that sector.
For any one fixed product extraction, linearity bounds the total
overlap by \(t+(1-t)/2\), even if the cuts vary.
Taking the supremum preserves this upper bound, while
Theorem~\ref{thm:main} gives the reverse inequality.
This proves equality.
The dilation argument from Theorem~\ref{thm:actual-stability}
proves the last assertion. This theorem classifies the pinched state; at the minimum product of the local ranks, Theorem~\ref{thm:minimal-equality-state} below also classifies the original state.
\end{proof}

\section{Qubit strategies and minimum local dimensions}
\label{sec:dimension}

The unrestricted line can be attained by adding one local level to
a qubit strategy. To prove that this is the smallest possible
dimension increase, we first determine the exact qubit tradeoff and
compare it with the line. Recall \(b=\beta/\beta_Q\) and the affine
functions \(L(\beta)\), \(\mathcal L(b)\) from
\eqref{eq:affine-notation}; in the proof notation,
\(L(\beta)=\bar s_n\beta-1/s\) and \(\mathcal L(b)=ab-1/s\).
The results in this section hold for \(n\geq3\), except for the
original-state classification and stability results for arbitrary measurements, which assume \(n\geq4\) (for fixed observables, see Proposition~\ref{prop:fixed-observable-stability}).
The GHZ eigenbasis and the sum of squared eigenvalues used below come from the spectral decomposition of Scarani and Gisin \cite{ScaraniGisin2001}, valid for every number of parties and written in our normalization. The overlap argument and the tight two-qubit curve already
appear in Bardyn et al.~\cite{Bardyn2009}. We give the deductions and
the multipartite attaining families explicitly.

The binary-POVM statements of this section use the following
replacement of the observables, which keeps the state and the local
spaces.

\begin{lemma}
\label{lem:projective-replacement}
Let a binary-POVM strategy on finite-dimensional local spaces have
state \(\rho\) and score \(\beta\). Its observables can be replaced by
reflections on the same spaces so that the score of \(\rho\) becomes
the largest value \(\beta_*\) attainable there by binary POVMs; thus
\(\beta\leq\beta_*\leq\beta_Q\), and \(\rho\) and \(\Xi_n(\rho)\) are
unchanged. If moreover \(\beta\geq\beta_B\) and
\(\Xi_n(\rho)\leq L(\beta)+\varepsilon\) with \(\varepsilon\geq0\),
then \(t_*\coloneqq(\beta_*-\beta_B)/(\beta_Q-\beta_B)\) satisfies
\(0\leq t_*-t\leq2\varepsilon\) and
\(0\leq\Xi_n(\rho)-L(\beta_*)\leq\varepsilon\); in particular
\(\beta_*=\beta\) if \(\varepsilon=0\).
\end{lemma}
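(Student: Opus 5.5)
The plan has two parts. First, the existence of an optimal reflection strategy by compactness and linearity. Second, the quantitative bound from the main lower bound of Theorem~\ref{thm:main}.

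\emph{Existence and the bound \(\beta_*\leq\beta_Q\).} Fix \(\rho\) on finite-dimensional local spaces \(\mathcal H_1,\ldots,\mathcal H_n\). The score \(\tr(\rho W_n)\) is multi-affine in the tuple of observables \((A_{j,0},A_{j,1})_j\). The set of binary-POVM observables on \(\mathcal H_j\) is exactly the set of Hermitian contractions, which is compact and convex. Hence the supremum \(\beta_*\) over all binary POVMs on the given spaces is attained, and \(\beta_*\geq\beta\).

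Next I fix all observables except one, say \(A_{j,x}\). The score is then an affine function \(\tr(M A_{j,x})+c\) of that observable, with \(M\) Hermitian. Over Hermitian contractions this is maximized by the reflection \(\operatorname{sgn}(M)\), where one chooses \(+1\) on \(\ker M\). Replacing the \(2n\) observables one at a time in this way never decreases the score. Starting from a maximizer, every step therefore stays at the value \(\beta_*\), and the result is a strategy with reflections and score exactly \(\beta_*\).

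The bound \(\beta_*\leq\beta_Q\) follows from \(\|W_n\|\leq\beta_Q\). This norm bound holds for reflections via the Jordan block formula \eqref{eq:Wblock}, and for contractions via the Naimark dilation. The state \(\rho\) and the local spaces are untouched, so \(\Xi_n(\rho)\) does not change.

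\emph{Quantitative part.} Assume \(\beta\geq\beta_B\). Then \(\beta_*\in[\beta_B,\beta_Q]\), and Theorem~\ref{thm:main} applies to the replaced strategy. It gives \(\Xi_n(\rho)\geq L(\beta_*)=(1+t_*)/2\). Combining this with the hypothesis gives
\[
 \frac{1+t_*}{2}\leq\Xi_n(\rho)\leq\frac{1+t}{2}+\varepsilon .
\]
This yields \(t_*-t\leq2\varepsilon\), and \(t_*\geq t\) because \(\beta_*\geq\beta\). It also gives
\[
 \Xi_n(\rho)-L(\beta_*)=\Xi_n(\rho)-L(\beta)-\frac{t_*-t}{2}\leq\varepsilon,
\]
and this quantity is nonnegative by the same theorem. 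When \(\varepsilon=0\) we get \(t_*=t\), hence \(\beta_*=\beta\).

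\emph{Main obstacle.} The only delicate point is that the optimizing observables must be genuine reflections on the \emph{same} spaces, rather than on a dilation. The coordinatewise sign replacement handles this, since it stays on the given spaces at every step. The rest is bookkeeping with the affine function \(L\).
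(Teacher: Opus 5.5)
Your proof is correct and follows the paper's argument. A global maximizer exists by compactness, and the $2n$ observables are replaced one at a time by reflection maximizers of the affine single-observable functional. The quantitative part then comes from Theorem~\ref{thm:main} applied at $\beta_*$, together with $L(\beta_*)-L(\beta)=(t_*-t)/2$. The only difference is cosmetic: you write the reflection maximizer explicitly as $\operatorname{sgn}(M)$, with $+1$ on $\ker M$, whereas the paper argues that extreme Hermitian contractions are reflections.
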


\begin{proof}
The observables of binary POVMs on a finite-dimensional space are
exactly its Hermitian contractions, which form a compact convex set,
and the score is affine in each of the \(2n\) observables. Starting at
a global maximizer, replace the observables one at a time by extreme
maximizers of the corresponding affine functional; each step retains
the global maximum. The extreme Hermitian contractions are reflections:
an eigenvalue in \((-1,1)\) permits opposite perturbations, whereas
eigenvalues in \(\{+1,-1\}\) force both terms of any convex
decomposition to agree. The quantum bound gives \(\beta_*\leq\beta_Q\).
Under the additional hypotheses, \(\beta_B\leq\beta_*\leq\beta_Q\), so
Theorem~\ref{thm:main} gives \(\Xi_n(\rho)\geq L(\beta_*)\). Since
\(L\) is increasing and \(L(\beta_*)-L(\beta)=(t_*-t)/2\), the remaining
inequalities follow from
\(L(\beta)\leq L(\beta_*)\leq\Xi_n(\rho)\leq L(\beta)+\varepsilon\).
\end{proof}

\begin{theorem}
\label{thm:qubit-profile}
Among strategies with a qubit at every party and arbitrary binary
POVMs, the least extractability at score \(\beta_Qb\) is
\begin{equation}
 F(b)\coloneqq\frac{1+\sqrt{2b^2-1}}2,
 \qquad \frac1s\leq b\leq1.
 \label{eq:qubit-profile}
\end{equation}
\end{theorem}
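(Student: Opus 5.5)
Two inequalities are needed: a lower bound \(\Xi_n(\rho)\geq F(b)\) for every qubit strategy of score \(\beta_Qb\), and a family of qubit strategies that attains it. For the lower bound I first use Lemma~\ref{lem:projective-replacement}. Replacing the observables by reflections that maximize the score keeps \(\rho\) and \(\Xi_n(\rho)\) and can only raise the score. Since \(F\) is increasing, it suffices to prove \(\Xi_n(\rho)\geq F(b_*)\) with \(b_*\geq b\). On a qubit a reflection is either \(\pm I\) or a traceless Pauli-type reflection \(\hat a\cdot\vec\sigma\). If some party has an observable \(\pm I\), that party is effectively deterministic for that input, and the score is at most \(\beta_B\) (the recursion selects \(\pm W_{n-1}\) or \(\pm\overline W_{n-1}\), or a mixture of them). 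Then \(b_*\leq 1/s\), where \(F=1/2\) is trivial. So I may assume genuine Pauli reflections at every party. After local unitaries they take the canonical form \eqref{eq:canonical}, with one angle \(\theta_j\) per party, and the state is an arbitrary \(n\)-qubit \(\rho\).

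The key spectral step uses the Scarani--Gisin decomposition cited above. For canonical qubit observables, \(W_n\) is diagonal in a GHZ-type basis, namely the \(Y\)-basis planes of Lemma~\ref{lem:blocks}. Its eigenvalues are \(\pm\beta_Q|x_H-\mathrm iy_H|\) over the \(2^{n-1}\) blocks, and \eqref{eq:Wblock} gives \(\sum_H\beta_Q^2(x_H^2+y_H^2)=\beta_Q^2\), because the \(x_H^2,y_H^2\) run over all atoms of a product distribution. Hence the squared eigenvalues sum to \(\tr(W_n^2)/2^{n}\cdot 2=\beta_Q^2\); equivalently \(\sum_i\mu_i^2\leq\beta_Q^2\) over the positive eigenvalues, one per block.

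Now write \(\rho\) as a mixture over the GHZ eigenbasis \(\{\ket{\psi_i^\pm}\}\) of \(W_n\), and use the identity channels composed with local unitaries. Each \(\ket{\psi_i^\pm}\) is locally a GHZ state. Following the overlap argument of Bardyn et al., the extractability is at least the largest populated weight \(\lambda_{\max}=\max\bra{\psi}\rho\ket{\psi}\). I also use extractability at least \(1/2\). The plan is then the following optimization. Suppose \(\beta=\sum_i\mu_i(\lambda_i^+-\lambda_i^-)\) with \(\sum\mu_i^2\leq\beta_Q^2\) and all weights at most \(F\). Maximizing \(\beta\) puts weight \(F\) on the largest eigenvalue and \(1-F\) on a second one. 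This gives \(b\leq F\mu_1'+(1-F)\mu_2'\) with \(\mu_1'^2+\mu_2'^2\leq1\), hence \(b^2\leq F^2+(1-F)^2\). Inverting for \(F\geq1/2\) gives exactly \(F\geq(1+\sqrt{2b^2-1})/2\).

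The main obstacle I expect is justifying that the eigenvector weights can be taken as diagonal. The score depends only on the diagonal of \(\rho\) in the eigenbasis, but \(\lambda_{\max}\) is basis-dependent, so I will apply it in the eigenbasis of \(W_n\) itself, where everything is consistent. A second delicate point is checking that every eigenvector is locally equivalent to \(\GHZ_n\) uniformly in the angles. This holds because each block vector is a balanced superposition of \(\ket{\sigma}_Y\) and \(\ket{-\sigma}_Y\) with modulus-one coefficients.

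For attainment, I take all \(\theta_j=\pi/4\) except \(\theta_1=\theta\). The top two blocks (\(H=\varnothing\) and \(H=\{1\}\)) then have eigenvalues \(\beta_Q\sqrt{A_1}\) and \(\beta_Q\sqrt{b_1}\), with \(A_1+b_1=1\). I choose the state \(F\proj{\psi_1}+(1-F)\proj{\psi_2}\) and tune \(\theta\) so that \(b=F\sqrt{A_1}+(1-F)\sqrt{b_1}\) saturates Cauchy--Schwarz. The matching upper bound \(\Xi_n\leq F\) for this rank-two GHZ-diagonal state requires that no local channel beats the largest weight. I would prove this with the same reasoning, adapted from the two-qubit case of Bardyn et al. Since \(\bra{\GHZ}(\Lambda)(\rho)\ket{\GHZ}\) is linear, it suffices to check that local channels mapping the two orthogonal GHZ-type vectors give total overlap at most one. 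This is the second point I expect to require care.
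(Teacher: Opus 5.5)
Your lower bound follows the paper's proof. It uses Lemma~\ref{lem:projective-replacement}, the block spectrum \(\pm\mu_H\) from \eqref{eq:Wblock} with \(\sum_{\{H,D\}}\mu_H^2=\beta_Q^2\) and GHZ-type eigenvectors, a two-term bound on the score, and Cauchy--Schwarz.

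The gap is in the attainment half. For \(\theta_1\leq\pi/4\) your rank-two state is \(\rho_p=p\proj{\Phi_n}+(1-p)X_1\proj{\Phi_n}X_1\) (your \(F\) is \(p\)). Attainment rests entirely on the upper bound \(\Xi_n(\rho_p)\leq p\); the reverse inequality is just the identity channel. Your linearity reduction is sound. If every product extraction has overlaps \(o_1,o_2\) on the two branches with \(o_1+o_2\leq1\), then, since \(p\geq1/2\), \(po_1+(1-p)o_2\leq(1-p)+(2p-1)o_1\leq p\).

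But you give no argument for \(o_1+o_2\leq1\). The lower-bound reasoning cannot supply one, because it only exhibits particular extractions. Locality must enter: a global channel that discards its input and prepares \(G_n\) gives \(o_1=o_2=1\).

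The missing idea is that the equal mixture is separable across the cut at party~\(1\). With \(\ket{\psi_\pm}\coloneqq(\ket{\Phi_n}\pm X_1\ket{\Phi_n})/\sqrt2\),
\[
 \tfrac12\bigl(\proj{\Phi_n}+X_1\proj{\Phi_n}X_1\bigr)
 =\tfrac12\bigl(\proj{\psi_+}+\proj{\psi_-}\bigr).
\]
Each \(\ket{\psi_\pm}\) is a product vector across that cut, because \((I\pm X_1)/2\) projects party~\(1\) onto an \(X\)-eigenvector. Product channels preserve separability across the cut. \(\ket{\GHZ_n}\) has two equal Schmidt coefficients across it, so its overlap with any state separable there is at most \(1/2\). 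Hence \(\tfrac12(o_1+o_2)\leq\tfrac12\).

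With this step your route is complete and somewhat more elementary than the paper's. The paper bounds every product-extraction overlap by \(\tfrac12\|\rho_p^{T_1}\|_1=p\), using \(\|G_n^{T_1}\|_\infty=1/2\) and trace-norm contractivity. That negativity bound has the advantage of applying to arbitrary states, and the paper reuses it for the coherent family in Proposition~\ref{prop:coherent-qutrit}.

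Three smaller points in your lower bound need repair.

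First, the greedy step that puts weight \(F\) on the largest eigenvalue and \(1-F\) on a second one presupposes that the largest diagonal weight \(\lambda_{\max}\) is at least \(1/2\). The bound \(\Xi_n\geq1/2\) does not cover \(\lambda_{\max}<1/2\) when \(F(b)>1/2\). You can exclude that case by Cauchy--Schwarz, \(b^2\leq\sum_i(\lambda_i^+)^2\leq\lambda_{\max}\). Alternatively, as the paper does, let \(p\) be the weight of one top eigenvector and note that \(p\leq1/2\) forces \(\beta\leq(\mu_1+\mu_2)/2\leq\beta_Q/s\).

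Second, the identity \(\tr(W_n^2)/2^n\cdot2=\beta_Q^2\) is false for \(n\geq3\); in fact \(\tr(W_n^2)=2\beta_Q^2\). The block-sum identity you actually use is correct.

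Third, the classical recursion you cite bounds the score by \(\beta_B\) only for a deterministic last party. For a \(\pm I\) observable at another party you need one of two things. One is the permutation symmetry of \(W_n\) from the remark after Theorem~\ref{thm:original-state-stability}. The other is the paper's blockwise endpoint bound \(\mu_H^2\leq\beta_Q^2/2\).
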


\begin{proof}
First take noncommuting projective qubit pairs and put them in the
canonical form \eqref{eq:canonical}. The covariance under angle reflections in
Appendix~\ref{app:covariance} allows restriction to the reduced angle cube. Equation \eqref{eq:Wblock} gives
a pair of eigenvalues \(\pm\mu_H\), where
\[
 \mu_H\coloneqq\beta_Q\sqrt{B_HA_D+A_HB_D}.
\]
Each positive eigenvector is a balanced superposition of two locally
orthogonal product strings, and is therefore locally unitarily
equivalent to GHZ. Summing over complementary pairs gives
\begin{equation}
 \sum_{\{H,D\}}\mu_H^2
 =\beta_Q^2\prod_{i=1}^n(A_i+b_i)=\beta_Q^2.
 \label{eq:spectral-sum}
\end{equation}
Indeed, each of the \(2^n\) product atoms occurs once in that sum.
Order the nonnegative eigenvalues as
\(\mu_1\geq\mu_2\geq\cdots\), choose a top eigenvector, and let
\(p\) be its overlap with \(\rho\). Then
\[
 \beta\leq p\mu_1+(1-p)\mu_2,\qquad
 \mu_1^2+\mu_2^2\leq\beta_Q^2.
\]
If \(p\leq1/2\), the first expression is at most
\((\mu_1+\mu_2)/2\leq\beta_Q/s\).
Thus a score above \(\beta_B\) forces \(p>1/2\); it also excludes a
degenerate top eigenvalue. Cauchy--Schwarz gives
\[
 b^2\leq p^2+(1-p)^2,\qquad
 \Xi_n(\rho)\geq p\geq F(b).
\]
The inequality \(\Xi_n(\rho)\geq p\) uses the fixed product unitary mapping the
chosen eigenvector to GHZ. At \(b=1/s\), constant local outputs already give the required lower bound \(1/2\).

For binary qubit POVMs, Lemma~\ref{lem:projective-replacement}
replaces the observables by reflections on the same qubits, with the
same state and at least the same score.
A qubit reflection is either \(\pm I\) or \(r_xX+r_yY+r_zZ\) with \(r_x^2+r_y^2+r_z^2=1\).

A commuting pair at any party limits the score to \(\beta_B\); this is the case of a vanishing effective commutator at one party in
\cite[Sec.~II.C]{Kaniewski2017}, whose Bell operator is \(W_n/2\) up to
local relabelings.
To see this directly, pinch that party into simultaneous eigenspaces
and embed each scalar pair into a canonical endpoint, where \(A_i=b_i=1/2\). In every remaining Bell block,
\[
 \frac{\mu_H^2}{\beta_Q^2}
 =\frac12\bigl(\text{a remaining product atom}
                  +\text{its complementary atom}\bigr)
 \leq\frac12.
\]
The two atoms are distinct outcomes of the product distribution
with local weights \(A_j,b_j\), and their sum is at most one.
This argument covers equal and opposite deterministic outcomes,
including both signs. Thus, if the original score exceeds
\(\beta_B\), the maximizing reflection strategy has no commuting
pair. Its score is at least the original one, its state and
extractability are unchanged, and \(F\) is increasing. The preceding
projective argument proves the POVM lower bound directly.

To attain the bound, abbreviate \(\ket{\Phi_n}\) to \(\ket{\Phi}\) and
take
\begin{equation}
 \rho_p\coloneqq p\proj{\Phi}+(1-p)X_1\proj{\Phi}X_1,\qquad
 \frac12\leq p\leq1.
 \label{eq:qubit-attainer}
\end{equation}
The vectors \(\ket{\Phi}\) and \(X_1\ket{\Phi}\) are orthogonal.
Set \(\theta_j\coloneqq\pi/4\) for \(j\geq2\), and set
\[
 \theta_1\coloneqq\frac{\pi}{4}-\varphi,\qquad
 \tan\varphi=\frac{1-p}{p},\qquad 0\leq\varphi\leq\frac{\pi}{4}.
\]
The two vectors in \eqref{eq:qubit-attainer} are eigenvectors of \(W_n\) with eigenvalues \(\beta_Q\cos\varphi\) and
\(\beta_Q\sin\varphi\), respectively, by \eqref{eq:Wblock}.
Hence
\[
 \frac{\beta}{\beta_Q}
 =p\cos\varphi+(1-p)\sin\varphi=\sqrt{p^2+(1-p)^2}.
\]
The identity channel, followed by the fixed output unitaries, has overlap \(p\).

For the reverse bound we use the negativity method of Vidal and
Werner \cite{VidalWerner2002}, applying partial transposition across
party~\(1\). In its effective two-by-two bipartite support,
\(\rho_p\) is a mixture of two orthogonal Bell states.
Its partial-transpose eigenvalues are
\[
 \frac12,\quad \frac12,\quad \frac{2p-1}{2},\quad
 -\frac{2p-1}{2},
\]
with possible additional zeros, so \(\|\rho_p^{T_1}\|_1=2p\).
Let \(\Lambda\coloneqq\bigotimes_i\Lambda_i\) be any allowed extraction. The transpose-conjugated first map has the conjugate Kraus
operators and is CPTP. Trace-norm contractivity on Hermitian
operators therefore gives
\[
 \|(\Lambda(\rho_p))^{T_1}\|_1\leq\|\rho_p^{T_1}\|_1.
\]
Since \(\|G_n^{T_1}\|_\infty=1/2\), Hölder's inequality yields
\begin{equation}
 \tr(G_n\Lambda(\rho_p))
 \leq\frac12\|(\Lambda(\rho_p))^{T_1}\|_1
 \leq p.
 \label{eq:partial-transpose-bound}
\end{equation}
Thus \(\Xi_n(\rho_p)=p\), proving attainment throughout the interval.
\end{proof}

The qubit curve suggests a dimension cost for reaching the affine
line. The next construction modifies the mixture with local classical registers of \cite{Kaniewski2016}: only the first party records the branch, that is, which of the two components of the mixture is present, and every other party uses the same measurement pair in both branches.

\begin{theorem}
\label{thm:qutrit-attainer}
For every \(n\geq3\) and \(\beta\in[\beta_B,\beta_Q]\), the affine
bound \(L(\beta)\) is attained on
\(\mathbb C^3\otimes(\mathbb C^2)^{\otimes(n-1)}\), with one fixed
collection of observables as \(\beta\) varies.
\end{theorem}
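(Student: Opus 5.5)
The construction will be a mixture \(\rho_t=t\,\proj{\Phi_n}\oplus(1-t)\,\omega\) on \(\mathbb C^3\otimes(\mathbb C^2)^{\otimes(n-1)}\). Write \(\mathbb C^3=\mathbb C^2\oplus\mathbb C\ket2\) at party~\(1\). The first summand carries the ideal target \(\ket{\Phi_n}\) on the qubit part, and \(\omega=\proj2\otimes\proj{\Psi}\), where \(\ket\Psi\) is an \((n-1)\)-party GHZ-type state on parties \(2,\ldots,n\). Parties \(j\geq2\) use the ideal canonical observables \eqref{eq:canonical} with \(\theta_j=\pi/4\) in both branches. Party~\(1\) uses, on its qubit part, the ideal pair at \(\theta_1=\pi/4\). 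On \(\ket2\) it uses a fixed scalar pair, say \(a_0=a_1=1\), which is the canonical endpoint \(\theta_1=0\) with eigenvalue \(+1\) of \(X_1\). None of these choices depends on \(t\), so the observables are fixed as \(\beta\) varies.

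The first step fixes \(\ket\Psi\) so that the second branch has score exactly \(\beta_B\). Since the observables are block diagonal, \(W_n\) is block diagonal across \(\mathbb C^2\oplus\mathbb C\ket2\). On the scalar level, \(W_n\) coincides with the canonical qubit operator at \(\theta_1=0\) restricted to the \(X_1=+1\) eigenspace. By Theorem~\ref{thm:necessary-equality}, or directly by \eqref{eq:Wblock} with \(v_1=\kappa\), both \(\ket{\Phi_n}\) and \(X_1\ket{\Phi_n}\) are eigenvectors with eigenvalue \(\beta_B\). Hence \(\ket{\psi_+}\) from \eqref{eq:refined-endpoint} is such an eigenvector. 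I take \(\ket\Psi\coloneqq\sqrt2\,{}_1\!\braket{g_+}{\Phi_n}\), which is locally equivalent to \(\GHZ_{n-1}\) by the argument after \eqref{eq:refined-endpoint}. Then \(\tr(\omega W_n)=\beta_B\), and the first branch has score \(\beta_Q\) by \eqref{eq:Wblock}. The total score is therefore \(t\beta_Q+(1-t)\beta_B\), which is exactly \(\beta\) with parameter \(t\). Thus every \(\beta\in[\beta_B,\beta_Q]\) is reached by varying only the mixing weight.

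The second step is the upper bound \(\Xi_n(\rho_t)\leq(1+t)/2\); the lower bound is Theorem~\ref{thm:main}. The state is block diagonal in party~\(1\)'s decomposition and the observables are already block diagonal, so the converse half of Theorem~\ref{thm:refined-equality} applies directly. Its two sector types are the all-ideal GHZ sector of weight \(t\) and the deterministic-party sector \(\ket2\otimes\ket\Psi\) of weight \(1-t\), with scores \(\beta_Q\) and \(\beta_B\). To be safe I would also repeat the argument explicitly. For any fixed product extraction \(\Lambda\), the overlap is affine in \(\rho\). The first component contributes at most \(t\). The second is a product across the cut \(\{1\}\mid\{2,\ldots,n\}\), remains separable after \(\Lambda\), and so has GHZ overlap at most \(1/2\), because GHZ has Schmidt coefficients \(1/2,1/2\) across every cut. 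Taking the supremum over \(\Lambda\) gives \((1+t)/2\).

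The main obstacle is the case \(n=3\). Theorem~\ref{thm:refined-equality} is stated only for \(n\geq4\), but its converse part uses only the separability bound, which is valid for every \(n\geq3\), together with Theorem~\ref{thm:main}, which covers \(n=3\). The other point needing care is the eigenvalue computation at the endpoint, including the sign conventions \((-1)^{j-1}\) in \eqref{eq:canonical}. For \(n=3\) I would verify it directly from \eqref{eq:Wblock} rather than through Proposition~\ref{prop:kernel}, which assumes \(n\geq4\).
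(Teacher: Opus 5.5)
Your proposal is correct and follows the paper's proof essentially step for step: it uses the same state $\omega_t$ (your $\ket\Psi$ is the paper's $\sqrt2\,{}_1\!\braket{x+}{\Phi_n}$), the same endpoint-eigenvector argument giving the scalar branch score $\beta_B$, and the same separability bound across the first-party cut. The only difference is that you take $\Xi_n(\omega_t)\geq(1+t)/2$ from Theorem~\ref{thm:main}, whereas the paper exhibits an explicit channel achieving overlap $t+(1-t)/2$ (identity on the qubit sector, $\ket2\mapsto\ket{x+}$, coherences discarded); both routes are valid and neither is circular.
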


\begin{proof}
At party \(1\), use the ideal pair \((X+Z)/s,(X-Z)/s\) on
\(\operatorname{span}\{\ket0,\ket1\}\), and give both observables
the scalar value \(+1\) on \(\ket2\). All other parties use their
ideal canonical pairs. Let \(J\) embed the first qubit into this
qutrit, acting identically elsewhere, and define
\[
 \ket{x+}\coloneqq\frac{\ket0+\ket1}{s},\qquad
 \ket{\psi}\coloneqq\sqrt2\,{}_1\!\braket{x+}{\Phi_n}.
\]
The vector \(\ket{\psi}\) is normalized because the first marginal
of \(\proj{\Phi_n}\) is \(I/2\). For
\(t=(\beta-\beta_B)/(\beta_Q-\beta_B)\), take
\begin{equation}
 \omega_t
 \coloneqq tJ\proj{\Phi_n}J^*
 +(1-t)\proj{2}\otimes\proj{\psi}.
 \label{eq:qutrit-attainer}
\end{equation}
The first branch has score \(\beta_Q\).
At the qubit endpoint \(\theta_1=0\), with all other angles ideal,
both \(\ket{\Phi_n}\) and \(X_1\ket{\Phi_n}\) are eigenvectors of \(W_n\) with eigenvalue \(\beta_B\), and
\[
 \ket{x+}\otimes\ket{\psi}
 =\frac{\ket{\Phi_n}+X_1\ket{\Phi_n}}s.
\]
The first observables are both \(X\) at this endpoint, and therefore
act as \(+1,+1\) on this vector. Replacing that first factor by
\(\ket2\) and its observables by the scalar values \(+1,+1\)
preserves the score. The second branch of
\eqref{eq:qutrit-attainer} consequently has score \(\beta_B\).

The first local channel acts as the identity on the subspace
\(\operatorname{span}\{\ket0,\ket1\}\), maps \(\ket2\) to \(\ket{x+}\), and discards coherences
between these subspaces. This is CPTP. With the identity at every
other party and the fixed output unitaries converting \(\Phi_n\)
to GHZ, its overlap is \(t+(1-t)/2\).
Conversely, for any product channel the first branch contributes
at most one, and the second branch remains separable across the
first-party cut and contributes at most \(1/2\). Linearity gives
\[
 \Xi_n(\omega_t)=\frac{1+t}{2}=L(\beta).
\]
Every other party remains a qubit and uses the same observables in
both branches.
\end{proof}

The same two branches can carry coherence. The following
calculation, used in Theorem~\ref{thm:minimal-equality-state} to pass
from the pinched state to the original one and in
Proposition~\ref{prop:stability-exponent}, shows that coherence
increases extractability while leaving the score of the fixed
measurements unchanged.

\begin{proposition}
\label{prop:coherent-qutrit}
Let \(n\geq3\), and use the vectors and embedding of
\eqref{eq:qutrit-attainer}. Set
\[
 \ket G\coloneqq J\ket{\Phi_n},\qquad
 \ket B\coloneqq\ket2\otimes\ket\psi.
\]
For \(0\leq t\leq1\) and \(c\in\mathbb C\) with
\(|c|^2\leq t(1-t)\), define
\begin{equation}
 \omega_{t,c}
 \coloneqq t\proj G+(1-t)\proj B
   +c\ket G\!\bra B+\overline c\ket B\!\bra G.
 \label{eq:coherent-qutrit}
\end{equation}
Then \(\omega_{t,0}=\omega_t\), and the fixed observables in
Theorem~\ref{thm:qutrit-attainer} give score
\(t\beta_Q+(1-t)\beta_B\) for every allowed \(c\). The exact extractability of \(\omega_{t,c}\) is
\begin{equation}
 \Xi_n(\omega_{t,c})
 =\frac{1+\sqrt{t^2+2|c|^2}}2.
 \label{eq:coherent-extractability}
\end{equation}
Thus nonzero coherence strictly raises extractability above
\((1+t)/2\).
\end{proposition}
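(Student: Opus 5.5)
The plan has three parts: the score identity, an upper bound on \(\Xi_n(\omega_{t,c})\) by the negativity method, and a matching explicit extraction.

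\textbf{Score.} The fixed observables of Theorem~\ref{thm:qutrit-attainer} leave \(\operatorname{span}\{\ket0,\ket1\}\) and \(\operatorname{span}\{\ket2\}\) at party~\(1\) invariant. Hence every \(A_{1,x}\), and so \(W_n\), commutes with the projector \(\Pi_2\) onto \(\ket2\) at party~\(1\). Since \(\Pi_2\ket G=0\) and \(\Pi_2\ket B=\ket B\), we get \(\bra G W_n\ket B=0\), so the coherence terms do not contribute to \(\tr(\omega_{t,c}W_n)\). The proof of Theorem~\ref{thm:qutrit-attainer} shows that \(\ket G\) and \(\ket B\) are eigenvectors with eigenvalues \(\beta_Q\) and \(\beta_B\). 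Therefore the score is \(t\beta_Q+(1-t)\beta_B\), and \(\omega_{t,0}=\omega_t\) is immediate.

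\textbf{Upper bound.} I would follow the proof of Theorem~\ref{thm:qubit-profile} and apply partial transposition across party~\(1\).

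1. Write \(\ket\Phi=(\ket{x+}\ket\psi+\ket{x-}\ket{\psi'})/s\), with \(\ket{\psi'}\coloneqq\sqrt2\,{}_1\!\braket{x-}{\Phi_n}\) orthogonal to \(\ket\psi\). Then \(\omega_{t,c}\) is a pure state \(\sqrt t\,\ket G+\ldots\) only when \(|c|^2=t(1-t)\); in general it is a rank-two operator on an effective space spanned by \(\ket{x+},\ket{x-},\ket2\) at party~\(1\) and \(\ket\psi,\ket{\psi'}\) on the rest.
2. Compute \(\|\omega_{t,c}^{T_1}\|_1\) on this \(3\times2\) support. The target value is \(1+\sqrt{t^2+2|c|^2}\).
3. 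The monotonicity of \(\|\cdot^{T_1}\|_1\) under product channels and \(\|G_n^{T_1}\|_\infty=1/2\), exactly as in \eqref{eq:partial-transpose-bound}, then give \(\Xi_n\le(1+\sqrt{t^2+2|c|^2})/2\).

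For the computation in step 2, I expect \(\omega^{T_1}\) to split into a few small blocks. These are indexed by pairs of party-\(1\) basis vectors and the two strings \(\psi,\psi'\), and each has an explicit characteristic polynomial.

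\textbf{Lower bound.} Here I would use an explicit extraction that preserves coherence. At party~\(1\), take the qutrit-to-qubit channel with Kraus operators
\[
M_0=\ket{x-}\!\bra{x-}+\ket{x+}\!\bra2,\qquad M_1=\ket{x+}\!\bra{x+}.
\]
It satisfies \(M_0^*M_0+M_1^*M_1=I\). The \(M_0\) branch coherently combines the \(\ket{x-}\ket{\psi'}\) part of \(\ket G\) with \(\ket{x+}\ket\psi\) from \(\ket B\). To this I add local unitaries on all parties, a family parametrized by an angle mixing \(\ket{x\pm}\) and relative phases on \(\psi,\psi'\), and possibly a convex combination with the identity-on-\(\{0,1\}\) channel of Theorem~\ref{thm:qutrit-attainer}. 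The overlap with \(\Phi\) then becomes the largest eigenvalue of an explicit \(2\times2\) Hermitian matrix in \(t\) and \(c\). Optimizing the angle and the phase of \(c\) should give \((1+\sqrt{t^2+2|c|^2})/2\).

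\textbf{Main obstacle.} I expect the difficulty to be in this lower bound, and specifically in finding the channel family whose optimum exactly matches the negativity bound. The negativity side is a routine finite computation. If \(M_0,M_1\) fall short, I would next try a one-parameter Kraus pair interpolating between \(M_0,M_1\) and the pinching channel. I would choose the interpolation parameter by requiring the output to be the qubit state \(\rho_p\) of \eqref{eq:qubit-attainer} with \(p=(1+\sqrt{t^2+2|c|^2})/2\), up to local unitaries. Once that reduction is achieved, Theorem~\ref{thm:qubit-profile} gives both the value and its exactness. The final claim of the proposition follows because \(\sqrt{t^2+2|c|^2}>t\) whenever \(c\neq0\).
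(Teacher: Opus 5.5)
Your score argument and negativity upper bound match the paper's. The partial transpose across party~1 splits into the positive block \(\begin{pmatrix}t/2&\overline c/\sqrt2\\ c/\sqrt2&1-t\end{pmatrix}\), a three-by-three block with eigenvalues \(0,\pm d/2\), and the scalar \(t/2\), where \(d\coloneqq\sqrt{t^2+2|c|^2}\). Hence \(\|\omega_{t,c}^{T_1}\|_1=1+d\).

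The gap is attainment: the channel family you actually specify cannot reach \((1+d)/2\). Your pair \(M_1=\proj{x+}\), \(M_0=\ket{x-}\!\bra{x-}+\ket{x+}\!\bra{2}\) sends the two halves \(\ket{x+}\ket\psi\) and \(\ket{x-}\ket{\psi'}\) of \(\ket G\) into different Kraus branches, so the coherence inside \(G\) is lost. The output has weights \(1-t/2\) and \(t/2\) on these two vectors and a single off-diagonal entry of modulus \(|c|/\sqrt2\). The same negativity bound then caps its extractability at \(\tfrac12+|c|/\sqrt2\), which is only \(1/2\) at \(c=0\). The freedoms you add do not help:
\begin{enumerate}
\item A unitary mixing \(\ket{x\pm}\) before the channel still leaves product diagonal terms and one coherence of modulus \(|c|/\sqrt2\), so the bound persists.
\item Unitaries at the other parties commute with party~1's channel.
\item The overlap is linear in party~1's channel, so convex combinations with the pinching channel give at most \(\max\{\tfrac12+\tfrac t2,\tfrac12+|c|/\sqrt2\}\). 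This is strictly below \(\tfrac12+\sqrt{t^2/4+|c|^2/2}\) whenever \(t>0\) and \(c\neq0\).
\end{enumerate}
So the claim that optimizing an angle and the phase of \(c\) ``should give'' the target is false for this family.

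Your fallback points the right way, but it is not carried out, and the rule you propose for choosing the parameter is not the one that works. The paper uses \(M_0=\proj{x+}+\alpha\proj{x-}\) and \(M_1=\zeta\ket{x-}\!\bra{x-}+\ket{x+}\!\bra{2}\), with \(\alpha\geq0\) and \(\alpha^2+|\zeta|^2=1\). The choice \((\alpha,\zeta)=(1,0)\) is the pinching channel, and \((0,1)\) is your pair with the labels exchanged. Equivalently, it is your pair preceded by a qutrit unitary that fixes \(\ket{x-}\) and rotates \(\operatorname{span}\{\ket{x+},\ket{2}\}\). The missing freedom is this mixing of \(\ket{x+}\) with \(\ket{2}\), not of \(\ket{x\pm}\).

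The amplitudes against \(\ket{\Phi_n}\) are \((1+\alpha)/2\) and \(\zeta/2\) from \(G\), and \(0\) and \(1/\sqrt2\) from \(B\). The overlap is therefore \(\tfrac12+\tfrac12\bigl(t\alpha+\sqrt2\operatorname{Re}(c\zeta)\bigr)\). For \(d>0\), Cauchy--Schwarz gives exactly \((1+d)/2\), with equality at \(\alpha=t/d\) and \(\zeta=\sqrt2\,\overline c/d\).

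The parameter comes from maximizing this linear expression, not from matching \(\rho_p\). At the optimum the output is \(\begin{pmatrix}1-t/2&d/2\\ d/2&t/2\end{pmatrix}\) in the basis \((\ket{x+}\ket\psi,\ket{x-}\ket{\psi'})\), with determinant \((t(1-t)-|c|^2)/2\). This differs from \(p(1-p)\) for \(t<1\), so the output is not even unitarily equivalent to \(\rho_p\). The reduction to Theorem~\ref{thm:qubit-profile} you envisage is therefore neither available through this family nor needed.
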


\begin{proof}
The two branch vectors are orthonormal, so the stated condition on
\(c\) is precisely positivity of their two-by-two density matrix.
The Bell operator preserves the first party's two-dimensional and
one-dimensional subspaces. Its cross terms between \(G\) and \(B\)
therefore vanish, proving the score assertion.

To compute the partial transpose, use the first-party basis
\(\ket{e_0}\coloneqq\ket{x+}\),
\(\ket{e_1}\coloneqq\ket{x-}\coloneqq(\ket0-\ket1)/s\), and
\(\ket{e_2}\coloneqq\ket2\). On the remaining parties put
\[
 \ket{f_0}\coloneqq\ket\psi,\qquad
 \ket{f_1}\coloneqq\sqrt2\,{}_1\!\braket{x-}{\Phi_n}.
\]
The marginal of \(\Phi_n\) is \(I/2\), so \(f_0,f_1\) are
orthonormal and
\[
 \ket G=\frac{\ket{e_0f_0}+\ket{e_1f_1}}s,
 \qquad \ket B=\ket{e_2f_0}.
\]
The basis on the remaining parties serves only to write the matrix.
Write \(d\coloneqq\sqrt{t^2+2|c|^2}\). Up to a permutation of this
six-dimensional basis, \(\omega_{t,c}^{T_1}\) has blocks
\[
 \begin{pmatrix}
 t/2&\overline c/s\\ c/s&1-t
 \end{pmatrix},\qquad
 \begin{pmatrix}
 0&t/2&0\\ t/2&0&\overline c/s\\ 0&c/s&0
 \end{pmatrix},\qquad (t/2).
\]
The first block is positive, since its determinant is
\([t(1-t)-|c|^2]/2\), and its trace is \(1-t/2\).
The second block has eigenvalues \(0,d/2,-d/2\). Consequently
\[
 \|\omega_{t,c}^{T_1}\|_1=1+d.
\]
The partial-transpose argument of
\eqref{eq:partial-transpose-bound} bounds every allowed product-channel
overlap by \((1+d)/2\): transpose conjugation leaves the first
channel CPTP, the trace norm contracts, and
\(\|G_n^{T_1}\|_\infty=1/2\).

For attainment when \(d>0\), put
\(\alpha\coloneqq t/d\) and \(\zeta\coloneqq s\overline c/d\); here \(\alpha\) and \(\zeta\) are channel parameters, unrelated to the phase \(\zeta\) of Lemma~\ref{lem:blocks} and the constant \(\alpha\) in the proof of Theorem~\ref{thm:uniform-gap}. The first party
uses the channel from the qutrit to the output qubit with Kraus
operators
\begin{equation}
 M_0\coloneqq\ket{e_0}\!\bra{e_0}
       +\alpha\ket{e_1}\!\bra{e_1},\qquad
 M_1\coloneqq\zeta\ket{e_1}\!\bra{e_1}
       +\ket{e_0}\!\bra{e_2}.
 \label{eq:coherent-channel}
\end{equation}
Here the output vectors \(e_0,e_1\) mean \(x+,x-\) in the output
qubit. Since \(\alpha^2+|\zeta|^2=1\) and the two terms of
\(M_1\) have orthogonal output vectors,
\(M_0^*M_0+M_1^*M_1=I_3\).
Every other party uses the identity channel, followed by the fixed
product of output unitaries converting \(\Phi_n\) to GHZ.
Before these final unitaries, the amplitudes against \(\Phi_n\)
from the two branches are
\[
 \frac{1+\alpha}{2},\quad\frac{\zeta}{2}
 \quad\text{from }G,\qquad
 0,\quad\frac1s\quad\text{from }B,
\]
in the order \(M_0,M_1\). The resulting overlap is
\[
 \frac{t(1+\alpha)}2+\frac{1-t}{2}
       +\frac{\operatorname{Re}(c\zeta)}s
 =\frac12+\frac{t^2+2|c|^2}{2d}
 =\frac{1+d}{2}.
\]
If \(d=0\), then \(t=c=0\); setting \(\alpha\coloneqq1\) and
\(\zeta\coloneqq0\) in the same Kraus operators attains \(1/2\).
This covers both endpoints as well as the full positivity domain.
\end{proof}

Within this family, the extractability excess gives an exact
distance from the incoherent mixture. With \(\delta\coloneqq\Xi_n(\omega_{t,c})-(1+t)/2\), we have
\(\sqrt{t^2+2|c|^2}=t+2\delta\), and hence
\begin{equation}
 |c|^2=2t\delta+2\delta^2,\qquad
 \|\omega_{t,c}-\omega_{t,0}\|_1
   =2\sqrt{2t\delta+2\delta^2}.
 \label{eq:coherent-distance}
\end{equation}
Indeed, the difference has eigenvalues \(|c|,-|c|\) on the span
of \(G,B\). In particular, \(\delta\leq\varepsilon\) implies
\(\|\omega_{t,c}-\omega_{t,0}\|_1
\leq2\sqrt{2\varepsilon(t+\varepsilon)}\).

The strict gap between the qubit curve and its endpoint chord now
gives a minimum dimension for attaining the unrestricted bound.

\begin{theorem}
\label{thm:minimal-dimension}
Let \(n\geq3\), \(1/s<b<1\), and let a strategy of score
\(\beta_Qb\) satisfy \(\Xi_n(\rho)=\mathcal L(b)\).
If \(r_i\) is the rank of the \(i\)-th reduced state (the \(i\)-th local rank), then every
\(r_i\geq2\), at least one \(r_i\geq3\), and
\begin{equation}
 \prod_{i=1}^n r_i\geq3\cdot2^{n-1}.
 \label{eq:minimal-dimension}
\end{equation}
The construction \eqref{eq:qutrit-attainer} attains equality.
If some \(r_i\) is infinite, \eqref{eq:minimal-dimension} holds trivially.
\end{theorem}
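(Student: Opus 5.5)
We first show every local rank is at least two, then that some local rank is at least three, and the product bound follows immediately. If some \(r_i=1\), then \(\rho=\proj{\phi}\otimes\rho'\) across the cut at party \(i\), so \(\rho\) is biseparable. Its score is then at most \(\beta_B\), contradicting \(b>1/s\). Hence every \(r_i\geq2\).

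Now suppose, for contradiction, that every \(r_i\leq2\). Compress each local space to the support of the \(i\)-th reduced state, a space of dimension at most two, which we embed in a qubit. The state \(\rho\) lives on the tensor product of the local supports. The observables can be replaced by their compressions \(P_iA_{i,x}P_i\) to those supports. These are again Hermitian contractions, that is, observables of binary POVMs on the qubit, and they leave \(\tr(\rho W_n)\) unchanged because \(\rho=(\otimes P_i)\rho(\otimes P_i)\).

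Extractability is unchanged by this restriction. Channels on the original space restrict to the supports, and channels on the supports extend by any fixed output off the support, as in the dilation argument of Theorem~\ref{thm:actual-stability}. We therefore obtain a qubit strategy with score \(\beta_Qb\), so Theorem~\ref{thm:qubit-profile} gives \(\Xi_n(\rho)\geq F(b)\). At interior scores \(F(b)>\mathcal L(b)\), since \(F\) is strictly concave on \([1/s,1]\) and agrees with the chord \(\mathcal L\) at both endpoints. Strict concavity holds because \(b\mapsto\sqrt{2b^2-1}\) is strictly concave, as its second derivative \(-2/(2b^2-1)^{3/2}\) shows, and the endpoint values are \(F(1/s)=1/2=\mathcal L(1/s)\) and \(F(1)=1=\mathcal L(1)\). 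This contradicts \(\Xi_n(\rho)=\mathcal L(b)\), so some \(r_i\geq3\). With every \(r_i\geq2\), the product bound \eqref{eq:minimal-dimension} follows, and it holds trivially when some rank is infinite.

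For attainment, take \(\omega_t\) of \eqref{eq:qutrit-attainer} with \(0<t<1\), which satisfies \(\Xi_n=\mathcal L(b)\) by Theorem~\ref{thm:qutrit-attainer}. At party 1, its reduced state is \(t\cdot JI_2J^*/2+(1-t)\proj2\), which has rank \(3\). At every other party \(j\), the reduced state is \(t I/2+(1-t)\rho_j^\psi\), which has rank \(2\). So the product of local ranks equals \(3\cdot2^{n-1}\).

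The main point needing care is the compression step. We must check that the compressed operators are legitimate binary observables on the qubit and that the reduction preserves both the score and \(\Xi_n\). Theorem~\ref{thm:qubit-profile} explicitly allows arbitrary binary POVMs, which is exactly what the compression produces. A minor secondary point is that the biseparable bound \(\beta_B\) applies to product-across-a-cut states of arbitrary dimension, as cited.
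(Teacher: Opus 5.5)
Your proof is correct and follows the paper's argument: strict concavity of \(F\) against its chord \(\mathcal L\), compression to the local supports when every local rank is two so that Theorem~\ref{thm:qubit-profile} applies, and the explicit marginals of \(\omega_t\) for attainment. The one difference is minor. You rule out a rank-one marginal with the biseparable score bound \(\beta\leq\beta_B\). The paper instead notes that a state product across a cut has \(\Xi_n(\rho)\leq1/2<\mathcal L(b)\). Both arguments are immediate.
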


\begin{proof}
On the open interval,
\[
 F''(b)=-(2b^2-1)^{-3/2}<0.
\]
The endpoints of \(F\) are \(F(1/s)=1/2\) and \(F(1)=1\);
their chord is \(\mathcal L\). Therefore
\(F(b)>\mathcal L(b)\) strictly in the interior.
A rank-one local marginal makes the state product across that cut,
so \(\Xi_n(\rho)\leq1/2<\mathcal L(b)\), a contradiction.
If all local ranks were two, then \(\rho\) would lie in the tensor
product of the local supports, since each local complementary projection
has zero expectation and therefore annihilates \(\rho\); compressing each
POVM effect to its local support would then give an all-qubit strategy
with unchanged score and extractability. Extractability is unchanged because
channels restrict to the support, and every channel on the support
extends by a fixed output on its orthogonal complement.
Theorem~\ref{thm:qubit-profile} then contradicts
\(\Xi_n(\rho)=\mathcal L(b)\).
This proves \eqref{eq:minimal-dimension}.
For \(0<t<1\), the first marginal of \(\omega_t\) has a positive
\(tI_2/2\) block and a positive \((1-t)\proj2\) block.
Every other marginal has a positive \(tI_2/2\) contribution.
The local ranks of \(\omega_t\) are therefore exactly \(3,2,\ldots,2\).
\end{proof}

\begin{figure}[!ht]
\centering
\setlength{\unitlength}{1pt}
\BeginAccSupp{method=pdfstringdef,unicode,ActualText={Plot of least Greenberger-Horne-Zeilinger extractability: the qubit curve lies above the unrestricted line between their shared endpoints. See the Figure 1 caption for formulas and dimension classes.}}%
\begin{picture}(280,150)
\qbezier(35,25)(147.5,25)(260,25)
\qbezier(255,27)(257.5,26)(260,25)
\qbezier(255,23)(257.5,24)(260,25)
\qbezier(35,25)(35,80)(35,135)
\qbezier(33,130)(34,132.5)(35,135)
\qbezier(37,130)(36,132.5)(35,135)
\put(35,142){\makebox(0,0)[l]{\small GHZ extractability}}
\put(27,25){\makebox(0,0)[r]{\small \(1/2\)}}
\put(27,125){\makebox(0,0)[r]{\small \(1\)}}
\put(35,12){\makebox(0,0){\small \(1/\sqrt2\)}}
\put(245,12){\makebox(0,0){\small \(1\)}}
\put(265,25){\makebox(0,0)[l]{\small \(b\)}}
\linethickness{0.8pt}
\qbezier(35,25)(35,27.5)(35.633,30)
\qbezier(35.633,30)(36.267,32.5)(37.529,35)
\qbezier(37.529,35)(38.791,37.5)(40.672,40)
\qbezier(40.672,40)(42.553,42.5)(45.04,45)
\qbezier(45.04,45)(47.528,47.5)(50.603,50)
\qbezier(50.603,50)(53.679,52.5)(57.323,55)
\qbezier(57.323,55)(60.967,57.5)(65.156,60)
\qbezier(65.156,60)(69.345,62.5)(74.055,65)
\qbezier(74.055,65)(78.764,67.5)(83.967,70)
\qbezier(83.967,70)(89.171,72.5)(94.841,75)
\qbezier(94.841,75)(100.512,77.5)(106.622,80)
\qbezier(106.622,80)(112.733,82.5)(119.256,85)
\qbezier(119.256,85)(125.779,87.5)(132.689,90)
\qbezier(132.689,90)(139.599,92.5)(146.869,95)
\qbezier(146.869,95)(154.139,97.5)(161.746,100)
\qbezier(161.746,100)(169.353,102.5)(177.273,105)
\qbezier(177.273,105)(185.192,107.5)(193.403,110)
\qbezier(193.403,110)(201.613,112.5)(210.094,115)
\qbezier(210.094,115)(218.574,117.5)(227.305,120)
\qbezier(227.305,120)(236.036,122.5)(245,125)
\qbezier[45](35,25)(140,75)(245,125)
\put(33.5,23.5){\rule{3pt}{3pt}}
\put(243.5,123.5){\rule{3pt}{3pt}}
\put(80,112){\makebox(0,0)[l]{\small Qubits: \(F(b)\)}}
\put(128,51){\makebox(0,0)[l]{\small Unrestricted: \(\mathcal L(b)\)}}
\end{picture}%
\EndAccSupp{}%
\caption{Exact least GHZ extractability at a normalized MABK score
\(b=\beta/\beta_Q\), for every \(n\geq3\). The solid curve
\(F(b)=(1+\sqrt{2b^2-1})/2\) applies when every local system is a
qubit. The dotted affine line
\(\mathcal L(b)=(1+1/\sqrt2)b-1/\sqrt2\) applies with arbitrary local
dimensions and is attained with one qutrit and qubits elsewhere.
The curves coincide at \(b=1/\sqrt2\) and \(b=1\), and are strictly
separated in the interior. Each curve is a minimum over its own
admissible class of strategies.}
\label{fig:tradeoffs}
\end{figure}

Theorem~\ref{thm:minimal-dimension} concerns strategies attaining the least extractability at a score; qubit strategies can attain the same score with greater extractability. Figure~\ref{fig:tradeoffs} shows the strict separation of the two minima.
The unrestricted line is the endpoint chord of the qubit curve,
but its lower-bound proof requires channels that depend only on local sector labels:
choosing a qubit-optimal unitary separately in each joint sector could make a party's channel depend on the other parties' sector labels.
This locality issue was already discussed in \cite{Bardyn2009}
and is addressed by the channel construction in Section~\ref{sec:bound}.

At this minimum product of the local ranks, the original state can also be classified. Passing from the pinched state to the original one requires the coherence calculation of Proposition~\ref{prop:coherent-qutrit} and one system of local coordinates for both branches; the following lemma provides the coordinates, and Theorem~\ref{thm:original-state-stability} uses it again.

\begin{lemma}
\label{lem:branch-alignment}
Let \(n\geq3\), let party \(i\) have local space \(\mathbb C^3\) and
every other party \(\mathbb C^2\), and let \(J'\) embed the qubit of
party \(i\) isometrically into \(\mathbb C^3\), orthogonally to a unit
vector \(\ket e\). If \(\ket g\in\mathbb C^2\) has overlaps of modulus
\(1/\sqrt2\) with both \(Y\)-eigenvectors, as every \(X\)- or
\(Z\)-eigenvector does, then local unitaries, followed by the
reordering of the parties that puts party \(i\) first, map
\(J'\ket{\Phi_n}\) to \(\ket G\) and
\(\ket e_i\otimes\sqrt2\,{}_i\!\braket{g}{\Phi_n}\) to \(\ket B\),
the vectors of Proposition~\ref{prop:coherent-qutrit}.
\end{lemma}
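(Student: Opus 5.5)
First I reorder the parties so that party \(i\) comes first. \(\Phi_n\) is not symmetric under party permutations, because of the signs \(\tau_j\) and the phase \(\delta_n\). Still, after reordering, \(\ket{\Phi_n}\) remains a balanced superposition \((e^{-\mathrm i\delta_n/2}\ket{\tau}_Y+e^{\mathrm i\delta_n/2}\ket{-\tau}_Y)/\sqrt2\) of two locally orthogonal product strings with permuted entries. Local unitaries \(V_j\) on the qubit parties, together with a diagonal phase unitary on one factor, map any such balanced superposition to any other of the same form. So there is a product unitary \(U=U_1\otimes\cdots\otimes U_n\) on the reordered qubits carrying the reordered \(\ket{\Phi_n}\) to \(\ket{\Phi_n}\) in the standard order.

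Next I choose the qutrit unitary. I take any unitary \(W\) on \(\mathbb C^3\) with \(WJ'=JU_1\) and \(W\ket e=\ket2\). This is possible because \(J'\mathbb C^2\perp\ket e\) and \(J\mathbb C^2\perp\ket2\), so \(JU_1J'^{*}\) extended by \(\ket e\mapsto\ket 2\) is unitary. Then \(W\otimes U_2\otimes\cdots\otimes U_n\) maps \(J'\ket{\Phi_n}\) to \(J\ket{\Phi_n}=\ket G\).

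The main step is the second vector. It maps to \(\ket2\otimes(U_2\otimes\cdots\otimes U_n)\sqrt2\,{}_1\!\braket{g}{\Phi_n}\), whereas \(\ket B=\ket2\otimes\sqrt2\,{}_1\!\braket{x+}{\Phi_n}\). I use the freedom left in \(U\): any product unitary fixing \(\ket{\Phi_n}\) may be composed in. Write \(\ket g\) in the \(Y\)-basis at party \(i\) as \(\ket g=(e^{\mathrm i\phi_+}\ket{\tau_i}_Y+e^{\mathrm i\phi_-}\ket{-\tau_i}_Y)/\sqrt2\), using the hypothesis that both overlaps have modulus \(1/\sqrt2\). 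Then \(\sqrt2\,{}_i\!\braket{g}{\Phi_n}\) is a balanced superposition of the two complementary strings on the remaining parties, with relative phase determined by \(\phi_+-\phi_-\) and \(\delta_n\). The vector \(\ket\psi\) has the same structure with its own relative phase, since \(\ket{x+}\) also has equal-modulus \(Y\)-overlaps.

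To match the phases, I use a diagonal unitary \(D_\chi=\operatorname{diag}_Y(e^{\mathrm i\chi},e^{-\mathrm i\chi})\) in the \(Y\)-basis at party \(i\), combined with \(D_{-\chi}\) at a second party. The product \(D_\chi\otimes D_{-\chi}\) fixes \(\ket{\Phi_n}\), because the two strings carry opposite \(Y\)-labels at both parties and the phases cancel. It rotates \(\ket g\) to any other equal-modulus vector up to a global phase, and it shifts the relative phase of the \((n-1)\)-party factor by \(-2\chi\), consistently. I fix \(\chi\) so that the party-\(i\) component of \(U\) sends \(\ket g\) to \(\ket{x+}\) up to a phase. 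Then \({}_1\!\braket{x+}{U\Phi_n}=\) (unitary on the rest) applied to \({}_i\!\braket{g}{\Phi_n}\), which gives exactly \(\ket\psi\). The leftover global phase I absorb into the definition of \(W\ket e\), i.e.\ I set \(W\ket e=\omega\ket2\) for the appropriate \(|\omega|=1\).

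The step I expect to need care is this last phase bookkeeping. The modified \(U\) must still fix \(\Phi_n\), and the choice of \(W\) has to be made after \(U\) is fixed. The general identity \((\bra{u_1}\otimes U_{\mathrm{rest}})\,U\ket\Phi\ldots\) simplifies it: once \(U\Phi_n=\Phi_n\) and \(U_1\ket g\propto\ket{x+}\), we get \(U_{\mathrm{rest}}\,{}_1\!\braket{g}{\Phi_n}\propto{}_1\!\braket{x+}{\Phi_n}\) immediately.
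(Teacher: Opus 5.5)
Your proof is correct and is essentially the paper's argument. Both rely on the same two facts. First, opposite diagonal phases in the $Y$-bases at party $i$ and at one other party fix $\ket{\Phi_n}$ while shifting the relative phase of the $(n-1)$-party factor. Second, the leftover global phase is absorbed at the extra level via $\ket e\mapsto\omega\ket2$.

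The difference is organizational. The paper sends both pairs to the normal form $(\ket{0\cdots0}+\ket{1\cdots1})/\sqrt2$ and $\ket2\otimes\ket{\GHZ_{n-1}}$, then inverts the reference case $i=1$. You map directly to $G,B$. Using $U_{\mathrm{rest}}\,{}_1\!\braket{g}{\Phi}={}_1\!\braket{U_1g}{U\Phi}$, you reduce the second vector to the single-qubit condition $U_1\ket g\propto\ket{x+}$.

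One step is left implicit: your $U_1$, assembled from the $V_j$, sends $Y$-eigenvectors to $Y$-eigenvectors up to phases. You should state this, because it is what makes $U_1\ket g$ an equal-modulus vector, so that a suitable $\chi$ exists.
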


\begin{proof}
By \eqref{eq:target}, the local Schmidt vectors of \(\Phi_n\) are the
\(Y\)-eigenvectors \(\ket{\pm\tau_j}_Y\). Local unitaries sending these
vectors (at party \(i\), their images under \(J'\), with suitable
phases) to \(\ket0,\ket1\), and \(\ket e\) to \(\ket2\), map
\(J'\ket{\Phi_n}\) to
\((\ket{0\cdots0}+\ket{1\cdots1})/\sqrt2\). Since \(\ket g\) has
overlaps of equal modulus with \(\ket{\pm\tau_i}_Y\), they map the
second vector to
\(\ket2_i\otimes(\ket{0\cdots0}+e^{\mathrm i\vartheta}\ket{1\cdots1})/\sqrt2\)
up to an overall phase. Then
\(\operatorname{diag}(1,e^{\mathrm i\vartheta})\) on
\(\operatorname{span}\{\ket0,\ket1\}\) at party \(i\),
\(\operatorname{diag}(1,e^{-\mathrm i\vartheta})\) at one other party
and a phase on \(\ket2\) fix the first vector and remove both phases
of the second. The two resulting vectors keep their form when party
\(i\) is moved to the first position. The pair \(G,B\) is the case
\(i=1\), \(J'=J\), \(\ket e=\ket2\), \(\ket g=\ket{x+}\), so composing
with the inverse steps for that case proves the lemma.
\end{proof}

\begin{theorem}
\label{thm:minimal-equality-state}
Let \(n\geq4\), and let a binary-POVM strategy of score
\(\beta_B<\beta<\beta_Q\) satisfy \(\Xi_n(\rho)=L(\beta)\).
Suppose its local ranks satisfy
\(\prod_i r_i=3\cdot2^{n-1}\).
After a permutation of the parties and local isometries identifying
their supports with \(\mathbb C^3,\mathbb C^2,\ldots,\mathbb C^2\),
the original state is exactly \(\omega_t\) in
\eqref{eq:qutrit-attainer}, where
\(t=(\beta-\beta_B)/(\beta_Q-\beta_B)\).
Conversely, this state attains equality with those local ranks.
The conclusion classifies the state.
\end{theorem}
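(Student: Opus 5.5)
The converse is immediate: by Theorem~\ref{thm:qutrit-attainer}, \(\omega_t\) has \(\Xi_n=L(\beta)\), and by the last paragraph of the proof of Theorem~\ref{thm:minimal-dimension} its local ranks are \(3,2,\ldots,2\). For the forward direction, first apply Lemma~\ref{lem:projective-replacement} with \(\varepsilon=0\). This replaces the POVMs by reflections on the same spaces, with the same score, and leaves \(\rho\) and \(\Xi_n(\rho)\) unchanged. Next, restrict to local supports as in the proof of Theorem~\ref{thm:minimal-dimension}. By that theorem every \(r_i\geq2\) and some \(r_i\geq3\), so the product condition forces exactly one party, say \(i\), to have rank \(3\), and all others rank \(2\). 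Permuting the parties puts \(i\) first, and we work on \(\mathbb C^3\otimes(\mathbb C^2)^{\otimes(n-1)}\).

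On these spaces, take the refined Jordan decomposition and apply Theorem~\ref{thm:refined-equality} to the pinched state. Every joint sector is either a full GHZ sector, which requires every local sector to be two-dimensional and ideal, or a deterministic-party sector, with total weights \(t\) and \(1-t\), both positive. A qubit party of rank two can hold only one two-dimensional sector. If such a party were deterministic in some sector, it would need a one-dimensional sector as well, which is impossible. Moreover, its single sector must be ideal, since a GHZ sector of weight \(t>0\) exists. Hence every qubit party carries one fixed ideal pair, and the deterministic party is always party \(1\). At party \(1\) the space \(\mathbb C^3\) must split as a two-dimensional ideal sector plus a one-dimensional scalar sector spanned by some \(\ket e\); this is the only split compatible with both sector types having positive weight. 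Consequently the pinched state is
\[
t\,J'\proj{\Phi_n}J'^*+(1-t)\proj e\otimes\proj{\chi},
\]
where \(\ket\chi=\sqrt2\,{}_1\!\braket{g}{\Phi_n}\) with \(\ket g\) an \(X\)- or \(Z\)-eigenvector, by \eqref{eq:refined-endpoint}. Here the endpoint sector lies inside the same canonical qubits. Only a single joint sector of each type exists, because the qubit sectors are unique. Lemma~\ref{lem:branch-alignment} then maps the two branch vectors to \(\ket G,\ket B\) by local unitaries.

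It remains to recover the original state from the pinched one. Pinching only at party \(1\) deletes the coherences between \(\operatorname{span}J'\) and \(\ket e\), and the qubit parties have trivial pinching. Hence \(\rho\) is supported on the span of the original blocks. Its support is rank-bounded, since the pinched state has rank two on \(\operatorname{span}\{G,B\}\). Both diagonal blocks are pure, so the positivity of \(\rho\) forces its range to lie in \(\operatorname{span}\{\ket G,\ket B\}\). Thus \(\rho=\omega_{t,c}\) for some \(c\) with \(|c|^2\leq t(1-t)\). Proposition~\ref{prop:coherent-qutrit} gives \(\Xi_n(\omega_{t,c})=(1+\sqrt{t^2+2|c|^2})/2\), and equality with \(L(\beta)=(1+t)/2\) forces \(c=0\), so \(\rho=\omega_t\).

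I expect the main obstacle to be this last step. One must justify that off-diagonal blocks of \(\rho\) between different local sectors can only connect the two pure branch vectors. One must also check that the local isometries of Lemma~\ref{lem:branch-alignment} are simultaneously valid for the unpinched state. I would handle both by the support argument: a positive operator whose pinched diagonal blocks are rank-one projections onto \(\ket G\) and \(\ket B\) has range in their span. This reduces the problem exactly to the family \(\omega_{t,c}\).
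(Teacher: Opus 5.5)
Your overall route matches the paper's proof. It uses the rank count from Theorem~\ref{thm:minimal-dimension}, then Theorem~\ref{thm:refined-equality} to force one ideal two-dimensional sector at each rank-two party and a $2+1$ split at the rank-three party. It then confines the range of $\rho$ to $\operatorname{span}\{G,B\}$ by the kernel argument, aligns coordinates with Lemma~\ref{lem:branch-alignment}, and uses Proposition~\ref{prop:coherent-qutrit} to force $c=0$. The final step you flag as the main obstacle is in fact settled exactly by the kernel argument you give.

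The genuine problem is the order of your first two steps. You apply Lemma~\ref{lem:projective-replacement} on the original local spaces and only afterwards compress to the local supports. This fails for two reasons:
\begin{itemize}
\item That lemma is stated, and proved via compactness and extreme points of the Hermitian contractions, only for finite-dimensional local spaces. The model allows infinite-dimensional ones; only the supports are forced to be finite-dimensional by the rank product.
\item More seriously, even in finite dimension the compression $P_iA_{i,x}P_i$ of a reflection to the support of $\rho_i$ is in general only a Hermitian contraction. Nothing forces the support to be invariant under the score-maximizing reflections, since the maximizer need not be unique.
\end{itemize}
So after your restriction you no longer have a projective strategy on $\mathbb C^3\otimes(\mathbb C^2)^{\otimes(n-1)}$. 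Neither the refined Jordan decomposition nor Theorem~\ref{thm:refined-equality} can be invoked there. The Naimark-dilated version of that theorem would enlarge the local spaces and destroy the dimension count on which your sector analysis rests.

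The repair is to reverse the order, as the paper does. First compress the POVM effects to the supports; this gives binary POVMs on the supports with unchanged score and extractability, since $\rho$ lives on the tensor product of the supports. Read off the ranks $3,2,\ldots,2$. Only then apply Lemma~\ref{lem:projective-replacement} with $\varepsilon=0$ on these finite-dimensional spaces, which gives reflections there with score exactly $\beta$. With that change your argument goes through and coincides with the paper's proof.
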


\begin{proof}
The finite product makes every local support finite-dimensional.
As in the proof of Theorem~\ref{thm:minimal-dimension}, \(\rho\) lies in
the tensor product of these supports. Compressing the effects to the
supports preserves the score and extractability, by the restriction
and extension argument in the proof of
Theorem~\ref{thm:minimal-dimension}.
That theorem implies that every local rank is at least two and that at
least one exceeds two. Equality in their product then forces the
local ranks to be \(3,2,\ldots,2\), after a party permutation.

For this fixed state on the same supports,
Lemma~\ref{lem:projective-replacement} with \(\varepsilon=0\) gives
reflections on these supports with the same score \(\beta\).

Apply Theorem~\ref{thm:refined-equality} to these projective
observables. Since \(0<t<1\), both of its classes have positive
weight. Every qubit support must contain an ideal two-dimensional irreducible sector, which exhausts that support. The qutrit support contains an ideal two-dimensional sector and one common one-dimensional eigenspace. There are consequently exactly two
joint sectors, and the deterministic party in the second sector
is the qutrit party. Their conditional states are pure, with
weights \(t\) and \(1-t\).
Every vector orthogonal to those pure states inside either sector has zero expectation in \(\rho\); since \(\rho\geq0\), it lies in the kernel of \(\rho\), so all matrix elements of \(\rho\) involving such a vector vanish. The original state is therefore supported on
the span of the two branch vectors, with a single complex
off-diagonal entry \(c\) satisfying \(|c|^2\leq t(1-t)\).

It remains to choose one system of local coordinates for both
branches. In the canonical coordinates of the two sectors, the ideal
branch is the embedded \(\ket{\Phi_n}\), and by
\eqref{eq:refined-endpoint} the other branch is a unit vector of the
scalar sector at the qutrit party \(i\) tensored with
\(\sqrt2\,{}_i\!\braket{g_e}{\Phi_n}\), where \(\ket{g_e}\) is an
\(X\)- or \(Z\)-eigenvector. The two joint sectors differ only at party \(i\), so these coordinates agree at the qubit parties. Lemma~\ref{lem:branch-alignment}, applied with \(\ket g=\ket{g_e}\) and with \(\ket e\) a unit vector of the scalar sector, maps
these branches to \(G,B\) of \eqref{eq:coherent-qutrit} by local
unitaries and a reordering of the parties, which do not change
\(\Xi_n\), and so turns \(\rho\) into \(\omega_{t,c}\).

Proposition~\ref{prop:coherent-qutrit} now gives
\[
 \Xi_n(\rho)=\frac{1+\sqrt{t^2+2|c|^2}}2.
\]
Equality with \(L(\beta)=(1+t)/2\) forces \(c=0\).
The converse and the stated local ranks follow from
Theorems~\ref{thm:qutrit-attainer} and~\ref{thm:minimal-dimension}.
\end{proof}

In particular, the states classified in
Theorem~\ref{thm:minimal-equality-state} are mixed states of rank two,
with nonzero eigenvalues \(t\) and \(1-t\).

\subsection{Stability of the original state}

The equality classification has a quantitative form on the same
local dimensions. A small excess above the affine minimum controls
the trace-norm distance of the original state from the attaining
mixture, up to local changes of basis.
For \(n\geq3\), a state \(\rho\) on
\(\mathbb C^3\otimes(\mathbb C^2)^{\otimes(n-1)}\), and
\(0\leq t\leq1\), define
\begin{equation}
 D_t(\rho)\coloneqq
 \inf_{\substack{U_1\in\mathrm U(3)\\
                 U_j\in\mathrm U(2),\ 2\leq j\leq n}}
 \left\|U\rho U^*-\omega_t\right\|_1,
 \qquad U\coloneqq\bigotimes_{j=1}^n U_j.
 \label{eq:local-unitary-distance}
\end{equation}
Here \(\omega_t\) is the state in
\eqref{eq:qutrit-attainer}, and \(\|\cdot\|_1=\tr|\cdot|\) is the trace
norm.

\begin{theorem}
\label{thm:original-state-stability}
For every \(0<\eta<1/2\), there are constants
\(C_\eta,\varepsilon_\eta>0\), independent of \(n\), with the
following property.
Let \(n\geq4\), and let a binary-POVM strategy on
\(\mathbb C^3\otimes(\mathbb C^2)^{\otimes(n-1)}\) have state
\(\rho\) and score \(\beta\). If
\[
 \eta\leq t=
 \frac{\beta-\beta_B}{\beta_Q-\beta_B}\leq1-\eta,
 \qquad
 0\leq\Xi_n(\rho)-L(\beta)
 \leq\varepsilon\leq\varepsilon_\eta,
\]
then
\begin{equation}
 D_t(\rho)\leq C_\eta\sqrt{\varepsilon}.
 \label{eq:original-state-stability}
\end{equation}
\end{theorem}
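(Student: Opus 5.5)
The plan is a quantitative version of the proof of Theorem~\ref{thm:minimal-equality-state}, carried out in four steps.

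\emph{Step 1: reduce to reflections.} We apply Lemma~\ref{lem:projective-replacement} on the given spaces. This replaces the observables by reflections with score \(\beta_*\). The new parameter satisfies \(0\le t_*-t\le2\varepsilon\) and \(0\le\Xi_n(\rho)-L(\beta_*)\le\varepsilon\). Since \(\|\omega_t-\omega_{t_*}\|_1=2|t-t_*|\le4\varepsilon\le4\sqrt{\varepsilon}\) for small \(\varepsilon\), it suffices to prove the bound for the reflection strategy at \(t_*\). For \(\varepsilon_\eta\le\eta/4\), the parameter \(t_*\) stays in \([\eta/2,1-\eta/2]\).

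\emph{Step 2: locate the approximately optimal sectors.} We Jordan-decompose the qutrit as a two-dimensional sector plus a scalar sector, or as finer pieces, and each qubit party as a single two-dimensional sector or two scalars. We then apply Theorem~\ref{thm:actual-stability} to the pinched state. Pinching can only lower \(\Xi_n\), and the extraction bound holds below it, so the theorem applies.

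The delicate point is that Theorem~\ref{thm:actual-stability} gives only \(\varepsilon^{1/4}\) angular control. For the square-root rate I would use the sharper structure available here: the strategy has at most a bounded number of joint sectors of a fixed shape. In the main configuration (one noncommuting pair per qubit party, and the qutrit split as qubit plus scalar) there are exactly two joint sectors, ``all qubit'' and ``scalar at party 1''. Their weights are \(\pi_G,\pi_B\).

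Bounding the score of each sector shows \(\pi_G\gtrsim t_*\) and \(\pi_B\gtrsim1-t_*\). The qubit sector alone has score at most \(\beta_Q\). The scalar sector has score at most \(\beta_B\), by the commuting-pair bound in the proof of Theorem~\ref{thm:qubit-profile}. Degenerate configurations, such as a commuting pair at a qubit party, give score at most \(\beta_B\) on everything. They are therefore excluded once \(t_*\ge\eta/2\), since \(\Xi_n\) would then have to be at least \(L(\beta_*)\), which is strictly above the achievable value by a margin of order \(\eta\). The case of the qutrit carrying no scalar sector reduces to an all-qubit strategy, which is excluded by the strict gap \(F(b)-\mathcal L(b)\ge c(\eta)>0\) of Theorem~\ref{thm:minimal-dimension} on \(t\in[\eta/2,1-\eta/2]\).

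\emph{Step 3: convert the extraction excess into angle and support control, at order \(\varepsilon\) rather than \(\sqrt\varepsilon\).} Within the qubit sector, the excess of the explicit extraction is \(\pi_G\tr(\rho_GT)\) and is at most \(\varepsilon\). I would improve the extraction as in the proof of Theorem~\ref{thm:necessary-equality}, using the identity channel at interior angles. On the qubit-sector part, the conditional state then has score \(\beta_G\) and overlap at least the qubit curve \(F(\beta_G/\beta_Q)\).

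Average the scores: \(\pi_G\beta_G+\pi_B\beta_B'=\beta_*\). The total extracted overlap is at least \(\pi_GF(b_G)+\pi_B/2\), where the scalar sector contributes \(1/2\) via a constant output or the channel of Theorem~\ref{thm:qutrit-attainer}. Combined with strict concavity of \(F\) and the chord comparison, requiring this to be within \(\varepsilon\) of \(L(\beta_*)\) forces three things: \(1-b_G=O_\eta(\varepsilon)\), \(\beta_B-\beta_B'=O_\eta(\varepsilon)\), and \(|\pi_G-t_*|=O_\eta(\varepsilon)\). The forcing comes from \(F'(1)=\infty\) and \(F(b)\ge\mathcal L(b)+c_\eta\) in the interior. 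The factor \(\eta\) enters through \(\pi_G\ge t_*/2\) and \(\pi_B\ge(1-t_*)/2\).

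Next, the score deficits are linear in the weight outside the top eigenvectors, by the spectral formula \eqref{eq:Wblock} together with the gap between \(\mu_1=\beta_Q\) and the other eigenvalues, which is bounded below at near-ideal angles. Therefore \(\rho_G\) has fidelity \(1-O(\varepsilon)\) with an eigenvector that is \(O(\sqrt\varepsilon)\)-close in trace norm to the locally rotated \(\Phi_n\). Likewise \(\rho_B\) is close to the kernel vector \eqref{eq:refined-endpoint}. Here the angles \(\theta_j-\pi/4=O(\sqrt\varepsilon)\) enter only through the vector, not through a local unitary ambiguity. Lemma~\ref{lem:branch-alignment} aligns the two sectors to \(G,B\). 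Trace-norm errors between a state and its pure approximant are \(O(\sqrt{\varepsilon})\) by Fuchs--van de Graaf.

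\emph{Step 4: control the coherence in the original (unpinched) state.} After alignment, \(\rho\) is within \(O_\eta(\sqrt\varepsilon)\) of \(\omega_{t,c}\), because off-block elements involving the small orthogonal components are bounded by Cauchy--Schwarz, again \(O(\sqrt\varepsilon)\). Extractability is Lipschitz in trace norm, but only at rate \(\sqrt\varepsilon\), which is too weak directly. So instead I would apply the explicit Kraus channel \eqref{eq:coherent-channel} to \(\rho\) itself. Its overlap is linear in \(\rho\) and equals \((1+d)/2\) on \(\omega_{t,c}\). Expanding \(\rho\) as \(\omega_{t,c}\) plus first-order cross terms with the \(O(\sqrt\varepsilon)\) error vectors plus an \(O(\varepsilon)\) remainder, the cross terms must be shown to contribute at most \(O(\varepsilon)\) to the overlap. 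They pair error vectors with components of \(G\) or \(B\) that have zero overlap with the target after the channel; this tangency must be checked.

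Granting this, we get \(d-t\le O_\eta(\varepsilon)\), hence \(|c|^2=O_\eta(\varepsilon)\), and \eqref{eq:coherent-distance} gives \(\|\omega_{t,c}-\omega_t\|_1=O_\eta(\sqrt\varepsilon)\). The triangle inequality then gives \eqref{eq:original-state-stability}, with constants depending only on \(\eta\) through the gaps, because \(c_*\), the spectral gap and the curve \(F\) are all \(n\)-independent.

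The main obstacle is the first-order cancellation in Step 4, together with keeping the Step 3 forcing linear in \(\varepsilon\) (not \(\sqrt\varepsilon\)) uniformly in \(n\).
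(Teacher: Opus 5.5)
\textbf{Gap in Step~3.} Your Step~3 depends on the claim that a single product extraction has overlap at least $\pi_GF(b_G)+\pi_B/2$. This is not justified.

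First, the identity channels of Theorem~\ref{thm:necessary-equality} give $\bra{\Phi_n}\rho_G\ket{\Phi_n}$, not $F(b_G)$. To reach $F(b_G)$, every party must apply the product unitary of Theorem~\ref{thm:qubit-profile} that takes the top eigenvector of the qubit sector to GHZ. Parties $2,\dots,n$ cannot see party~1's sector, so the same unitaries act on the scalar branch. Whatever party~1 outputs there, the contribution is $\pi_B/2$ times the largest eigenvalue of the rotated rest state compressed to $\operatorname{span}\{\ket{0^{n-1}},\ket{1^{n-1}}\}$. This equals $\pi_B/2$ only if the rest state is already the aligned $(n-1)$-party GHZ vector, and otherwise it can be much smaller. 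A constant output gives $1/2$ only if all parties output constants. The channel of Theorem~\ref{thm:qutrit-attainer} gives $1/2$ only on $\ket B$ in aligned coordinates. Both presuppose the alignment you are trying to derive.

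What you can actually prove is $\Xi_n(\rho)\geq\pi_GF(b_G)$, and that forces nothing linear. Take $\varepsilon=0$, $\pi_G=\pi_B=1/2$, $b_G=0.9$ and $b_B'=1/s$. Then $b_*=\pi_Gb_G+\pi_Bb_B'$ gives $\pi_GF(b_G)\approx0.447<\mathcal L(b_*)\approx0.665$. Consequently $1-b_G=O(\varepsilon)$, $b_B'\approx1/s$ and $|\pi_G-t_*|=O(\varepsilon)$ are unproved, and so are the linear angle and population bounds you derive from them. Two smaller errors:
\begin{enumerate}
 \item Score bounds alone give $\pi_G\geq t_*$ but no lower bound on $\pi_B$.
 \item $F'(1)=1$, not $\infty$. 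The linear behaviour at $b=1$comes from $F'(1)<a$.
\end{enumerate}

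\textbf{What the paper does instead.} The weak bound $\Xi_n(\rho)\ge\pi_GF(b_G)$ does give what is needed: through the strict gap $F>\mathcal L$, a constant lower bound $q_E\geq m_\eta-\varepsilon$ on the scalar-sector weight. The idea you are missing is to feed this sector into Theorem~\ref{thm:uniform-gap}. There $x_1=1$, so $\mathsf P_2=1-\prod_{j\geq2}(1-x_j)$ is of first order. The bound $q_Ec_*\mathsf P_2\leq\varepsilon$ then gives $\sum_{j\geq2}v_j=O_\eta(\varepsilon)$ uniformly in $n$. This is how the paper avoids the $\varepsilon^{1/4}$ barrier.

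With the rest angles controlled, the paper compares $T$ with its axis version (rest angles ideal, and $v_1$ kept away from $\kappa$ by the score). This bounds the weight of $\rho$ outside $\operatorname{span}\{G,B\}$ by $O(\varepsilon)$. Only at that point does the extraction $K_0=\proj G+\frac12\proj B$ become usable; this is your ``$1/2$ on the scalar branch''. The score identity then gives $|p_G-t|=O(\varepsilon)$.

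\textbf{Step~4.} No tangency expansion is needed. In the Kraus family, write $\ket{u_1}$ for the $\ket{e_1f_1}$ component of $G$, and choose $\zeta$ as the conjugate of $\bra{u_1}\rho\ket B$. Then $\tr[\rho(K_\zeta-K_0)]\geq|\zeta|^2/2$ follows exactly from positivity. Hence $|c|\leq2\sqrt{\Xi_n(\rho)-\tr(\rho K_0)}+\sqrt{1-\tr(P\rho)}$ with $P=\proj G+\proj B$.
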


\begin{remark}
Let \(\overline W_n\) denote \(W_n\) with the two inputs interchanged at
every party. By \eqref{eq:bell},
\(\overline W_n=\tfrac12\overline W_{n-1}\otimes(A_{n,0}+A_{n,1})
+\tfrac12W_{n-1}\otimes(A_{n,0}-A_{n,1})\), and induction gives
\[
 W_n+\mathrm i\overline W_n
 =2\Bigl(\frac{1+\mathrm i}2\Bigr)^{n-1}(A_{1,0}+\mathrm iA_{1,1})
  \otimes\bigotimes_{j=2}^n(A_{j,0}-\mathrm iA_{j,1}).
\]
Since \(W_n\) and \(\overline W_n\) are Hermitian, \(W_n\) is the
Hermitian part of this product. Exchanging the two inputs of a party
turns \(A_{j,0}-\mathrm iA_{j,1}\) into
\(-\mathrm i(A_{j,0}+\mathrm iA_{j,1})\), and
\(A_{j,0}+\mathrm iA_{j,1}\) into \(\mathrm i(A_{j,0}-\mathrm iA_{j,1})\).
Hence a permutation of the parties, combined with exchanging the inputs
of the parties moved to and from the first position, leaves \(W_n\)
unchanged. Such relabelings leave the score unchanged, act on the state
only by reordering its tensor factors, and leave \(\Xi_n\) unchanged, since
\(G_n\) and the class of product channels are invariant under permutations
of the parties. Theorem~\ref{thm:original-state-stability} therefore holds
with the qutrit at any party, with \(\omega_t\) and \(D_t\) permuted
accordingly.
\end{remark}

The estimate applies to the state before local pinching.
The proof is given in Appendix~\ref{app:original-state-stability}.
The strict gap between the qubit curve and the affine line forces
a positive weight on the scalar sector at the qutrit party.
In that sector the qutrit party has \(x_1=1\), so \(\mathsf P_2\) is of first order in the remaining \(x_j\), and with this weight the spectral bound of Theorem~\ref{thm:uniform-gap} becomes a bound on their sum that is linear in \(\varepsilon\); without a coordinate equal to \(1\), \(\mathsf P_2\) can be of second order, which is why the single-sector angle bound after \eqref{eq:single-block-stability} has only fourth-root order. Controlling this sum
bounds the weight outside the two attaining branches uniformly
in \(n\). A change to the first party's extraction channel then
improves the overlap directly in the presence of coherence,
and hence bounds that coherence in the original state.
For \(\varepsilon\) below an explicit threshold, the proof also bounds the
Euclidean norm of the deviations of the canonical angles from
\(\pi/4\) by a constant times \(\sqrt\varepsilon\), with the constant
independent of \(n\); for binary POVMs these are the angles of any
projective replacement as in Lemma~\ref{lem:projective-replacement}
(end of Appendix~\ref{app:original-state-stability}).

Under the same hypotheses, the proof also gives the more precise
estimate
\begin{equation}
 D_t(\rho)\leq
 \min\{2,\,A_\eta\sqrt{\varepsilon}+B_\eta\varepsilon\},
 \label{eq:original-state-stability-separated}
\end{equation}
with explicit constants \(A_\eta,B_\eta>0\) independent of \(n\).
Their values are given at the end of the proof in
Appendix~\ref{app:original-state-stability}.
For example, at \(\eta=1/4\)
one has \(\varepsilon_\eta\approx8\times10^{-7}\),
\(A_\eta\approx1\times10^{5}\) and \(B_\eta\approx3\times10^{9}\), so
\eqref{eq:original-state-stability-separated} is informative for
\(\varepsilon\) below about \(2\times10^{-10}\). The content of the
theorem is the exponent \(1/2\) and the uniformity in \(n\).

The coherent family also determines the best possible exponent.
At the equal mixture, its spectrum gives the distance after
optimizing over local unitaries.

\begin{proposition}
\label{prop:stability-exponent}
Let \(n\geq3\) and \(c\in\mathbb C\) with \(|c|\leq1/2\).
For the state \(\omega_{1/2,c}\) in
\eqref{eq:coherent-qutrit}, put
\[
 \delta_c\coloneqq\Xi_n(\omega_{1/2,c})-\frac34.
\]
Then
\begin{equation}
 D_{1/2}(\omega_{1/2,c})
 =2|c|
 =2\sqrt{\delta_c+2\delta_c^2}.
 \label{eq:stability-exponent}
\end{equation}
For every \(0<\eta<1/2\), the exponent \(1/2\) in
\eqref{eq:original-state-stability} cannot be replaced by any
larger exponent, even with \(n\) fixed.
\end{proposition}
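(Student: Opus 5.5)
The plan has two parts: compute \(D_{1/2}(\omega_{1/2,c})\) exactly, then let \(c\to0\) to rule out larger exponents.

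\textbf{The exact distance.} Proposition~\ref{prop:coherent-qutrit} with \(t=1/2\) gives \(\Xi_n(\omega_{1/2,c})=(1+\sqrt{1/4+2|c|^2})/2\). Hence
\[
 \sqrt{1/4+2|c|^2}=1/2+2\delta_c,
\]
and squaring gives \(|c|^2=\delta_c+2\delta_c^2\), as in \eqref{eq:coherent-distance}. This yields the second equality in \eqref{eq:stability-exponent}.

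The upper bound is immediate: with \(U=I\), the difference \(\omega_{1/2,c}-\omega_{1/2}\) has eigenvalues \(\pm|c|\) on \(\operatorname{span}\{G,B\}\), so \(D_{1/2}\leq2|c|\).

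For the lower bound I will use spectra. Local unitary conjugation preserves the spectrum, and for Hermitian \(P,Q\) the trace distance is at least the \(\ell^1\) distance between their ordered eigenvalue vectors (Mirsky/Lidskii). On \(\operatorname{span}\{G,B\}\) the matrix of \(\omega_{1/2,c}\) is \(\begin{pmatrix}1/2&c\\\overline c&1/2\end{pmatrix}\), with eigenvalues \(1/2\pm|c|\) and zeros elsewhere. The spectrum of \(\omega_{1/2}\) is \((1/2,1/2,0,\ldots)\). So for every product unitary \(U\),
\[
 \|U\omega_{1/2,c}U^*-\omega_{1/2}\|_1\geq|c|+|c|=2|c|.
\]
Taking the infimum gives \(D_{1/2}=2|c|\). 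This is the one step needing a nontrivial fact, and the eigenvalue perturbation inequality is standard, so I do not expect an obstacle here.

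\textbf{Optimality of the exponent.} Fix \(\eta\in(0,1/2)\), so that \(t=1/2\in[\eta,1-\eta]\). Fix \(n\geq4\) and use the fixed observables of Theorem~\ref{thm:qutrit-attainer}. By Proposition~\ref{prop:coherent-qutrit} these give score \(\beta\) with \(t=1/2\) for every \(c\), and \(\delta_c=\Xi_n-L(\beta)\). As \(c\to0\) we have \(\delta_c\to0\) continuously, with \(\delta_c\sim|c|^2\), so \(D_{1/2}=2\sqrt{\delta_c+2\delta_c^2}\sim2\sqrt{\delta_c}\).

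Suppose a bound \(D_t\leq C\varepsilon^{\gamma}\) held with \(\gamma>1/2\) for all \(\varepsilon\leq\varepsilon_0\). Taking \(\varepsilon=\delta_c\) would give \(2\sqrt{\delta_c}\lesssim C\delta_c^{\gamma}\). Since \(\delta_c>0\) for \(c\neq0\) and \(\delta_c\to0\), this fails for small \(c\). The argument holds even with \(n\) fixed, which proves the claim.
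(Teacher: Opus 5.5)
Your proof is correct. Apart from one step it is the paper's argument: the upper bound from the identity unitaries, the relation \(|c|^2=\delta_c+2\delta_c^2\) from Proposition~\ref{prop:coherent-qutrit} at \(t=1/2\), and the exponent argument with \(c\to0\) at \(t=1/2\in[\eta,1-\eta]\) all match. Restricting to \(n\geq4\) there is appropriate, since the stability theorem is stated for \(n\geq4\). Made precise, that final step reads \(2\sqrt{\delta_c}\leq D_{1/2}\leq C\delta_c^{\gamma}\), so \(\delta_c^{1/2-\gamma}\leq C/2\), which fails as \(\delta_c\to0^+\).

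The difference is the lower bound \(D_{1/2}\geq2|c|\).

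\emph{Your route.} You invoke the Lidskii--Mirsky inequality \(\|X-Y\|_1\geq\sum_i|\lambda_i^\downarrow(X)-\lambda_i^\downarrow(Y)|\). You then compare the full ordered spectra \((1/2+|c|,1/2-|c|,0,\ldots)\) and \((1/2,1/2,0,\ldots)\).

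\emph{The paper's route.} It uses a single test projection. Let \(Q\) project onto a top eigenvector of \(\sigma=U\omega_{1/2,c}U^*\). Then \(\tr[Q(\sigma-\omega_{1/2})]\geq(1/2+|c|)-\|\omega_{1/2}\|_\infty=|c|\). Finally, \(\|\Delta\|_1=2\tr\Delta_+\geq2\tr(Q\Delta)\) for any traceless Hermitian \(\Delta\).

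\emph{Comparison.} The paper's argument is self-contained. It uses only the largest eigenvalue of \(\sigma\) and the operator norm of the target, and the same test-projection device is reused later, with a local projection, to show \(D_t(\rho_p)\geq1-2t\) near the biseparable endpoint. Your argument is a one-line appeal to a standard but heavier majorization theorem. That theorem would still give a sharp bound if the spectral discrepancy were spread over several eigenvalues rather than concentrated in one. Both arguments rely only on unitary invariance of the spectrum, so both bound the distance under arbitrary unitaries, not just product ones.
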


\begin{proof}
The eigenvalues of \(\omega_{1/2,c}\) on the span of \(G,B\)
are \(1/2+|c|\) and \(1/2-|c|\), and the state vanishes
on its orthogonal complement. Also,
\(\|\omega_{1/2}\|_\infty=1/2\).
Let \(U\) be any product unitary, set
\(\sigma\coloneqq U\omega_{1/2,c}U^*\), and let \(Q\) be a rank-one
projection onto an eigenvector of \(\sigma\) with eigenvalue
\(1/2+|c|\). Then
\[
 \tr\!\left[Q(\sigma-\omega_{1/2})\right]\geq |c|.
\]
For a Hermitian trace-zero operator \(\Delta\), with positive part \(\Delta_+\),
\(\|\Delta\|_1=2\tr(\Delta_+)\geq2\tr(Q\Delta)\). Hence
\[
 \|\sigma-\omega_{1/2}\|_1\geq2|c|.
\]
The identity unitaries attain \(2|c|\), since the difference has
eigenvalues \(|c|,-|c|\) on the span of \(G,B\). This proves
the first equality. Equation~\eqref{eq:coherent-extractability}
gives
\[
 \delta_c=
 \frac{\sqrt{1/4+2|c|^2}-1/2}{2},
 \qquad
 |c|^2=\delta_c+2\delta_c^2,
\]
which proves the second; the identity for \(|c|^2\) is the case
\(t=1/2\) of \eqref{eq:coherent-distance}.
The parameter \(t=1/2\) belongs to every interval
\([\eta,1-\eta]\) under consideration. As nonzero \(c\) tends
to zero, \(\delta_c\) tends to zero and, for any exponent \(\alpha>1/2\),
\[
 \frac{D_{1/2}(\omega_{1/2,c})}{\delta_c^\alpha}
 =2\delta_c^{1/2-\alpha}\sqrt{1+2\delta_c}
 \longrightarrow\infty.
\]
This excludes every exponent \(\alpha>1/2\).
\end{proof}

The lower cutoff \(t\geq\eta\) in Theorem~\ref{thm:original-state-stability} is necessary even if the endpoint itself is excluded.
For \(0\leq t<1/2\), put \(b\coloneqq1/s+\kappa t\) and \(p\coloneqq F(b)\),
and embed the qubit attainer \(\rho_p\) of \eqref{eq:qubit-attainer},
with its observables, at the first party, giving both observables the
value \(+1\) on \(\ket2\). By the proof of Theorem~\ref{thm:qubit-profile},
its score is \(\beta_Qb\), and by the restriction and extension argument
in the proof of Theorem~\ref{thm:minimal-dimension} its extractability is
\(p\). Hence its excess \(F(b)-(1+t)/2\) vanishes at \(t=0\) and tends to
zero as \(t\downarrow0\).
Its first marginal has operator norm \(1/2\), whereas
\(\omega_t\) has first-party weight \(1-t\) on \(\ket2\).
Testing the first-party projection onto \(\ket2\) after any local unitary, as in the proof of
Proposition~\ref{prop:stability-exponent}, gives
\[
 D_t(\rho_p)\geq 2\bigl((1-t)-1/2\bigr)=1-2t.
\]
Thus the conclusion fails at \(t=0\), and no fixed positive constants
\(C,\varepsilon_0\) give \(D_t(\rho)\leq C\sqrt\varepsilon\) for all
\(0<t\leq1-\eta\), even with \(n\) fixed.
In the proof in Appendix~\ref{app:original-state-stability}, the upper cutoff \(t\leq1-\eta\) is used in two places: together with the lower cutoff, it bounds \(F(b)-L(\beta)\) below by a positive constant that depends only on \(\eta\), which gives the lower bound \eqref{eq:state-scalar-weight} on the weight of the scalar sector, and in the passage from binary POVMs to reflections it gives, together with \(\varepsilon\leq\eta/4\) and Lemma~\ref{lem:projective-replacement}, \(t_*\leq1-\eta/2\), so that the projective estimate applies with interior parameter \(\eta/2\).
With the observables of Theorem~\ref{thm:qutrit-attainer} held fixed, Proposition~\ref{prop:fixed-observable-stability} in Appendix~\ref{app:original-state-stability} bounds \(\|\rho-\omega_t\|_1\) itself for every \(n\geq3\) and \(0\leq t\leq1\).

The near-minimality hypothesis concerns actual extractability.
For the fixed observables of Theorem~\ref{thm:qutrit-attainer},
all outcome probabilities of \(\omega_{t,c}\) are independent
of \(c\). Every first-party measurement effect is block diagonal
in the decomposition into its qubit and scalar sectors, so
every joint outcome effect has zero matrix element between
\(G\) and \(B\). Proposition~\ref{prop:coherent-qutrit} shows
that extractability nevertheless increases with \(|c|\).
These statistics alone therefore do not determine the
extractability excess assumed in
Theorem~\ref{thm:original-state-stability}.

\section{Discussion}
\label{sec:discussion}

Between the biseparable bound and the quantum maximum, the least GHZ
extractability consistent with an MABK score is affine for every
\(n\geq3\).
The affine bound holds for normal states on tensor products of arbitrary local
dimension and for binary POVMs, and it is attained.
The proof resolves the range \(n\geq6\) left open by the analysis of
Cao et al.~\cite{Cao2026} and gives one analytic argument from four
parties onward.
For \(n\geq4\), it strengthens the operator inequality behind the
extraction bound to a spectral bound whose constant is independent of
\(n\) (Theorem~\ref{thm:uniform-gap}).

For finite-dimensional projective strategies with \(n\geq4\) whose
actual extractability is close to the affine minimum, the spectral bound
controls the distribution of the local measurement angles, with
constants independent of \(n\).
At exact equality, after local pinching and its refinement into irreducible sectors, every joint sector of positive probability is of one of two kinds.
In the first, the conditional state is a locally embedded GHZ state
and the local measurements are ideal.
In the second, one party is deterministic and the other \(n-1\)
parties share a locally embedded GHZ state with ideal measurements.
The score determines the total weights of the two kinds.
For finite-dimensional binary POVMs, these necessary statements about
the pinched sectors concern the Naimark dilation of
Appendix~\ref{sec:lift}.
This sector classification describes the locally pinched state.
For \(n\geq4\), at interior scores and the minimum product of the local ranks, the
original state is also fixed up to local isometries and a party
permutation: it is the incoherent attaining mixture
constructed with one qutrit.

The dimension comparison shows how little additional local space is
needed to attain the unrestricted bound.
Qubit strategies have a strictly larger minimum at every interior
score, whereas one qutrit and qubits elsewhere attain the affine line
with the minimum product of the local ranks.
The derivation of the qubit minimum combines the spectral decomposition
of Scarani and Gisin~\cite{ScaraniGisin2001} and the overlap argument of
Bardyn et al.~\cite{Bardyn2009} with explicit multipartite attaining families.
The exact extractability calculation for the coherent family shows
that every nonzero coherence between the two branches of the attaining
mixture raises
the extractability above the affine minimum while leaving the score of
the fixed measurements unchanged.
It also gives an exact trace-norm distance to the incoherent mixture
within this coherent family.

For \(n\geq4\), on one qutrit and qubits elsewhere, the original state
is stable near equality.
Its trace-norm distance from the attaining mixture of the same score,
up to local unitaries, is bounded by an explicit constant times the square root
of the extractability excess.
This holds, with a constant independent of \(n\), whenever the score stays away from both endpoints by a fixed fraction of their difference. For arbitrary measurements on these spaces the restriction near the biseparable bound is necessary.
Coherent perturbations show that the exponent \(1/2\) is optimal.
As with the equality results above, the hypothesis concerns the actual extractability, which the Bell
statistics do not determine.

The following problems remain open: a classification of the original
state, including its coherences, at larger local ranks; stability at
larger local ranks, where additional sectors and their coherences must
be included; an extension of the structural conclusions for arbitrary
measurements beyond the present setting of finite dimension and
\(n\geq4\); whether stability for arbitrary measurements on one qutrit
and qubits elsewhere requires the restriction near the quantum maximum;
and whether the angle bound \(O(\varepsilon^{1/4})\) after
\eqref{eq:single-block-stability} is optimal in terms of the actual
excess \(\Xi_n-L\).

\appendix

\section{Canonical target and local reductions}
\label{app:canonical}

The proof in Section~\ref{sec:bound} uses one fixed GHZ target throughout the angle cube. We verify its relation to the recursive target of \cite{Cao2026}, prove the covariance used to reduce the angles, and supply the local reductions needed for arbitrary local dimension.

\subsection{The target phase}

The target \eqref{eq:target} is the recursive target of \cite{Cao2026};
the superscripted vectors \(\ket{\Psi_n^+}\) and
\(\ket{\Phi_n^-}\) below follow its notation.
For completeness, start with
\(\ket{\Psi_1^+}\coloneqq\ket0\), \(\ket{\Phi_1^-}\coloneqq\ket1\), and define
\begin{align*}
 \ket{\Psi_n^+}
 &\coloneqq\frac{\ket{\Psi_{n-1}^+}\ket1+
               \ket{\Phi_{n-1}^-}\ket0}{\sqrt2},\\
 \ket{\Phi_n^-}
 &\coloneqq\frac{\ket{\Psi_{n-1}^+}\ket0-
               \ket{\Phi_{n-1}^-}\ket1}{\sqrt2}.
\end{align*}
The coefficient pairs of \((\ket{\Psi_n^+},\ket{\Phi_n^-})\) on \(\ket{\tau}_Y\) are
\((1,-\mathrm i)/\sqrt2\) for odd \(n\), and
\((-\mathrm i,1)/\sqrt2\) for even \(n\); the coefficients on
\(\ket{-\tau}_Y\) are their conjugates.
For the step from \(n-1\) to \(n\), write
\(\ket0=(\ket{\tau_n}_Y+\ket{-\tau_n}_Y)/\sqrt2\) and
\(\ket1=-\mathrm i\tau_n(\ket{\tau_n}_Y-\ket{-\tau_n}_Y)/\sqrt2\),
and let \(\tau'\coloneqq(\tau_1,\ldots,\tau_{n-1})\). Suppose that
\(\ket{\Psi_{n-1}^+}=\alpha\ket{\tau'}_Y+\overline\alpha\ket{-\tau'}_Y\) and
\(\ket{\Phi_{n-1}^-}=\gamma\ket{\tau'}_Y+\overline\gamma\ket{-\tau'}_Y\),
where \((\alpha,\gamma)\) is the stated pair for \(n-1\).
For either parity of \(n\), this pair satisfies \(\gamma=-\mathrm i\tau_n\alpha\).
In the recursion, the mixed strings \(\ket{\tau'}_Y\ket{-\tau_n}_Y\) and
\(\ket{-\tau'}_Y\ket{\tau_n}_Y\) therefore have coefficients proportional to
\(\mathrm i\tau_n\alpha+\gamma=0\) in \(\ket{\Psi_n^+}\) and to
\(\alpha-\mathrm i\tau_n\gamma=0\) in \(\ket{\Phi_n^-}\), or to their conjugates.
The coefficients on \(\ket{\tau}_Y\) are
\((-\mathrm i\tau_n\alpha+\gamma)/2=-\mathrm i\tau_n\alpha\) and
\((\alpha+\mathrm i\tau_n\gamma)/2=\alpha\), and those on \(\ket{-\tau}_Y\) are
their conjugates. Since \(\tau_n=(-1)^n\), the new pair
\((-\mathrm i\tau_n\alpha,\alpha)\) is \((-\mathrm i,1)/\sqrt2\) for even \(n\)
and \((1,-\mathrm i)/\sqrt2\) for odd \(n\). The case \(n=1\) is the
two expansions above with \(\tau_1=1\). Thus
\(\cos(\pi/8)\ket{\Psi_n^+}+\sin(\pi/8)\ket{\Phi_n^-}\), which is the target of \cite[Supplemental A.3]{Cao2026},
equals \eqref{eq:target}.

\subsection{Covariance under angle reflections}
\label{app:covariance}

The extraction estimate is proved on the reduced angle cube.
The following conjugations transport it to every canonical measurement
pair while preserving the target and the local form of the channels.
These are the symmetries under angle reflections of
\cite[App.~B]{Kaniewski2016} for \(n\leq3\) and
\cite[Supplemental A.4, Observation 1]{Cao2026} in general, written in
the present phase convention.

For a party \(i\), let \(S_i\) have \(X\) on every site \(j\ne i\)
and \((X+\tau_i Z)/s\) at \(i\). It is a tensor product of Hermitian
unitaries. The identities
\[
 S_i\ket{\tau}_Y=e^{\mathrm i\delta_n}\ket{-\tau}_Y,\qquad
 S_i\ket{-\tau}_Y=e^{-\mathrm i\delta_n}\ket{\tau}_Y
\]
show that it stabilizes \(\ket{\Phi_n}\). Indeed, the first phase is
\(\mathrm i^{\,n-1}(\prod_j\tau_j)e^{\mathrm i\pi/4}
=e^{\mathrm i\delta_n}\), independent of \(i\).

Let \(\theta'\) differ from \(\theta\) only at
\(\theta_i'\coloneqq\pi/2-\theta_i\). The coefficient \(g\) is unchanged.
Conjugation by \((X+\tau_i Z)/s\) exchanges \(X\) and \(\tau_i Z\),
while conjugation by \(X\) on another site preserves its dephasing
Pauli up to sign. Signs cancel in the channel, so
\[
 S_i\Lambda_\theta(B)S_i
 =\Lambda_{\theta'}(S_iBS_i),
 \qquad
 S_iK(\theta)S_i=K(\theta'),
\]
where \(\Lambda_\theta\coloneqq\bigotimes_j\Lambda_{\theta_j}\).
At \(\theta_i=\pi/4\), the channel is the identity, so either choice
of dephasing axis gives the same map.

For the Bell operator, \eqref{eq:Wblock} remains valid on the full cube
if the factors \(\sqrt{b_j}\) in \(x,y\) are interpreted as the signed
quantities \((\cos\theta_j-\sin\theta_j)/s\), since the induction in the
proof of Lemma~\ref{lem:blocks} holds for either sign of
\(\cos\theta_j-\sin\theta_j\).
Reflection changes \(y\) to \(-y\) if \(i\in D\), and \(x\) to \(-x\)
if \(i\in H\). In the corresponding \(Y\) block, the lower off-diagonal
entry of \(S_i\) is \(\overline\zeta\) in the first case and
\(\mathrm i\overline\zeta\) in the second. Conjugation gives precisely
those changes. Therefore
\[
 S_iW_n(\theta)S_i=W_n(\theta'),\qquad
 S_iT(\theta)S_i=T(\theta').
\]
Hence repeated reflections extend positivity from \([0,\pi/4]^n\) to the full cube.

\subsection{Local reductions and normal states}
\label{sec:lift}

For completeness, we verify the common embedding of the two binary
measurements and the passage from finite-dimensional states to normal states on tensor products. The finite-dimensional extraction itself was assembled in Section~\ref{sec:bound}.

For a Hermitian contraction \(A_x\) on a local Hilbert space
\(\mathcal H\), define on \(\mathcal H\oplus\mathcal H\)
\[
 \widetilde A_x\coloneqq
 \begin{pmatrix}
 A_x&\sqrt{I-A_x^2}\\
 \sqrt{I-A_x^2}&-A_x
 \end{pmatrix}.
\]
This is a Hermitian reflection. The same embedding
\(V\psi\coloneqq(\psi,0)\) works for both inputs and satisfies
\(V^\ast\widetilde A_xV=A_x\).
Tensoring the local embeddings preserves every Bell correlator.
Any local extraction on the embedded state composes with this
embedding to give a permitted extraction on the original state.

In finite dimension, Jordan's lemma decomposes the local space into common invariant subspaces of the two reflections, of dimensions at most two. On a two-dimensional block on which the two reflections do not commute, their sum and difference
anticommute and can be oriented as positive multiples of \(X,Z\).
This gives \eqref{eq:canonical}, with the alternating sign absorbed in
the local orientation of \(Z\).
A one-dimensional block embeds into a qubit supported on a common
eigenvector of an endpoint pair: \(\theta=0\) for equal observable
values, and \(\theta=\pi/2\) for opposite values.

To pass to arbitrary local dimension, let \(\rho\) be a normal state on the
tensor product of local spaces of arbitrary dimension. Each local marginal
\(\rho_j\) is trace class, so its support is a separable Hilbert space; let
\(P_{j,m}\) be finite-rank spectral projections of \(\rho_j\) with
\(\tr(P_{j,m}\rho_j)\to1\), and put \(P_m\coloneqq\bigotimes_jP_{j,m}\). Since
\(I-\bigotimes_jP_{j,m}\leq\sum_j(I-P_{j,m})\), we have \(\tr(P_m\rho)\to1\),
and the gentle-measurement lemma gives
\[
 \rho_m\coloneqq\frac{P_m\rho P_m}{\tr(P_m\rho)}
 \longrightarrow\rho
 \quad\hbox{in trace norm}.
\]
Compress the two local POVMs to these finite supports. They remain
binary POVMs, and their Bell score equals
\(\beta_m\coloneqq\tr(\rho_mW_n)\to\beta\), because \(W_n\) is bounded.
Apply the finite-dimensional bound in its globally valid form,
including the floor \(1/2\).

Extend each resulting local channel to the original space by testing
\(P_{j,m}\), using the finite-support channel on success, and preparing
a fixed qubit on failure. This is trace-preserving. The pullback of the target projector through this extraction lies between \(0\) and \(I\), so the overlap changes by at most
\(\|\rho-\rho_m\|_1\) when \(\rho_m\) is replaced by \(\rho\).
Taking the supremum over local channels and then the limit proves
\eqref{eq:main}. The extension is to normal states in the tensor-product model.

\section{Scalar positivity estimates}
\label{app:positivity}

This appendix supplies the product estimate and the remaining partition
cases used in Section~\ref{sec:bound}. Throughout \(n\geq4\), the
variables \(v_j\) lie in \([0,\kappa]\), and complement symmetry
allows \(h\leq d\).

\subsection{Product estimates}
\label{app:products}

We prove the three inequalities in Lemma~\ref{lem:products}; their
common dependence on \(A_J\) is what permits the interpolation
\eqref{eq:interpolation}.

\begin{proof}[Proof of Lemma~\ref{lem:products}]
First,
\[
 q_j-A_j=v_j(\kappa-v_j)\geq0.
\]
For \(c_j\coloneqq q_j(1-v_j)\), one has \(c_j\leq A_j\). If \(v_j>0\), then
\[
 \frac{1-c_j}{1-A_j}
 =\frac{a+1-av_j}{2-v_j}\leq\frac{a+1}{2}=r.
\]
At zero the corresponding undivided inequality is still true.
Telescope \(1-\prod_jc_j\), and bound its prefix products by the
corresponding products of \(A_j\). This yields
\(1-C_J\leq r(1-A_J)\).

Finally, expand \(1-A_J\) over nonempty subsets:
\[
 1-A_J=\sum_{\varnothing\ne S\subseteq J}
       \prod_{i\in S}b_i\prod_{i\in J\setminus S}(1-b_i)
 \geq(2^m-1)B_J,
\]
since \(1-b_i\geq b_i\). This proves the last assertion,
including \(m=1\) and the all-zero point.
\end{proof}

\subsection{At least two indices in the smaller part}
\label{app:two-indices}

For all \(h,d\geq2\), we bound the two Bell amplitudes together.
First, the local probabilities satisfy \(p_j\geq b_j^{9/8}\): indeed,
\[
 \max_{0\leq v\leq\kappa}v(2-v)^9
 =\frac{9^9}{5^{10}}<64<a^8,
\]
where the maximum is at \(v=1/5\) and
\(a^2=3/2+s>2s\). Multiplication by \(v^8\) proves the
claim also at \(v=0\). The exponent \(9/8\) is chosen only to keep
the arithmetic short. Since \(Q_J\geq A_J\) by Lemma~\ref{lem:products},
the block variables of Lemma~\ref{lem:blocks} therefore satisfy
\[
 w\geq B_H^{9/8}A_D\geq x^{9/4},\qquad
 z\geq A_HB_D^{9/8}\geq y^{9/4}.
\]
Put \(u\coloneqq x^2+y^2=B_HA_D+A_HB_D\). As in
\cite[Supplemental C, Eq.~S57]{Cao2026}, this is
multiaffine in the \(b_j\), so its maximum on \([0,1/2]^n\)
is attained at a vertex. There its value is zero unless every
coordinate in at least one part is \(1/2\). If exactly one
part has this property, the value is at most \(1/4\), since that part
has at least two indices; if both do, it is \(2^{1-n}\). Thus
\(u\leq1/4\).

The proof of Lemma~\ref{lem:blocks} gives
\(\lambda_{H,D}=1+wz+s(w+z)-k^2(x^2+y^2)+s k(wx+zy)\).
Dropping \(wz\geq0\), inserting the bounds on \(w,z\) and using
convexity of \(r\mapsto r^{9/8}\) and \(r\mapsto r^{13/8}\) at
\(r=x^2,y^2\) gives
\[
 \lambda_{H,D}\geq
 1-k^2u+2s(u/2)^{9/8}+2s k(u/2)^{13/8}.
\]
On \(0\leq u\leq1/4\), since \((u/2)^{1/8}\leq1\) and
\((u/2)^{5/8}\leq1/(2s)\), the derivative of the right-hand side is
at most
\[
 -k^2+\frac{9s}{8}+\frac{13k}{16}
 =-\frac{35+s}{16}<0.
\]
Finally, \(2^{1/8}>17/16\) and \(2^{5/8}>3/2\), as is
seen by raising both inequalities to the eighth power. Hence
\[
 \lambda_{H,D}\geq1-\frac{k^2}{4}+2^{-15/8}+k2^{-27/8}
 >\frac{39-26s}{64}>e_*>0,
\]
with \(e_*=(31-21s)/64\) as in the proof of
Theorem~\ref{thm:uniform-gap}, which uses this smaller constant.

\subsection{The case H is empty}

The target block requires a different estimate because for \(H=\varnothing\) the right side of \eqref{eq:interpolation} is nonpositive, since \(f_0=0\) and \(g_{0,n}<0\), and the bound of
Appendix~\ref{app:two-indices} needs two indices in each part. For \(h=0\) we follow \cite[Supplemental C, Case 2]{Cao2026}: a lower
polynomial that is concave in each variable reduces positivity to the
cube vertices. The new input is the bound \eqref{eq:lastvertex} at
\((\kappa,\ldots,\kappa)\); see the Remark below.

For \(h=0\), omit the index set on products over all \(n\) variables.
Dropping \(s k E\geq0\) and replacing \(B\) by its multilinear majorant \(2^n\prod_iv_i\), valid since \(b_i\leq2v_i\), yields
\begin{equation}
 \lambda_{\varnothing,[n]}\geq
 \mathcal P\coloneqq
 1+PQ+s(P+Q)-k^2\left(A+2^n\prod_iv_i\right)+s k C.
 \label{eq:lowerpolynomial}
\end{equation}
The majorant is used because each \(b_i=v_i(2-v_i)\) is concave, so \(-k^2B\) itself would be convex in each variable, whereas \(-k^2 2^n\prod_iv_i\) is affine in each. This polynomial is concave in each variable separately. If a subscript
\(-i\) denotes the product over the remaining indices, then
\[
 \partial_i^2\mathcal P
 =-2a^2P_{-i}Q_{-i}-2k^2A_{-i}+2k^2C_{-i}\leq0,
\]
because \(C_{-i}\leq A_{-i}\).
Repeated one-variable chord inequalities lower-bound \(\mathcal P\)
by a convex combination of its cube-vertex values.

At the all-zero vertex the value is zero. At a vertex with \(j\)
coordinates equal to \(\kappa\) and the others zero, \(1\leq j<n\),
the value is
\[
 1-(3+s)2^{-j}+s k\,2^{-3j/2}.
\]
For \(\chi\coloneqq2^{-j/2}\in(0,1/s]\), this is
\(1-(3+s)\chi^2+s k \chi^3\), whose derivative is
\[
 \chi[-2(3+s)+3s k \chi]\leq \chi(s-3)<0.
\]
Its minimum on that interval is therefore the \(j=1\) value, which is
zero.

At the vertex \((\kappa,\ldots,\kappa)\), the polynomial itself equals
\[
 \mathcal P(\kappa,\ldots,\kappa)
 =1+4^{-n}-3\cdot2^{-n}-k^2(2\kappa)^n+s k\,2^{-3n/2}.
\]
Since \(0<2\kappa<1\) and \(n\geq4\), dropping the two positive terms
gives
\begin{equation}
 \mathcal P(\kappa,\ldots,\kappa)
 \geq1-\frac3{16}-k^2(2\kappa)^4
 =8s-\frac{179}{16}>0.
 \label{eq:lastvertex}
\end{equation}
The last sign follows from \(2\cdot128^2>179^2\).
All vertex values are nonnegative, proving the case \(H=\varnothing\).

\begin{remark}
The concavity argument requires the value of the lower polynomial at
\((\kappa,\ldots,\kappa)\). Supplemental C, Case 2 of \cite{Cao2026}
checks the original determinant at this vertex in Eq.~S58.
Equation~\eqref{eq:lastvertex} supplies the required positive lower
bound for the polynomial for every \(n\geq4\).
\end{remark}

Together with the calculation for \(h=1\) in Section~\ref{sec:bound},
these cases give a nonnegative determinant in every block. The strict bounds away from the axes are also the inputs to
Theorem~\ref{thm:uniform-gap}.

\section{A construction with local classical registers}
\label{app:flags}

This construction mixes a strategy of score \(\beta_Q\) with one of score \(\beta_B\), using a classical bit stored in a qubit and one data qubit at each party. It provides a direct attaining strategy with local dimension four, independently of the one-qutrit construction in Section~\ref{sec:dimension}.

Fix \(t\in[0,1]\). At each party, \(F_j\) is a qubit used as a classical
register, and the other qubit holds the conditional state.
Let \(\rho_Q\) be a locally rotated GHZ state attaining \(\beta_Q\),
with local observables \(A^Q_{j,x}\).
Let
\[
 \rho_B\coloneqq\rho_{n-1,Q}\otimes\proj0,
\]
where the first \(n-1\) parties attain their maximal MABK value
\(2^{n/2}\), and take both observables of the last party to be \(Z\); write \(A^B_{j,x}\) for the resulting local observables.
For \(n-1=2\), the maximizing state is a Bell state.
For \(\rho_B\), \eqref{eq:bell} becomes \(W_{n-1}\otimes Z\), so its score is exactly \(\beta_B\). This state is separable across the cut between the last party and the others.

Define
\begin{align*}
 \rho_t\coloneqq{}&t\,\proj{0^n}_F\otimes\rho_Q
          +(1-t)\,\proj{1^n}_F\otimes\rho_B,\\
 \widehat A_{j,x}\coloneqq{}&
 \proj0_{F_j}\otimes A^Q_{j,x}
 +\proj1_{F_j}\otimes A^B_{j,x}.
\end{align*}
These observables are defined on the entire local space, including
the unused sectors in which the classical registers have different
values.
Multilinearity of the Bell expression gives
\[
 \beta(t)=t\beta_Q+(1-t)\beta_B.
\]

For any fixed product of local channels, the second branch remains
separable across the same cut. Across this cut, GHZ has two equal
Schmidt coefficients. Its squared overlap with any pure product
state is at most \(1/2\), and convexity gives the same bound for mixed
separable states. The first branch contributes at most \(1\).
Linearity for each fixed channel therefore implies
\[
 \Xi_n(\rho_t)\leq t+\frac{1-t}{2}=\frac{1+t}{2}.
\]
Equality is attainable locally. For register value \(0\), each party applies
its fixed unitary converting \(\rho_Q\) to the standard GHZ state.
For register value \(1\), each party discards the input and prepares \(\ket0\).
The target overlap is then exactly \(t+(1-t)/2\).

Thus this strategy attains the infimum in Theorem~\ref{thm:main}, with local
dimension four. The classical registers combine the
branch-dependent measurement pairs into one pair of observables on a single
local space. This use of classical registers is already present in
\cite[Eq.~(2)]{Kaniewski2016}, and for \(n\) parties in
\cite[Eqs.~(10), (38)]{Jha2026}.

\section{Axis formulas}
\label{sec:axis-formulas}

The following exact formulas distinguish \(\tr(\rho T)=0\) from
\(\Xi_n(\rho)=L(\beta)\). They also describe what the spectral estimate of
Theorem~\ref{thm:uniform-gap} measures.

\begin{proposition}
\label{prop:axis-gap}
Suppose \(n\geq4\), and only \(v_i=v\) can be nonzero.
Put \(\theta_*\coloneqq\min\{\theta_i,\pi/2-\theta_i\}\).
The least eigenvalue of \(T\) on \(\ket{\Phi_n}^{\perp}\) is
\begin{equation}
 \Delta_{\mathrm{axis}}
 \coloneqq\frac1s+a\bigl(v-\sqrt{v(2-v)}\bigr)
 =k(1-\cos\theta_*).
 \label{eq:axis-gap}
\end{equation}
Thus, when \(\theta_*>0\),
\[
 \tr(\rho T)\geq
 \Delta_{\mathrm{axis}}\bigl(1-\bra{\Phi_n}\rho\ket{\Phi_n}\bigr).
\]
\end{proposition}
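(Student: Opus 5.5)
The plan is to use the block decomposition of Lemma~\ref{lem:blocks} and compute the spectrum of \(T\) block by block. By the covariance of Appendix~\ref{app:covariance}, the product unitary \(S_i\) fixes \(\ket{\Phi_n}\) and conjugates \(T(\theta)\) to \(T(\theta')\) with \(\theta_i'=\pi/2-\theta_i\). The spectrum of \(T\) on \(\ket{\Phi_n}^\perp\) is therefore unchanged under this reflection, and we may assume \(0\leq\theta_i\leq\pi/4\). Then \(\theta_*=\theta_i\), and the reduced-cube formulas \eqref{eq:variables} hold. For \(j\neq i\) we have \(p_j=b_j=0\) and \(q_j=A_j=1\). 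Write \(v=v_i\), \(p=av\), \(q=1-av\).

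Next I sort the blocks \(\{H,D\}\). The weight \(w=P_HQ_D\) and the amplitude \(x=\sqrt{B_HA_D}\) can be nonzero only if \(H\subseteq\{i\}\). Likewise \(z\) and \(y\) can be nonzero only if \(D\subseteq\{i\}\). Up to complement symmetry, this leaves three cases.
\begin{enumerate}
 \item The target block \(H=\varnothing\). As in the proof of Proposition~\ref{prop:kernel}, the diagonal is \(q/2+1/s\) and the off-diagonal entry has the same modulus. The eigenvalues are therefore \(0\), with eigenvector \(\ket{\Phi_n}\), and \(q+s\).
 \item The block \(H=\{i\}\). Here \(w=p\), \(z=0\), \(x=\sqrt{b_i}\) and \(y=0\). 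The block \(T_H\) has diagonal \(p/2+1/s\) and off-diagonal modulus \(|p/2-a\sqrt{b_i}|\). Since \(v(2-v)\geq v^2/4\) on \([0,\kappa]\), we have \(a\sqrt{b_i}\geq p/2\). Hence the smaller eigenvalue is
 \[
  \frac1s+p-a\sqrt{b_i}=\frac1s+a\bigl(v-\sqrt{v(2-v)}\bigr).
 \]
 \item Every other block. Both parts then contain an index different from \(i\), so \(w=z=x=y=0\) and \(T_H=\tfrac1sI\). Such blocks exist because \(n\geq4\).
\end{enumerate}

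The least eigenvalue of \(T\) on \(\ket{\Phi_n}^\perp\) is the minimum of \(q+s\), the larger eigenvalue of case 2, \(1/s\), and the value \(\Delta_{\mathrm{axis}}\) from case 2. Since \(v\leq\sqrt{v(2-v)}\) for \(v\leq1\), we get \(\Delta_{\mathrm{axis}}\leq1/s\). Also \(1/s\leq q+s\), because \(q\geq0\). So the minimum is \(\Delta_{\mathrm{axis}}\). To obtain the angle form, set \(\varphi=\pi/4-\theta_*\). Then \(1-v=\cos\varphi\) and \(v(2-v)=\sin^2\varphi\), so
\[
 a\bigl(v-\sqrt{v(2-v)}\bigr)=a-a(\cos\varphi+\sin\varphi)=a-k\cos\theta_*,
\]
using \(a s=k\). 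Finally \(1/s+a=(2+s)/s=k\), which gives \eqref{eq:axis-gap}.

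For the trace inequality, \(T\geq0\) and \(T\ket{\Phi_n}=0\), and \(\ket{\Phi_n}\) spans a block-compatible eigenline. Therefore \(T\geq\Delta_{\mathrm{axis}}(I-\proj{\Phi_n})\). Taking the expectation in \(\rho\) gives the claim. When \(\theta_*>0\), \(\Delta_{\mathrm{axis}}\) is positive, so the bound is nontrivial.

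The only delicate step is the block bookkeeping: making sure every partition falls into one of the three cases, and fixing the sign inside the absolute value in case 2. The rest is direct computation.
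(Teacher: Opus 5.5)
Your proposal is correct and follows essentially the same route as the paper. Both use Lemma~\ref{lem:blocks} to compute the spectrum block by block: eigenvalues $0$ and $q+s$ on the target block, $p+1/s-a\sqrt{b_i}$ and $1/s+a\sqrt{b_i}$ on the block $H=\{i\}$, and $1/s$ on every other block. Both then substitute in $\theta_*$ and invoke reflection covariance for the other half of the angle interval. Your version also spells out two steps the paper leaves implicit: the sign check $a\sqrt{b_i}\geq p/2$, and the deduction $T\geq\Delta_{\mathrm{axis}}(I-\proj{\Phi_n})$ that gives the trace inequality.
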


\begin{proof}
On the reduced axis, set \(p\coloneqq av\), \(q\coloneqq1-av\), and \(b_i=v(2-v)\).
The target block has eigenvalues \(0,q+s\).
The block with \(H=\{i\}\) has eigenvalues
\[
 p+\frac1s-a\sqrt{b_i},
 \qquad
 \frac1s+a\sqrt{b_i},
\]
and all remaining eigenvalues are \(1/s\).
The first displayed value is nonnegative and is no larger than
\(1/s\), since \(v\leq\sqrt{v(2-v)}\).
It is therefore the required minimum.
Substituting
\(v=1-(\cos\theta_*+\sin\theta_*)/s\) and
\(\sqrt{v(2-v)}=(\cos\theta_*-\sin\theta_*)/s\)
gives \eqref{eq:axis-gap}. Covariance under angle reflections covers the other
half of the angle interval.
\end{proof}

For an interior-axis target state \(\rho=\proj{\Phi_n}\),
\begin{equation}
 \beta=\beta_Q(1-v),\qquad
 F_{\mathrm{ext}}=L(\beta)=1-av,\qquad
 \Xi_n(\rho)=1.
 \label{eq:certificate-not-actual}
\end{equation}
Consequently, \(\tr(\rho T)=0\) alone does not imply
\(\Xi_n(\rho)=L(\beta)\).

\begin{proposition}
\label{prop:endpoint-extractability}
At a single commuting endpoint, let a normalized state supported
on \eqref{eq:endpoint-kernel} have matrix
\[
 \rho=
 \begin{pmatrix}p&z\\\overline z&1-p\end{pmatrix}
 \quad\text{in the basis }
 (\ket{\Phi_n},\Gamma_i\ket{\Phi_n}).
\]
Its exact extractability is
\begin{equation}
 \Xi_n(\rho)
 =\frac12+\sqrt{(p-\tfrac12)^2+(\operatorname{Im}z)^2}.
 \label{eq:endpoint-extractability}
\end{equation}
In particular, it equals \(1/2\) precisely when \(p=1/2\) and
\(\operatorname{Im}z=0\).
\end{proposition}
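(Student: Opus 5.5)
The plan is to change basis to one adapted to the cut at party \(i\). In that basis the state becomes a two-level system whose single coherence is exactly \(\sqrt{(p-\tfrac12)^2+(\operatorname{Im}z)^2}\). The upper bound then follows from the partial-transpose argument of \eqref{eq:partial-transpose-bound}, and the matching lower bound from one local phase rotation. By the covariance of Appendix~\ref{app:covariance} it suffices to treat \(\theta_i=0\), so \(\Gamma_i=X_i\); the case \(\theta_i=\pi/2\) is identical with \(Z_i\).

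\textbf{Step 1 (adapted basis).} Let \(\ket{g_\pm}\) be the \(\pm1\) eigenvectors of \(\Gamma_i\) and put \(\ket{f_\pm}\coloneqq\sqrt2\,{}_i\!\braket{g_\pm}{\Phi_n}\). Since the marginal of \(\Phi_n\) at party \(i\) is \(I/2\), the vectors \(f_\pm\) are orthonormal and
\[
 \ket{\Phi_n}=\frac{\ket{g_+f_+}+\ket{g_-f_-}}{s},\qquad
 \Gamma_i\ket{\Phi_n}=\frac{\ket{g_+f_+}-\ket{g_-f_-}}{s}.
\]
As in \eqref{eq:refined-endpoint}, \(\psi_\pm\coloneqq(\Phi_n\pm\Gamma_i\Phi_n)/s=g_\pm f_\pm\). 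A direct computation gives
\[
 \bra{\psi_\pm}\rho\ket{\psi_\pm}=\tfrac12\pm\operatorname{Re}z,\qquad
 c\coloneqq\bra{\psi_+}\rho\ket{\psi_-}=(p-\tfrac12)-\mathrm i\operatorname{Im}z .
\]
The claimed value is therefore \(\tfrac12+|c|\).

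\textbf{Step 2 (upper bound).} Take the partial transpose across party \(i\). The vectors \(g_\pm\) are orthonormal at party \(i\), and \(f_\pm\) are orthonormal on the remaining parties. Hence \(\rho^{T_i}\) has the following spectrum:
\begin{itemize}
 \item the diagonal values \(\tfrac12\pm\operatorname{Re}z\) on \(g_\pm f_\pm\);
 \item eigenvalues \(\pm|c|\) on \(\operatorname{span}\{g_+f_-,\,g_-f_+\}\);
 \item zeros elsewhere.
\end{itemize}
Consequently \(\|\rho^{T_i}\|_1=1+2|c|\). Repeating \eqref{eq:partial-transpose-bound} with party \(i\) in place of party \(1\) bounds every product-channel overlap by \((1+2|c|)/2\). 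The ingredients are the same: the transpose-conjugated channel at party \(i\) is CPTP, the trace norm contracts, and \(\|G_n^{T_i}\|_\infty=1/2\).

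\textbf{Step 3 (attainment).} For each real \(\varphi\), the vector \((\ket{g_+f_+}+e^{\mathrm i\varphi}\ket{g_-f_-})/s\) is obtained from \(\ket{\Phi_n}\) by the local phase unitary \(\proj{g_+}+e^{\mathrm i\varphi}\proj{g_-}\) at party \(i\). It is therefore carried to \(\ket{\GHZ_n}\) by a fixed product unitary, namely that phase followed by the output unitaries given after \eqref{eq:target}. The resulting overlap is \(\tfrac12+\operatorname{Re}(e^{\mathrm i\varphi}\overline{c})\). Choosing \(\varphi=\arg c\) gives \(\tfrac12+|c|\), which proves \eqref{eq:endpoint-extractability}. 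The value equals \(1/2\) precisely when \(c=0\), that is, when \(p=1/2\) and \(\operatorname{Im}z=0\).

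The only real obstacle is bookkeeping in Step 1. One has to check that the matrix of \(\rho\) in the basis \(\psi_\pm\) has exactly the stated entries, and that the partial transpose splits into the blocks listed in Step 2. Both checks use only the orthonormality of \(g_\pm\) and of \(f_\pm\).
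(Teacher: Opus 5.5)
Your proof is correct and follows essentially the paper's argument. It uses the same product basis \(\psi_\pm=(\Phi_n\pm\Gamma_i\Phi_n)/\sqrt2\), in which the only coherence has modulus \(\sqrt{(p-\tfrac12)^2+(\operatorname{Im}z)^2}\); the same partial-transpose bound \(\|\rho^{T_i}\|_1=1+2|c|\) across party \(i\); and the same phase rotation generated by \(\Gamma_i\) for attainment. One small sign slip in Step 3: the vector \((\ket{g_+f_+}+e^{\mathrm i\varphi}\ket{g_-f_-})/\sqrt2\) is carried to the target by the inverse phase, and its overlap with \(\rho\) is \(\tfrac12+\operatorname{Re}(e^{\mathrm i\varphi}c)\); this is harmless because the optimization over \(\varphi\) absorbs it.
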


\begin{proof}
The vectors
\[
 \ket{\psi_\pm}
 \coloneqq\frac{\ket{\Phi_n}\pm\Gamma_i\ket{\Phi_n}}s
\]
are orthogonal product vectors across party \(i\) versus the rest.
Their factors on the rest are also orthogonal, since the local
marginal of \(\Phi_n\) is \(I/2\).
In this basis, the off-diagonal entry of \(\rho\) has modulus
\[
 d_0\coloneqq\sqrt{(p-\tfrac12)^2+(\operatorname{Im}z)^2}.
\]
The partial transpose of \(\rho\) across that cut has trace norm \(1+2d_0\).
The argument of \eqref{eq:partial-transpose-bound} therefore
bounds every product-channel overlap by \(1/2+d_0\).
A local unitary generated by \(\Gamma_i\) can adjust the
relative phase of \(\psi_+\) and \(\psi_-\), and attains this
overlap with \(\Phi_n=(\psi_++\psi_-)/s\).
This proves \eqref{eq:endpoint-extractability}.
When \(d_0=0\), the state is diagonal in the \(\psi_\pm\) basis
and hence separable across this cut.
The formula describes an individual joint qubit sector; the global
classification after local pinching is given in
Theorem~\ref{thm:refined-equality}.
\end{proof}

Finally, take exactly two nonzero \(v\)-coordinates \(u,v\).
Here \(D=[n]\) and \(P_D=B_D=E_D=0\), so \eqref{eq:lambda} reduces to \(1+sQ_D-k^2A_D+s kC_D\), which vanishes when \(u=0\) or \(v=0\) by Proposition~\ref{prop:kernel}; expanding gives
\begin{equation}
 \lambda_{\varnothing,[n]}
 =k^2uv\left(1-\frac{u+v}{s}+\frac{uv}{s}\right).
 \label{eq:two-coordinate-scaling}
\end{equation}
If \(u=v\) tends to zero, the trace of this block tends to \(k\),
and therefore
\[
 \lambda_{\min}(T)=\frac{k}{2}v^2+O(v^3).
\]
All other blocks have a positive gap at the ideal point and
remain bounded away from zero nearby.
Thus no uniform spectral bound for \(T\) can be
linear in distance to the axes in the \(v\)-coordinates.
Since \(v_i=1-\cos\varphi_i\) is quadratic in the angle error,
the fourth-order angular behavior is intrinsic to this
operator \(T\).

\section{Uniform stability of the original state}
\label{app:original-state-stability}

We prove Theorem~\ref{thm:original-state-stability}.
The strict separation of the qubit curve from the affine line forces a positive weight on the scalar sector at the qutrit party. This weight bounds the total angle error at the remaining parties.
We then control the weight outside the two branch vectors \(G,B\) defined below
and estimate their weights and coherence. All extraction channels
used below are products of local channels on the original spaces.

\begin{proof}[Proof of Theorem~\ref{thm:original-state-stability}]
We first prove the assertion for projective observables, with
\(\eta\leq t\leq1-\eta\). The passage from binary POVMs to such
observables on the same spaces is given at the end of the proof.
Write \(\widehat W\coloneqq W_n/\beta_Q\), so
\(b=\tr(\rho\widehat W)\) and \(L(\beta)=ab-1/s\).
The constants \(s,a,\kappa,c_*\) retain their definitions in
\eqref{eq:constants} and \eqref{eq:uniform-constant}.
The smallness requirements on \(\varepsilon\) will be collected
in an explicit positive threshold below.

\paragraph{Sector weights and angle control.}
By the argument in the proof of Theorem~\ref{thm:qubit-profile}, applied in every joint sector after local pinching into Jordan blocks, a commuting projective pair at any party limits \(b\) to \(1/s\); this is the case of a vanishing effective commutator in \cite[Sec.~II.C]{Kaniewski2017}. Since \(b>1/s\), no local pair commutes, including the qutrit pair.
The qutrit pair therefore has one irreducible two-dimensional
sector and one common one-dimensional eigenspace.
Let \(q_E\) be the probability of the latter sector in \(\rho\).
The Bell operator preserves these sectors.
If their conditional normalized scores are \(b_Q,b_E\), then
\[
 b=(1-q_E)b_Q+q_Eb_E,\qquad b_E\leq\frac1s.
\]
The case \(q_E=1\) is excluded by \(b>1/s\), and \(b_Q\geq b\).
An extraction for the conditional qubit state extends to the
whole input by pinching the first party's two sectors and
preparing any fixed output on its scalar sector. Every other
party uses the same chosen channel in both sectors.
The scalar contribution to the overlap is nonnegative.
Taking the supremum over the qubit extraction gives
\[
 \Xi_n(\rho)\geq(1-q_E)F(b_Q)\geq(1-q_E)F(b)\geq F(b)-q_E.
\]
This bound also applies if \(q_E=0\).

The qubit curve is strictly above the affine line in the interior. Its concavity gives the positive constant
\[
 m_\eta\coloneqq
 \min\left\{
 F\!\left(\frac1s+\kappa\eta\right)-\frac{1+\eta}{2},
 F\!\left(\frac1s+\kappa(1-\eta)\right)-\frac{2-\eta}{2}
 \right\}>0.
\]
It is a lower bound for \(F(b)-L(\beta)\) throughout the stated
score interval. Thus, if \(\varepsilon\leq m_\eta/2\),
\begin{equation}
 q_E\geq m_\eta-\varepsilon\geq\frac{m_\eta}{2}.
 \label{eq:state-scalar-weight}
\end{equation}
The pinching in this argument is part of a permitted extraction; the state \(\rho\) keeps all its original coherences.

Put every qubit pair in the canonical coordinates
\eqref{eq:canonical}. Its variables \(v_j,p_j,q_j,x_j\) retain
their earlier meanings. The scalar sector at the first party
has observable values \(a_0,a_1\in\{+1,-1\}\).
If \(a_0=a_1\), embed it into the \(X\)-eigenvector of
eigenvalue \(a_0\) at angle \(0\); if \(a_1=-a_0\), use
the \(Z\)-eigenvector of eigenvalue \(a_0\) at angle \(\pi/2\).
Its first coordinate is consequently \(x_1=1\).
The remaining angle variables are shared by both first-party
sectors.

Let \(K\) be the pullback of the local extraction from
Section~\ref{sec:bound} to this physical input space, including
the local pinching at the first party, and set
\[
 T\coloneqq K-a\widehat W+\frac1s I.
\]
The operator is positive on both first-party sectors, and
\[
 0\leq\tr(\rho T)\leq\Xi_n(\rho)-L(\beta)\leq\varepsilon.
\]
In the scalar sector the auxiliary probability is exactly
\[
 \mathsf P_2(1,x_2,\ldots,x_n)
 =1-\prod_{j=2}^n(1-x_j).
\]
Theorem~\ref{thm:uniform-gap} therefore implies
\[
 1-\prod_{j=2}^n(1-x_j)
 \leq \frac{\varepsilon}{q_Ec_*}.
\]
If \(\varepsilon\leq c_*m_\eta/4\), the right side is at most
\(1/2\), by \eqref{eq:state-scalar-weight}. Since
\(-\log(1-u)\leq2u\) for \(0\leq u\leq1/2\), we obtain
\[
 \sum_{j=2}^n x_j
 \leq-\log\prod_{j=2}^n(1-x_j)
 \leq \frac{2\varepsilon}{q_Ec_*}.
\]
Define
\begin{equation}
 \nu\coloneqq\sum_{j=2}^n v_j,\qquad
 C_{\mathrm{ang}}\coloneqq\frac{4\kappa}{c_*m_\eta}.
 \quad\text{Then}\quad
 \nu\leq C_{\mathrm{ang}}\varepsilon.
 \label{eq:state-angle-sum}
\end{equation}
Bounding the sum keeps the subsequent estimates independent of \(n\).

\paragraph{Concentration on the two branches.}
We first bound the state's weight outside the subspace determined by the two rest strings of \(\Phi_n\), its two \(Y\)-strings on parties \(2,\ldots,n\). Inside it, we compare with
ideal rest angles to isolate the two attaining branches.
The first angle must stay away from its commuting endpoints for
the resulting axis gap to be uniform.
Consider the six-dimensional subspace
\[
 \mathcal S\coloneqq\mathbb C^3\otimes
 \operatorname{span}\left\{
   \bigotimes_{j=2}^n\ket{\tau_j}_Y,\,
   \bigotimes_{j=2}^n\ket{-\tau_j}_Y
 \right\},
\]
and let \(\Pi\) be its orthogonal projection.
Both \(\widehat W\) and \(K\) preserve this subspace and its
orthogonal complement. On the first party's qubit sector,
this follows from the complementary \(Y\)-string blocks.
For \(K\) on the full angle cube, both dephasing Paulis \(X,Z\) flip the local \(Y\) sign, so every product of them
maps \(\Phi_n\) into a complementary-string block.
On the scalar sector, use its endpoint qubit embedding.
At that endpoint the Bell operator and the dephasing channel
preserve the first party's indicated Pauli eigenspace.
Compression to this eigenspace couples only a rest string
and its complement, and therefore also preserves \(\mathcal S\).

Let \(\widehat W_{\mathrm{ax}}\) be the normalized Bell operator
obtained by setting every angle \(j\geq2\) to \(\pi/4\), keeping
the first party's two sectors and scalar signs fixed.
Let \(K_{\mathrm{ax}}\) be obtained from \(K\) by replacing the corresponding rest extraction channels by identities, and set
\[
 T_{\mathrm{ax}}
 \coloneqq K_{\mathrm{ax}}-a\widehat W_{\mathrm{ax}}+\frac1s I.
\]
The required comparison consists of three bounds:
\begin{align}
 \|(I-\Pi)\widehat W(I-\Pi)\|_\infty
 &\leq\sqrt{2\nu},\nonumber\\
 \|\Pi(\widehat W-\widehat W_{\mathrm{ax}})\Pi\|_\infty
 &\leq2\nu,\label{eq:state-operator-comparison}\\
 \|K-K_{\mathrm{ax}}\|_\infty
 &\leq2a\nu.\nonumber
\end{align}
We verify them directly from Lemma~\ref{lem:blocks} and \eqref{eq:channel}.

For a complementary block outside \(\mathcal S\), both index
sets \(H,D\) contain a rest index. Choose \(i\in H\), \(j\in D\)
with \(i,j\geq2\). In Lemma~\ref{lem:blocks},
\[
 x^2=B_HA_D\leq b_i,\qquad
 y^2=A_HB_D\leq b_j.
\]
Thus the squared norm of its normalized Bell block is at most
\[
 x^2+y^2\leq b_i+b_j\leq2\nu.
\]
As stated in Appendix~\ref{app:covariance}, the full-cube
formula uses the signed factors
\((\cos\theta_j-\sin\theta_j)/s\) in place of
\(\sqrt{b_j}\). These signs do not change the squared norm.
The same bound survives compression to the physical scalar
sector, since this compression has norm one and commutes
with the projection onto the rest strings.

Within \(\mathcal S\), there are two qubit Bell blocks.
Their entries consist of a first-party factor times
\(\prod_{j\geq2}(1-v_j)\), and a second first-party factor
times \(\prod_{j\geq2}(\cos\theta_j-\sin\theta_j)/s\).
The first-party factors have modulus at most one.
Making the rest angles ideal replaces the two products by
one and zero. Since there are at least two rest indices,
\[
 1-\prod_{j=2}^n(1-v_j)\leq\nu,\qquad
 \left|\prod_{j=2}^n
       \frac{\cos\theta_j-\sin\theta_j}{s}\right|
 \leq\sqrt{b_2b_3}
 \leq\frac{b_2+b_3}{2}\leq\nu.
\]
Each block difference therefore has norm at most \(2\nu\).
Again, compression to the scalar sector preserves the bound.
This proves the second estimate on the full angle cube.

For the third estimate, the rest channel at party \(j\) is
\((1-p_j)\operatorname{Id}+p_j\operatorname{Ad}_{\Gamma_j}\).
Its difference from the identity changes a norm-one operator
by at most \(2p_j\). Telescope over the rest channels.
Every intervening unital channel contracts operator norm,
as does the adjoint of the first-party extraction channel.
Thus the total change is at most
\(2\sum_{j\geq2}p_j=2a\nu\).
It follows that
\begin{equation}
 \|\Pi(T-T_{\mathrm{ax}})\Pi\|_\infty\leq4a\nu.
 \label{eq:state-axis-comparison}
\end{equation}

The first bound in \eqref{eq:state-operator-comparison},
applied with all rest angles ideal, shows that the axis Bell
operator vanishes on \(\mathcal S^\perp\).
Its first-party qubit sector has nonnegative eigenvalues
\(1-v_1\) and \(\sqrt{v_1(2-v_1)}\), and its scalar sector
has eigenvalues \(\pm1/s\).
Since \(v_1\leq\kappa\), its norm is \(1-v_1\).
Assume for the moment that
\[
 \nu\leq\frac1{16a^2},\qquad
 \nu\leq\frac{\kappa\eta}{4}.
\]
The first inequality implies
\(\sqrt{2\nu}\leq1/s\leq1-v_1\).
Equation~\eqref{eq:state-operator-comparison} then gives
\[
 b\leq\|\widehat W\|_\infty\leq1-v_1+2\nu.
\]
Using \(b\geq1/s+\kappa\eta\), we obtain
\[
 v_1\leq\kappa(1-\eta)+2\nu
       \leq v_\eta\coloneqq\kappa(1-\eta/2)<\kappa.
\]
The first angle is consequently bounded away from its commuting
endpoints by a bound depending only on \(\eta\).

Since \(K\geq0\), the first estimate in
\eqref{eq:state-operator-comparison} gives, on
\(\mathcal S^\perp\),
\[
 T\geq\left(\frac1s-a\sqrt{2\nu}\right)I
       \geq\frac1{2s}I.
\]
Both summands in the orthogonal decomposition of
\(\tr(\rho T)\) are nonnegative, and hence
\[
 \tr\bigl(\rho(I-\Pi)\bigr)\leq2s\varepsilon.
\]

Let \(J\) identify the first qubit with its physical
two-dimensional sector, and put \(\ket G\coloneqq J\ket{\Phi_n}\).
Choose a unit vector \(\ket{e_2}\) for the physical scalar sector.
Let \(\ket{e_0}\) be the \(X\)- or \(Z\)-eigenvector specified
by its endpoint embedding, and choose \(\ket{e_1}\) orthogonal
to it in the canonical first qubit. Define
\[
 \ket{f_i}\coloneqq\sqrt2\,{}_1\!\braket{e_i}{\Phi_n}
 \quad(i=0,1),\qquad
 \ket B\coloneqq\ket{e_2f_0},\qquad P\coloneqq\proj G+\proj B.
\]
The first marginal of \(\Phi_n\) is \(I/2\), so the two rest
vectors are orthonormal and
\[
 \ket G=\frac{\ket{e_0f_0}+\ket{e_1f_1}}{\sqrt2}.
\]
Both \(G\) and \(B\) belong to \(\mathcal S\). They play the roles of
the vectors \(G,B\) of \eqref{eq:coherent-qutrit}, written in the
coordinates of the given strategy; Lemma~\ref{lem:branch-alignment}
relates the two pairs at the end of the proof.

On the qubit sector, \(T_{\mathrm{ax}}\ket G=0\), since its
eigenvalue there is
\(1-av_1-a(1-v_1)+1/s=0\).
Proposition~\ref{prop:axis-gap} gives a lower bound on
the orthogonal complement:
\[
 d_\eta\coloneqq\frac1s+
 a\left(v_\eta-\sqrt{v_\eta(2-v_\eta)}\right)>0.
\]
Indeed, the displayed axis-gap function decreases on
\([0,\kappa]\), is positive before \(\kappa\), and vanishes
there. On the physical scalar sector within \(\mathcal S\),
\(K_{\mathrm{ax}}=\proj B/2\) and the eigenvector of \(\widehat W_{\mathrm{ax}}\) with eigenvalue \(1/s\) is \(B\), by \eqref{eq:refined-endpoint}.
The two eigenvalues of \(T_{\mathrm{ax}}\) in this sector
are therefore \(0\) and \((a+1)/s\).
Consequently
\[
 T_{\mathrm{ax}}\big|_{\mathcal S}\geq d_\eta(\Pi-P).
\]
Equation~\eqref{eq:state-axis-comparison}, positivity of \(T\),
and the angle estimate imply
\[
 \tr\bigl(\rho(\Pi-P)\bigr)
 \leq\frac{\varepsilon+4a\nu}{d_\eta}.
\]
Combining the two parts of the orthogonal complement of \(P\)
gives
\begin{equation}
 \ell_\perp\coloneqq1-\tr(P\rho)
 \leq C_{\mathrm{leak}}\varepsilon,\qquad
 C_{\mathrm{leak}}
 \coloneqq2s+\frac{1+4aC_{\mathrm{ang}}}{d_\eta}.
 \label{eq:state-leakage}
\end{equation}
This is a population estimate for the original state.

\paragraph{Branch weights and coherence.}
The first party can act as the identity on its qubit sector, map \(e_2\) to \(e_0\), and discard the coherences between these sectors.
With the identity at every other party, the pullback of the target projector to the full input space is exactly
\[
 K_0\coloneqq\proj G+\frac12\proj B.
\]
Compose with the fixed output unitaries. This is an allowed extraction, so
\(F_0\coloneqq\tr(\rho K_0)\leq\Xi_n(\rho)\).
Write
\[
 p_G\coloneqq\bra G\rho\ket G,\qquad
 p_B\coloneqq\bra B\rho\ket B,\qquad
 c\coloneqq\bra G\rho\ket B,\qquad
 p_G+p_B=1-\ell_\perp.
\]

The axis Bell operator has eigenvalues \(1-v_1\) and \(1/s\)
on \(G,B\), respectively, and no matrix elements between \(P\)
and its orthogonal complement. Its norm on the orthogonal complement of \(P\) is at most one. The operator comparisons therefore yield
\[
 b=(1-v_1)p_G+\frac{p_B}{s}+\xi,\qquad
 |\xi|\leq2\nu+2\ell_\perp
          \leq C_{\mathrm{err}}\varepsilon,\qquad
 C_{\mathrm{err}}\coloneqq2(C_{\mathrm{ang}}+C_{\mathrm{leak}}).
\]
Here one population term bounds the expectation of the axis
operator on \(\mathcal S\cap P^\perp\), and the other bounds
the expectation of the actual Bell operator on
\(\mathcal S^\perp\). The error within \(\mathcal S\),
including all its cross terms, is at most \(2\nu\).

Using \(a/s=1/2+1/s\), the preceding score identity gives
\begin{equation}
 F_0-L(\beta)
 =\frac{\ell_\perp}{s}+av_1p_G-a\xi.
 \label{eq:state-baseline-gain}
\end{equation}
Since \(F_0\leq\Xi_n(\rho)\leq L(\beta)+\varepsilon\),
\[
 av_1p_G\leq C_{\mathrm{gain}}\varepsilon,\qquad
 0\leq\Xi_n(\rho)-F_0\leq C_{\mathrm{gain}}\varepsilon,
 \qquad C_{\mathrm{gain}}\coloneqq1+aC_{\mathrm{err}}.
\]
Also \(b=1/s+\kappa t\), so
\[
 \kappa(t-p_G)=-v_1p_G-\frac{\ell_\perp}{s}+\xi.
\]
It follows that
\begin{equation}
 |p_G-t|\leq C_{\mathrm{wt}}\varepsilon,\qquad
 C_{\mathrm{wt}}
 \coloneqq\frac{C_{\mathrm{gain}}/a+
        C_{\mathrm{leak}}/s+C_{\mathrm{err}}}{\kappa}.
 \label{eq:state-branch-weight}
\end{equation}

The same local channel family used in
\eqref{eq:coherent-channel} controls the coherence directly
on \(\rho\). Put
\[
 \ket{u_0}\coloneqq\ket{e_0f_0},\qquad
 \ket{u_1}\coloneqq\ket{e_1f_1},\qquad
 \ket{G_-}\coloneqq\frac{\ket{u_0}-\ket{u_1}}{\sqrt2}.
\]
For \(|\zeta|\leq1\), take
\(\alpha\coloneqq\sqrt{1-|\zeta|^2}\) and the first-party Kraus operators
\[
 M_0\coloneqq\ket{e_0}\!\bra{e_0}
       +\alpha\ket{e_1}\!\bra{e_1},\qquad
 M_1\coloneqq\zeta\ket{e_1}\!\bra{e_1}
       +\ket{e_0}\!\bra{e_2}.
\]
The output vectors \(e_0,e_1\) are the corresponding basis of
the output qubit. The two terms of \(M_1\) have orthogonal
output vectors, and \(\alpha^2+|\zeta|^2=1\), so these
operators satisfy \(M_0^*M_0+M_1^*M_1=I_3\).
Every other party uses the identity, followed by the fixed output unitaries.

The pullback \(K_\zeta\) of the target projector is supported
on \(\operatorname{span}\{\ket{u_0},\ket{u_1},\ket B\}\).
In this ordered basis, it is
\[
 K_\zeta=\frac12
 \begin{pmatrix}
 1&\alpha&0\\
 \alpha&1&\overline\zeta\\
 0&\zeta&1
 \end{pmatrix},
\]
with zero action on the orthogonal complement. At \(\zeta=0\)
it equals \(K_0\) defined above.
Choose \(\zeta\coloneqq\overline{\bra{u_1}\rho\ket B}\) and set
\(\lambda\coloneqq|\zeta|\).
Positivity and unit trace give
\[
 \lambda\leq
 \sqrt{\bra{u_1}\rho\ket{u_1}\bra B\rho\ket B}\leq\frac12,
 \qquad
 \operatorname{Re}\bra{u_0}\rho\ket{u_1}\leq\frac12.
\]
Thus
\begin{align*}
 \Xi_n(\rho)-F_0
 &\geq\tr\!\left[\rho(K_\zeta-K_0)\right]\\
 &=(\sqrt{1-\lambda^2}-1)
      \operatorname{Re}\bra{u_0}\rho\ket{u_1}+\lambda^2\\
 &\geq\lambda^2-\frac{1-\sqrt{1-\lambda^2}}2
 \geq\frac{\lambda^2}{2}.
\end{align*}
The last step uses
\(1-\sqrt{1-\lambda^2}\leq\lambda^2\).
Since \(\ket G=\sqrt2\ket{u_1}+\ket{G_-}\) and
\(G_-\) is orthogonal to \(P\), positivity gives
\begin{align}
 |c|
 &\leq\sqrt2\lambda+
       \sqrt{\bra{G_-}\rho\ket{G_-}\bra B\rho\ket B}\nonumber\\
 &\leq2\sqrt{\Xi_n(\rho)-F_0}+\sqrt{\ell_\perp}
 \leq C_{\mathrm{coh}}\sqrt{\varepsilon},
 \label{eq:state-coherence}\\
 C_{\mathrm{coh}}
 &\coloneqq2\sqrt{C_{\mathrm{gain}}}+\sqrt{C_{\mathrm{leak}}}.
 \nonumber
\end{align}
All matrix elements here belong to the original state on the
full input space.

\paragraph{Trace norm, constants, and binary POVMs.}
The unnormalized projection satisfies
\[
 \|\rho-P\rho P\|_1\leq2\sqrt{\ell_\perp}.
\]
To verify this estimate, purify \(\rho\) and split the purification
into its \(P\) and \(I-P\) parts. The full trace norm of the
difference from its unnormalized \(P\)-projection is
\(\sqrt{4\ell_\perp-3\ell_\perp^2}\leq2\sqrt{\ell_\perp}\).
Contractivity of the trace norm under partial trace proves the claim.
Inside \(P\), the diagonal difference from
\(t\proj G+(1-t)\proj B\) has norm at most
\(2|p_G-t|+\ell_\perp\), and the off-diagonal part has norm
\(2|c|\). Therefore, for \(\varepsilon\leq1\),
\begin{align}
 \|\rho-t\proj G-(1-t)\proj B\|_1
 &\leq2\sqrt{\ell_\perp}+2|c|
           +2|p_G-t|+\ell_\perp\nonumber\\
 &\leq(2\sqrt{C_{\mathrm{leak}}}+2C_{\mathrm{coh}})
          \sqrt{\varepsilon}
       +(2C_{\mathrm{wt}}+C_{\mathrm{leak}})
          \varepsilon\nonumber\\
 &\leq C_\eta^{\mathrm{proj}}\sqrt{\varepsilon},
 \label{eq:state-distance-projective}\\
 C_\eta^{\mathrm{proj}}
 &\coloneqq2\sqrt{C_{\mathrm{leak}}}+2C_{\mathrm{coh}}
       +2C_{\mathrm{wt}}+C_{\mathrm{leak}}.
 \nonumber
\end{align}

Lemma~\ref{lem:branch-alignment}, with \(i=1\), \(J'=J\),
\(\ket e=\ket{e_2}\) and \(\ket g=\ket{e_0}\), maps the present \(G,B\)
to the vectors \(G,B\) of \eqref{eq:coherent-qutrit} by one qutrit
unitary and individual qubit unitaries. Trace-norm invariance in
\eqref{eq:state-distance-projective} therefore gives the
required distance \(D_t(\rho)\) for projective strategies.

All smallness conditions imposed in this argument hold when
\begin{equation}
 \varepsilon\leq\varepsilon_\eta^{\mathrm{proj}}
 \coloneqq\min\left\{
 1,\frac{m_\eta}{2},\frac{c_*m_\eta}{4},
 \frac1{16a^2C_{\mathrm{ang}}},
 \frac{\kappa\eta}{4C_{\mathrm{ang}}}
 \right\}.
 \label{eq:state-constants}
\end{equation}
Every constant in this expression and in
\(C_\eta^{\mathrm{proj}}\) was defined above in terms of
\(\eta\) and the constants \(s,a,\kappa,c_*\); each is positive
and independent of \(n\).
The proof also includes \(\varepsilon=0\): the same inequalities
give \(\nu=\ell_\perp=0\), \(p_G=t\), and \(c=0\).

Finally, for an arbitrary binary-POVM strategy with state \(\rho\),
let \(\beta_*\) and \(t_*\) be the score and mixture parameter of the
reflections given by Lemma~\ref{lem:projective-replacement} on the
same state and local spaces. By that lemma,
\begin{equation}
 0\leq t_*-t\leq2\varepsilon,\qquad
 0\leq\Xi_n(\rho)-L(\beta_*)\leq\varepsilon.
 \label{eq:state-povm-reduction}
\end{equation}
If \(\varepsilon\leq\eta/4\), then
\(\eta/2\leq t_*\leq1-\eta/2\).
Apply the projective estimate with interior parameter
\(\eta/2\) to this same \(\rho\).
The canonical branch vectors are independent of the mixture
parameter, so
\[
 \|\omega_{t_*}-\omega_t\|_1
 =2|t_*-t|\leq4\varepsilon.
\]
Consequently the theorem holds with the explicit choices
\[
 \varepsilon_\eta
 \coloneqq\min\{\varepsilon_{\eta/2}^{\mathrm{proj}},\,\eta/4\},
 \qquad
 C_\eta\coloneqq C_{\eta/2}^{\mathrm{proj}}+4.
\]
Keeping the two powers of \(\varepsilon\) separate in
\eqref{eq:state-distance-projective} also proves
\eqref{eq:original-state-stability-separated}.
To specify its coefficients, evaluate
\(C_{\mathrm{leak}},C_{\mathrm{coh}},C_{\mathrm{wt}}\)
at the interior parameter \(\eta/2\), and set
\[
 A_\eta\coloneqq2\sqrt{C_{\mathrm{leak}}}+2C_{\mathrm{coh}},
 \qquad
 B_\eta\coloneqq2C_{\mathrm{wt}}+C_{\mathrm{leak}}+4.
\]
The final \(4\) in \(B_\eta\) accounts for the change from
\(t_*\) to \(t\). With these choices,
\(C_\eta=A_\eta+B_\eta\).
The cap of \(2\) follows because the trace-norm distance between
two states is at most \(2\).
\end{proof}

\begin{proposition}
\label{prop:fixed-observable-stability}
Let \(n\geq3\), and let the parties use exactly the observables of
Theorem~\ref{thm:qutrit-attainer} on
\(\mathbb C^3\otimes(\mathbb C^2)^{\otimes(n-1)}\), in the coordinates
of that theorem. Let \(\rho\) be a state on this space whose score
\(\beta=\tr(\rho W_n)\) for these observables lies in
\([\beta_B,\beta_Q]\). Put
\(t\coloneqq(\beta-\beta_B)/(\beta_Q-\beta_B)\), and let \(\omega_t\)
be the state \eqref{eq:qutrit-attainer}. If \(\varepsilon\geq0\) and
\(\Xi_n(\rho)-L(\beta)\leq\varepsilon\), then
\[
 \|\rho-\omega_t\|_1\leq
 \min\{2,\,4\sqrt{1+\sqrt2}\,\sqrt{\varepsilon}+(4+3\sqrt2)\,\varepsilon\}.
\]
Numerically,
\(4\sqrt{1+\sqrt2}=6.2150958961\ldots\) and
\(4+3\sqrt2=8.2426406871\ldots\), so the right side is below \(2\)
exactly when \(\varepsilon<\varepsilon_*\approx0.0591922709\).
\end{proposition}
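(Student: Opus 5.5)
With the observables fixed, the Bell operator is explicit, so we can run the last part of the proof of Theorem~\ref{thm:original-state-stability} directly. No projective replacement, pinching argument or angle control is needed, and $n=3$ is allowed. In the coordinates of Theorem~\ref{thm:qutrit-attainer} every angle is ideal. The first party's qubit sector is ideal and its scalar sector has values $+1,+1$, embedded at $\theta_1=0$.

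\textbf{Spectrum of $\widehat W$.} On the qubit sector $\widehat W$ is the ideal block operator of Lemma~\ref{lem:blocks}, with eigenvalue $1$ on $G$. Every other block has $x^2+y^2=0$, since some ideal party forces a factor $b_j=0$ in each term; so these eigenvalues are $0$. On the scalar sector, \eqref{eq:refined-endpoint} shows that $\widehat W$ has eigenvalue $1/s$ on $B$, eigenvalue $-1/s$ on one vector $B'$ orthogonal to it, and $0$ elsewhere. Write $P$ for the projection onto $\operatorname{span}\{G,B\}$ and put $p_G,p_B,c,\ell_\perp$ as in the appendix. Then
\[
 b=p_G+\frac{p_B}{s}+\xi,\qquad |\xi|\leq\frac{\ell_\perp}{s},\qquad \xi\leq0,
\]
because the only other nonzero eigenvalue is $-1/s$.

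\textbf{Leakage and weights.} Use the extraction $K_0=\proj G+\tfrac12\proj B$ from the appendix. Since $a/s=1/2+1/s$, the score identity gives
\[
 F_0-L(\beta)=p_G+\frac{p_B}{2}-a\Bigl(p_G+\frac{p_B}{s}+\xi\Bigr)+\frac1s .
\]
Substituting $p_G+p_B=1-\ell_\perp$, the $p_G$ terms cancel because $1-a+1/s=0$. The remainder should be $\ell_\perp/s-a\xi\geq\ell_\perp/s$. Together with $F_0\leq\Xi_n(\rho)\leq L+\varepsilon$ this gives $\ell_\perp\leq s\varepsilon$ and $\Xi_n-F_0\leq\varepsilon$.

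The weight estimate comes from $b=1/s+\kappa t$:
\[
 \kappa(p_G-t)=-\frac{\ell_\perp}{s}-\xi\in\Bigl[-\frac{\ell_\perp}{s},0\Bigr],
\]
so $|p_G-t|\leq\ell_\perp/(s\kappa)$. I should double-check that the constants come out to exactly $4+3\sqrt2$. Note $s\kappa=s-1$, so $|p_G-t|\leq\ell_\perp/(s-1)=(s+1)\ell_\perp$.

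\textbf{Coherence and trace norm.} The Kraus family \eqref{eq:coherent-channel} gives $\lambda^2\leq2(\Xi_n-F_0)\leq2\varepsilon$, exactly as in the appendix. Then
\[
 |c|\leq\sqrt2\lambda+\sqrt{\ell_\perp}\leq2\sqrt\varepsilon+\sqrt{s\varepsilon}.
\]
For the total distance, combine $\|\rho-P\rho P\|_1\leq2\sqrt{\ell_\perp}$, the diagonal error $2|p_G-t|+\ell_\perp$, and the off-diagonal error $2|c|$. This bounds $\|\rho-\omega_t\|_1$ by an expression of the form $A\sqrt\varepsilon+B\varepsilon$. No local unitaries are needed, because the coordinates are already those of $\omega_t$.

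\textbf{Main obstacle.} The hardest step is matching the stated constants $4\sqrt{1+\sqrt2}$ and $4+3\sqrt2$. The crude sum above gives roughly $(2\sqrt s+4+2\sqrt s)\sqrt\varepsilon$, not $4\sqrt{k}\sqrt\varepsilon$ with $k=1+s$. So I would refine two things. First, bound $c$ and the leakage jointly: use a single purification and Cauchy--Schwarz, as in the proof that $\|\rho-P\rho P\|_1\le 2\sqrt{\ell_\perp}$, and optimize $\zeta$ together with the $G_-$ part. This should produce $4\sqrt{(1+s)\varepsilon}$. Second, use the sharper relation between $\ell_\perp$ and $\Xi_n-F_0$, whose sum is at most $\varepsilon$ up to coefficient. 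For the linear term, $2|p_G-t|+\ell_\perp\leq(2(1+s)+s)\varepsilon=(2+3s)\varepsilon$, and adding the quadratic correction should account for $4+3s$. Finally, the cap $2$ is trivial, and the threshold $\varepsilon_*$ is the root of the resulting quadratic in $\sqrt\varepsilon$.
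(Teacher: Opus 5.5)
You have the paper's skeleton: the extraction $K_0=\proj G+\frac12\proj B$, the projection $P$, the leakage $\ell_\perp$, the Kraus family bounding $|c|$, and $\|\rho-P\rho P\|_1\le2\sqrt{\ell_\perp}$. But the proposal does not prove the stated inequality, and two intermediate steps are wrong.

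\textbf{The spectrum is incomplete.} The ideal target block of Lemma~\ref{lem:blocks} has eigenvalues $\pm1$. So on the qubit sector $\widehat W$ also has eigenvalue $-1$, on a vector orthogonal to $P$. Hence $\xi\in[-\ell_\perp,0]$, not $|\xi|\le\ell_\perp/s$. The sign $\xi\le0$ is still correct, so $F_0-L(\beta)=\ell_\perp/s-a\xi\ge\ell_\perp/s$ still holds.

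\textbf{The weight relation has a sign error.} From $b=\kappa p_G+(1-\ell_\perp)/s+\xi$ and $b=1/s+\kappa t$ you get $\kappa(p_G-t)=\ell_\perp/s-\xi\ge0$. Your bound $|p_G-t|\le(1+s)\ell_\perp$ is therefore false. Putting the leaked weight on that $-1$ eigenvector gives $p_G-t=(1+s)^2\ell_\perp$.

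\textbf{The stated constants are never derived.} The constants $4\sqrt{1+s}$ and $4+3s$ are the content of the statement, and you leave them to unspecified refinements. With the decoupled bounds $\ell_\perp\le s\varepsilon$ and $\Xi_n(\rho)-F_0\le\varepsilon$, the square-root coefficient comes out as $4(1+2^{1/4})>4\sqrt{1+s}$. Your linear-term estimate $(2+3s)\varepsilon$ would need $|p_G-t|\le(1+s)\varepsilon$, which does not follow even from your own (incorrect) bound $|p_G-t|\le(1+s)\ell_\perp\le(1+s)s\varepsilon$. There is also no ``quadratic correction'' anywhere to make up the difference.

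\textbf{The missing idea.} Keep $\gamma:=F_0-L(\beta)=\tr(\rho T_0)$, with $T_0:=K_0-a\widehat W+s^{-1}I$, as a splitting parameter instead of bounding it by $\varepsilon$ at once.
\begin{itemize}
\item The full spectrum, including the $-1$, gives $T_0\ge s^{-1}(I-P)$, hence $\ell_\perp\le s\gamma$. Meanwhile $\Xi_n(\rho)-F_0\le\varepsilon-\gamma$, with $0\le\gamma\le\varepsilon$.
\item Bypass $\xi$ entirely. From $F_0=p_G+p_B/2$, $L(\beta)=(1+t)/2$ and $p_G+p_B=1-\ell_\perp$ you get the exact identities $p_G-t=\ell_\perp+2\gamma$ and $p_B-(1-t)=-2(\ell_\perp+\gamma)$.
\end{itemize}
Therefore
\[
 \|\rho-\omega_t\|_1\le2\sqrt{\ell_\perp}+2|c|+3\ell_\perp+4\gamma
 \le4\sqrt{\ell_\perp}+4\sqrt{\varepsilon-\gamma}+3\ell_\perp+4\gamma .
\]
Cauchy--Schwarz gives $\sqrt s\,\sqrt\gamma+\sqrt{\varepsilon-\gamma}\le\sqrt{1+s}\,\sqrt\varepsilon$, which handles the square-root terms. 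Together with $3\ell_\perp+4\gamma\le(4+3s)\gamma\le(4+3s)\varepsilon$, this yields exactly the stated bound.

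No joint purification or further optimization over $\zeta$ is needed. Your second suggested refinement was the right one, but the coupling between $\ell_\perp$ and $\Xi_n(\rho)-F_0$ has to go through $\gamma$.
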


\begin{proof}
Write \(\widehat W\coloneqq W_n/\beta_Q\), so that
\(L(\beta)=a\tr(\rho\widehat W)-1/s\). Let \(G\) and \(B\) be the
vectors of Proposition~\ref{prop:coherent-qutrit}. The first-party
channel of Theorem~\ref{thm:qutrit-attainer}, with the identity
elsewhere and the fixed output unitaries, has pullback
\(K_0=\proj G+\frac12\proj B\), so
\(F_0\coloneqq\tr(\rho K_0)\leq\Xi_n(\rho)\).

Both first-party observables are block diagonal in the two sectors.
With all angles ideal, Lemma~\ref{lem:blocks} shows that on the qubit
sector \(\widehat W\) has eigenvalue \(1\) (eigenvector \(G\)) and
\(-1\) (once each), and \(0\) otherwise. On the scalar sector,
\(\widehat W\) is the compression of the Bell operator at the
canonical endpoint \(\theta_1=0\) to the eigenspace \(X_1=+1\). By
Lemma~\ref{lem:blocks} and the proof of
Theorem~\ref{thm:qutrit-attainer}, it has eigenvalue \(1/s\)
(eigenvector \(B\)) and \(-1/s\) (once each), and \(0\) otherwise.
Hence \(T_0\coloneqq K_0-a\widehat W+s^{-1}I\) vanishes exactly on
\(\operatorname{span}\{G,B\}\) and satisfies \(T_0\geq s^{-1}(I-P)\),
where \(P\coloneqq\proj G+\proj B\).

Let \(\ell_\perp\), \(p_G\), \(p_B\) and \(c\) be as in the proof of
Theorem~\ref{thm:original-state-stability}, with the present \(G\),
\(B\) and \(P\), and put \(\gamma\coloneqq F_0-L(\beta)\). Then
\(\gamma=\tr(\rho T_0)\geq\ell_\perp/s\geq0\),
\(\gamma\leq\varepsilon\) and
\(\Xi_n(\rho)-F_0\leq\varepsilon-\gamma\). The relations
\(F_0=p_G+p_B/2\) and \(p_G+p_B=1-\ell_\perp\) give
\(p_G-t=\ell_\perp+2\gamma\) and
\(p_B-(1-t)=-2(\ell_\perp+\gamma)\).

The first two inequalities in \eqref{eq:state-coherence} use only
positivity of \(\rho\) and the Kraus family there, here with
\(\ket{e_0}=\ket{x+}\), \(\ket{e_1}=\ket{x-}\) and
\(\ket{e_2}=\ket2\). They give
\[
 |c|\leq2\sqrt{\Xi_n(\rho)-F_0}+\sqrt{\ell_\perp}
 \leq2\sqrt{\varepsilon-\gamma}+\sqrt{\ell_\perp}.
\]
Finally, combine \(\|\rho-P\rho P\|_1\leq2\sqrt{\ell_\perp}\), from
the proof of Theorem~\ref{thm:original-state-stability}, the triangle
inequality inside \(P\) and Cauchy--Schwarz with
\(\ell_\perp\leq s\gamma\):
\[
 \|\rho-\omega_t\|_1
 \leq4\sqrt{\ell_\perp}+4\sqrt{\varepsilon-\gamma}
   +3\ell_\perp+4\gamma
 \leq4\sqrt{1+s}\,\sqrt{\varepsilon}+(4+3s)\,\varepsilon.
\]
The bound \(2\) is the trace-norm diameter of the state space.
\end{proof}

\begin{remark}
\label{rem:negativity-certificate}
For a state \(\rho\) on finite-dimensional local spaces, let
\(\mathcal N_1(\rho)\coloneqq(\|\rho^{T_1}\|_1-1)/2\) be its negativity
\cite{VidalWerner2002} across the cut separating party~\(1\) from the
others. Up to its last step, the argument of
\eqref{eq:partial-transpose-bound} applies to any such \(\rho\) and any
product extraction; with the supremum it gives
\(\Xi_n(\rho)\leq\frac12\|\rho^{T_1}\|_1=\frac12+\mathcal N_1(\rho)\).
With Theorem~\ref{thm:main}, every finite-dimensional strategy of score
\(\beta\in[\beta_B,\beta_Q]\) therefore satisfies
\[
 0\leq\Xi_n(\rho)-L(\beta)\leq\mathcal N_1(\rho)-\frac t2,
 \qquad\text{in particular}\qquad
 \mathcal N_1(\rho)\geq\frac t2 .
\]
The attaining mixture \(\omega_t\), whose qutrit is at party~\(1\), has
\(\|\omega_t^{T_1}\|_1=1+t\) (the case \(c=0\) in the proof of
Proposition~\ref{prop:coherent-qutrit}), so \(t/2\) is the least such
negativity at every score; across any other single party the
negativity of \(\omega_t\) equals \(1/2\).
Consequently, with the same other hypotheses and constants, the
hypothesis \(\Xi_n(\rho)-L(\beta)\leq\varepsilon\) of
Theorems~\ref{thm:actual-stability} and~\ref{thm:original-state-stability}
may be replaced by \(\mathcal N_1(\rho)-t/2\leq\varepsilon\), and the
hypothesis \(\Xi_n(\rho)=L(\beta)\) of
Theorems~\ref{thm:necessary-equality}, \ref{thm:refined-equality}
(its necessary conditions) and~\ref{thm:minimal-equality-state} by
\(\mathcal N_1(\rho)=t/2\). The same replacement applies to Proposition~\ref{prop:fixed-observable-stability}. If the qutrit is placed at party~\(i\), as in
the remark following Theorem~\ref{thm:original-state-stability}, the
negativity across party~\(i\) is used instead.
These hypotheses are stronger: \(\omega_t\) with its qutrit
moved to party~\(2\) satisfies \(\Xi_n=L(\beta)\) but has
\(\mathcal N_1=1/2\), and the converse statements do not transfer.
On the coherent family,
\(\mathcal N_1(\omega_{t,c})-t/2=\Xi_n(\omega_{t,c})-(1+t)/2\), so
\eqref{eq:coherent-distance} and Proposition~\ref{prop:stability-exponent}
hold with this negativity excess, and the exponent \(1/2\) remains
optimal.
The negativity excess can be computed from a description of \(\rho\),
or bounded through any upper bound on \(\|\rho^{T_1}\|_1\). Like the extractability excess, it is
not determined by the Bell statistics: those of \(\omega_{t,c}\) with the
fixed measurements do not depend on \(c\).
\end{remark}

\paragraph{Measurement angles.}
The forced weight of the scalar sector, in which \(x_1=1\), makes
\(\mathsf P_2\) of first order in the remaining \(x_j\), and the gain
estimate controls \(v_1\) directly; this gives order \(\sqrt\varepsilon\)
for the angles here; the single-sector order after
\eqref{eq:single-block-stability} is \(\varepsilon^{1/4}\).
For the projective strategies in the proof of Theorem~\ref{thm:original-state-stability}, let \(\theta_1\)
be the canonical angle of the qutrit's two-dimensional
sector, and let \(\theta_j\), \(j\geq2\), be the angles of the
other parties. The constants below are evaluated at \(\eta\). If
\[
 \varepsilon\leq\min\left\{
 \varepsilon_\eta^{\mathrm{proj}},\frac{\eta}{2C_{\mathrm{wt}}}
 \right\},
\]
then \eqref{eq:state-branch-weight} and \(t\geq\eta\) give
\(p_G\geq\eta/2\). Together with
\(av_1p_G\leq C_{\mathrm{gain}}\varepsilon\) and
\eqref{eq:state-angle-sum}, this yields
\[
 \sum_{j=1}^n v_j
 \leq\left(C_{\mathrm{ang}}+
       \frac{2C_{\mathrm{gain}}}{a\eta}\right)\varepsilon.
\]
Since \(v_j=1-\cos|\theta_j-\pi/4|\) and
\(v_j\geq(16\kappa/\pi^2)|\theta_j-\pi/4|^2\),
\[
 \left(\sum_{j=1}^n|\theta_j-\pi/4|^2\right)^{1/2}
 \leq\frac{\pi}{4\sqrt\kappa}
 \sqrt{\left(C_{\mathrm{ang}}+
       \frac{2C_{\mathrm{gain}}}{a\eta}\right)\varepsilon}.
\]
The constants are independent of \(n\). For binary POVMs,
the same conclusion holds for every projective replacement as in
Lemma~\ref{lem:projective-replacement}, which maximizes the score on
the same state and local spaces, with
\(\eta\) replaced throughout by \(\eta/2\) and with the
additional requirement \(\varepsilon\leq\eta/4\), by
\eqref{eq:state-povm-reduction}. The controlled angles belong to
this projective replacement.

\end{document}